\def\appendixname{Appendix}%  
\appto\appendix{%
  \addtocontents{toc}{\patch@l@section}%  
  \appto\appendixname{ }%  
}
\protected\def\patch@l@section{%
  \patchcmd{\l@section}{1.5em}{\widthof{\appendixname\space}+2.5em}{}{}%
}
\colorlet{texto}{black!50!gray} 
\newtheorem{thm}{Theorem}
\newtheorem{lem}[thm]{Lemma}
\newtheorem{prop}[thm]{Proposition}
\theoremstyle{definition}
\newtheorem{defn}{Definition}
\newtheorem{rem}{Remark}
 \newtheorem{example}{Example}
\renewcommand{\and}{\quad\mbox{and}\quad}
\newcommand{\dpa}{\partial}
\newcommand{\T}{\mathbb{T}}
\newcommand{\Tr}{\mathrm{Tr}}
\newcommand{\al}{\alpha}
\newcommand{\mtr}[1]{\mathrm{#1}}
\newcommand{\mtf}[1]{\mathfrak{#1}}
\newcommand{\A}{\mathcal{A}}
\newcommand{\dif}[1]{\mathrm{d}#1} 
\newcommand{\re}{\mathbb{R}}
\newcommand{\pr}[1]{\mathrm{pr}_{#1}}
\newcommand{\si}{\sigma}
\newcommand{\G}{\mathcal{G}}
\renewcommand{\H}{\mathcal{H}}
\newcommand{\B}{\mathcal{B}}
\newcommand{\C}{\mathbb{C}}
\newcommand{\ii}{\mathrm{i}}
\newcommand{\ee}{\mathrm{e}}
\newcommand{\inv}{^{-1}} 
\newcommand{\mtc}[1]{\mathcal{#1}} 
\newcommand{\Z}{\mathbb{Z}}
\newcommand{\U}{\mtc{U}}
\newcommand{\V}{\mtc{V}}
\newcommand{\im}{\mtr{im}}    
\newcommand{\with}{\,\,\mtr{with} \,\,}
\newcommand{\where}{\qquad\mbox{where}\,\,}
\newcommand{\mtb}[1]{\mathbb{#1}}
\newcommand{\F}{\mathcal{F}}
\newcommand{\R}{\mathcal R}
\newcommand{\Df}{\mathbb{\mathcal{D}}}
\newcommand{\Sym}{\mathfrak{S}}
\newcommand{\J}{\mathcal{J}}
\newcommand{\hp}[1]{^{(#1)}}
\newcommand{\Tg}[1]{\mathcal{T}_{#1}}
\renewcommand{\phi}{\varphi}
\newcommand{\fey}{\mathfrak{Feyn}}
\newcommand{\feyn}{\fey_3(\phi^4)}
\newcommand{\Riem}{\mathsf{Riem}_{\mathrm{c,cl,o}}}
\newcommand{\Riemd}{\mathsf{Riem}_{\mathrm{cl,o}}}
\newcommand{\Cob}{\mathsf{Cob}}
\newcommand{\Grph}[1]{\mathsf{Grph}_{\mathrm{col},#1}}
\newcommand{\tcol}{\Grph{3}}
\newcommand{\Rb}{\mathcal{R}} 
\newcommand{\tg}{\hphantom{!} \! {\#\hphantom{!}}\!}
\newcommand{\MyTo}[1][]{\mathrel{\tikz \draw [-stealth, #1] (0,0) (0,0.5ex) -- (1.0em,0.5ex);}}
\newcommand{\ToLong}[1][]{\mathrel{\tikz \draw [-stealth, #1] (0,0) (0,0.5ex) -- (1.5em,0.5ex);}}  
\newcommand{\MyMapsto}[1][]{\mathrel{\tikz \draw [-stealth, cap=round, #1] (0,0) (0,0.75ex) -- (0,0.220ex) (0,0.50ex) -- (1.0em,0.50ex);}}
\renewcommand{\to}{\MyTo} 
\renewcommand{\mapsto}{\MyMapsto}
\def\moverlay{\mathpalette\mov@rlay}
\def\mov@rlay#1#2{\leavevmode\vtop{%
   \baselineskip\z@skip \lineskiplimit-\maxdimen
   \ialign{\hfil$\m@th#1##$\hfil\cr#2\crcr}}}
\newcommand{\charfusion}[3][\mathord]{
    #1{\ifx#1\mathop\vphantom{#2}\fi
        \mathpalette\mov@rlay{#2\cr#3}
      }
    \ifx#1\mathop\expandafter\displaylimits\fi}
\newcommand{\cupdot}{\charfusion[\mathbin]{\cup}{\cdot}}
\def\leqno{\tagsleft@false}
\def\reqno{\tagsleft@false}
\def\fleqn{\@fleqnfalse}
\def\cneqn{\@fleqnfalse}
 \DeclareRobustCommand{\gobblefive}[5]{}
\newcommand*{\SkipTocEntry}{\addtocontents{toc}{\gobblefive}}
\begin{document}

\title{Surgery in colored tensor models} 
\author{Carlos I. P\'erez-S\'anchez} 
\address{Mathematisches Institut der Westf\"alischen Wilhelms-Universit\"at, Einsteinstra\ss e 62,
48149 M\"unster, Germany
}
\email{perezsan@uni-muenster.de}

 \begin{abstract}
Rooted in group field theory and matrix models, random tensor models are a recent background-invariant approach to quantum gravity in arbitrary dimensions.  
Colored tensor models (CTM) generate random triangulated orientable (pseudo)-manifolds.  
We analyze, in low dimensions, which known spaces are triangulated by specific CTM interactions.  
As a tool, we develop the graph-encoded surgery that is compatible with the quantum-field-theory-structure  
and use it to prove that a single model, the complex $\varphi^4$-interaction in rank-$2$, 
generates all orientable $2$-bordisms, thus, in particular, also all orientable, closed surfaces.  
We show that certain quartic rank-$3$ CTM, the $\varphi_3^4$-theory, has as boundary sector all closed, possibly disconnected, 
orientable surfaces. Hence all closed orientable surfaces are cobordant via manifolds generated by the $\varphi_3^4$-theory.\\
\\
\textit{Keywords:} 
Random tensor models;  Feynman diagrams; matrix models; quantum gravity.  \\
\noindent
% \\
% \textit{MSC:} 83C47, 81T40, 81T18. 
\end{abstract}

\maketitle

\tableofcontents
\thispagestyle{empty}
\clearpage
%% 
%%  = = = = = = = = = = = = = = = = = = = = = = = = = = = = = = = = = =  
%% 
%%  = = = = = = = = = = = = = = = = = = = = = = = = = = = = = = = = = =  
%% 
%%  = = = = = = = = = = = = = = = = = = = = = = = = = = = = = = = = = =  

\section{Introduction}

Colored tensor models (CTM) have recently flourished as a
random-geometry framework that has proven the ability to model quantum
gravity in arbitrary dimension  $D$   
\cite{Ambjorn,Gurau:2009tw,RTM_QG,tensor_theory_space}, partially
following the line of thought of the $2$-dimensional quantum gravity
modeled by random matrices \cite{dFGZ}. 
CTMs are quantum field theories
for rank-$D$ tensors whose indices transform independently 
under given $D$ representations of unitary groups 
in a very simple way (see below). 
% % This has been called  \textit{color} and, in particular,
% coloring implies that the indices satisfy no (skew-)symmetry, no 
% cocycle condition, etc. 
The bridge to physics is, essentially,
\begin{align}   
 \label{eqn:correspondencia}
 \mbox{\textit{Quantum} }\to\mbox{\textit{Random }$\quad$ and } \quad
 \mbox{\textit{Gravity} }\to\mbox{\textit{Graph-encoded} $D$-\textit{geometry.} }  
\end{align}
The Euclidean path integral formulation of 
CTM defines a measure that facilitates the first correspondence.
The second `map' is
a simplicial version of the known General Relativity correspondence
for $D\geq 2$, whose discrete analogue is the Regge action in terms
of the deficit angles \cite{critical}, here expressed in a
graph-theoretical context.  The Feynman graphs of colored tensor
models have enough structure to encode a sensible space, thus, both
correspondences in \eqref{eqn:correspondencia} harmonically coexist.
A condensed summary for this framework 
is its ability to generate triangulations of (pseudo)manifolds\footnote{
  Pseudomanifolds are simplicial complexes that are non branching,
  pure and strongly connected. Since we aim at the construction of
  actual PL-manifolds, we do not deepen in that concept.  Moreover we
  work here in dimensions $2$ and $3$ and piecewise linear manifolds
  is all we need, by Moise's theorem \cite{moise}.  Thus, henceforth
  we only write `manifold'.} that can be averaged 
  by using a Boltzmann weight, $\exp(-S)$, that a particular model's 
  classical action functional $S$ determines. 
  \par

  In a historical vein, the term \textit{color}, introduced by di
  Francesco \cite{diFrancesco_rect}, appeared first, as many
  conceptions in tensor models do, in the context in the theory of
  matrix models. In that setting colors stand for the different sizes of
  rectangular matrix fields. The idea that `coloring' 
  prevents certain indices from being summed (contracted) with each other 
  was successfully carried on by
  Gur\u{a}u \cite{Gurau:2009tw}, who introduced several `colored'
  tensor fields, extending di Francesco's idea to the context of Group
  Field Theories \cite{Freidel,OritiGFT} in order to exclude 
  graphs that could not encode reasonable spaces.  The
  additional tensor fields can be integrated out, thus obtaining an
  effective action for a single field (see
  e.g. \cite[Sec. 5]{diFrancesco_rect}), which, however, retains the
  colored structure.  \par 
  Nowadays \textit{(random) tensor models} stands
  for a rather boarder cluster of alike theories \cite{MO,On,su2} with
  physically promising features. In particular, just as matrix models,
  they support a large-$N$ `t Hooft's expansion
   which is controlled by an integer called \textit{Gur\u{a}u's degree}
  \cite{Nexpansion,Nexpansion_coloured} that 
  replaces the genus in matrix models (see Rem. \ref{being_melon}). 
  Parallel to the fairly vivid study of the QFT-techniques of tensor
  models (e.g. renormalization \cite{4renorm,renormTFT,krajewskireiko,dine_beta,3Dbeta,su2,carrozzaPhD}), 
  a topology and geometry `quota' ---in the CTM-setting 
  is encoded in graph theory--- that leads us to a better understanding of the 
  gravitational-modeling, also deserves attention. These topics 
  for low dimensional scenarios is what this paper is all about. \par 
  At the core of the link between graph
  theory and geometry that concerns us lies \textit{Pezzana's
    theorem} \cite{pezzana} on manifold crystallization.
  It allows piecewise linear manifolds to be represented by decorated
  graphs, the so-called \textit{colored graphs}. 
  This family of graphs corresponds to the 
  Feynman graphs of colored tensor models.
  Thus, after Gur\u{a}u's work \cite{Gurau:2009tw,gurauReview},
  Pezzana's theorem yields a surjection
\[ 
 \{\mbox{\textit{All} rank-$D$ tensor model actions }S_{\mtr{int}}\} 
 \ToLong
 \{\mbox{PL-$D$-manifolds}\}.
\]
What Pezzana's theorem does not specify is the tensor-model action
that generates the graphs that represent certain class of manifolds.
In physics one commonly scrutinizes a single model.  Therefore it is 
interesting to pose the following question:
\begin{align}   
 \label{eqn:primed}
\textit{Given a class of manifolds,
which CTM-action generates it?} \tag*{$(\star)$}
\end{align}
The action should be polynomial by physical reasons.  Thus, given a
family $\mathcal C$ of manifolds (up to equivalence $\sim$), one
wishes to find a tensor-model action $S_{\mathrm{int}}$ and to prove
the surjectivity of the composition
\[
\raisebox{-.45\height}{
\includegraphics[height=.55cm]{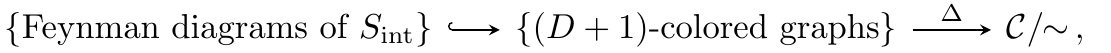}
}
\]
where $\Delta$ is a `manifold reconstruction'-scheme 
(see Sec. \ref{reconstruction_from_graphs} and  \cite{survey_cryst}).
Techniques like the bubble-homology of graphs 
\cite{Gurau:2009tw} assist in 
distinguishing spaces (see Sec. \ref{bubblehomology})
and shall be used here.  \par
We fix now the setting to answer \ref{eqn:primed} in low dimensions,
that is, we choose the right family member of tensor models\footnote{
  The generation of graphs and their characteristics are dependent on
  the type, structure and field chosen of tensor model chosen,
  i.e. whether it is colored, or hybrid, as multi-orientable tensor
  models \cite{reviewTanasa}; and the vector spaces where the tensors
  are defined can also over $\re$ or $\C$.}.  Since we want to prove 
the surjectivity of certain maps, the result is stronger if we keep the classes of graphs
emerging in that framework at its minimum, which means a
large-symmetry in the action.  The right choice is the complex 
CTM, as exposed in Section \ref{preliminaries}.  \par

Having chosen the setting, we choose now the potential
$S_{\mathrm{int}}$. We work with rank-$2$ and rank-$3$ tensor models
and in both instances we take a quartic potentials (which due to their
distinct underlying structures look somehow different). \par Our strategy is
mainly surgery: In certain categories of manifolds, by using surgery
 one is able to generate new spaces and readily compute  some 
topological invariants of them from the properties of their parts.  It
is therefore desirable to have this in the context of graphs.  The
existent concept in the context of the graph theoretical
representation of piecewise linear manifolds by Pezzana, Gagliardi,
Ferri \textit{et al.} \cite{survey_cryst}---to our knowledge the only
available concept--- unfortunately does not respect the QFT-structure
of tensor models, as we show here (Sec. \ref{gsurgery},
Rem. \ref{thm:crys}). We develop a QFT-compatible and CTM-compatible
surgery aiming at answering \ref{eqn:primed} for dimension $2$, 
going further also to dimension $3$. We stress that the methods provided
by the theory of crystallization 
do not care about the graphs being Feynman diagrams of certain
model.
\par 
Concretely,
we obtain the following: A well-defined $3$-colored graph surgery 
 (Definition \ref{thm:surg_ribb} for the connected sum and in Theorem \ref{thm:gen_bordisms}, creation 
 of boundary components), which restricts to the set of Feynman graphs
 of a given model. Remark \ref{thm:crys} explains the need of this operation. 
 These concepts lead to the parametrization of all orientable $2$-bordisms 
 by the diagrammatics of certain quartic potential 
 (Thm. \ref{thm:gen_bordisms}) and, in particular, 
  the generation of all closed, connected
   orientable surfaces from (vacuum graphs of) the rank-$2$
   \textit{quartic} potential (Lemma \ref{thm:surj_2}). This stronger
   than the $\re$-matrix model case, since, say, the following graph
 \begin{equation}\label{eqn:ribonfisico}
 \raisebox{-1.2cm}{
 \includegraphics[height=2.2cm]{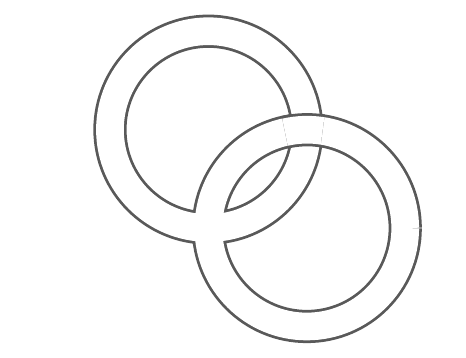}}
 \end{equation}
 of the real quartic matrix model is forbidden in any complex rank-$2$ theory.\par 
 
 Working in one dimension higher, 
 we lift the surgery of $3$-colored
 graphs to an operation on open $4$-colored graphs and use  
 this operation to prove that:
\begin{itemize}
\item[\scriptsize$\bullet$] \normalsize Boundary graphs of 
 a certain  quartic rank-$3$ model, the so-called $\phi_3^4$-theory,
   generate all closed, connected orientable surfaces.
   That is, those surfaces are null-bordant in the 
   sense of the $\phi_3^4$-theory.  
 \item[\scriptsize$\bullet$] More generally, any two compact, orientable 
  closed, \textit{possible disconnected} surfaces 
  are $3$-cobordant by (a space reconstructed from)
  certain \textit{connected} $\phi_3^4$-Feynman graph (Thm. \ref{thm:3cobord}).
\end{itemize}

This article has the following structure.  We motivate first, in
Section \ref{preliminaries}, the study of colored graphs by
introducing from scratch, albeit quite straightforwardly, colored
tensor models. A rather lengthy introduction on the graph theoretical
machinery shall be provided there. The reader which is familiar 
with CTM can skip that section. Examples there (which we
do not skimp on) shall become useful later on, though.  Sections
\ref{gsurgery} and \ref{nontriviality} are the core, where we prove our
claims above.  The reader that does not feel familiar with
graph-homology and/or ribbon graphs might find useful 
Appendices \ref{appA} and/or \ref{app_ribbons}, respectively.
 
%% 
%%  = = = = = = = = = = = = = = = = = = = = = = = = = = = = = = = = = =  
%% 
%%  = = = = = = = = = = = = = = = = = = = = = = = = = = = = = = = = = =  
%% 
%%  = = = = = = = = = = = = = = = = = = = = = = = = = = = = = = = = = =  
 
\section{Tensors models and their graph theory}
 \label{preliminaries}

 Colored tensor models are quantum field theories for tensorial
 objects specified by an integer $D$ and by so-called
 \textit{interaction vertices}. The integer $D\geq 2$ is the rank of
 tensors fields
 $\bar\phi,\phi:\H_1\otimes \H_2\otimes \cdots \otimes \H_D\to \C$ on products
 of Hilbert spaces $\H_1,\ldots,\H_D$ and the interaction vertices
 $\{\Tr_{\mathcal{ B}_\alpha}(\phi,\bar\phi)\}_\al$ are determined
 by invariance under products of unitary groups
 $\U(\H_1)\times \U(\H_2)\times \cdots \times \U(\H_D)$, as we explain next. 
 Renormalization should care for a second selection-process
 of interaction vertices consisting in suppressing those traces $\Tr_\B$ which render 
 the theory non-renormalizable (see e.g. \cite[Sec. 3.7]{krajewskireiko} 
 for a list of the vertices $\B$; those we shall deal with here are renormalizable).  
 Since the following discussion and definitions can be
 carried out without any effort to higher rank, for sake of
 concreteness we restrict ourselves to rank-$3$, thus considering
 tensors $\phi,\bar\phi:\H_1\otimes\H_2 \otimes \H_3\to \C$. We also assume for
 simplicity, that $\H_c$ are large but finite dimensional.\par

 For any integer $c=1,2,3,$ which one calls \textit{color}, take a
 basis $\{\vartheta_c^{a} \, : \,a\in I_c\}$ of the dual of
 $\H_c$. Here each $I_c \subset \Z$ serves as an index set, which
  will be often left implicit. We let
 $\phi=:\sum_{a_1,a_2,a_3} \phi_{a_1a_2a_3}\vartheta^{a_1}_1\otimes
 \vartheta^{a_2}_2 \otimes \vartheta^{a_3}_3$.
 Each color-$c$ index $a_c$ transforms independently under a change of
 basis of $\H_c$. The coordinates transform therefore under unitary
 elements $W^{(c)}\in \U(\H_c)$ like
\begin{align*}
\phi'_{a_1a_2a_3}&=\sum_{b_1,b_2,b_3}W^{(1)}_{a_1b_1}W^{(2)}_{a_2b_2}W^{(3)}_{a_3b_3}
\phi_{b_1b_2b_3}\,,  \\
\bar \phi'_{a_1a_2a_3}&= \sum_{b_1,b_2,b_3}\overline W^{(1)}_{a_1b_1}
\overline W^{(2)}_{a_2b_2}\overline W^{(3)}_{a_3b_3}
\bar \phi_{b_1b_2b_3}\,.
\end{align*}
We take as classical action only invariants under the  group 
$\U(\H_1)\times \U(\H_2) \times \U(\H_3) $. 
The only quadratic invariant, $
S_0[\phi,\bar\phi]= \Tr_{2}(\phi,\bar\phi):=\sum_{a_1,a_2,a_3} \bar\phi_{a_1a_2a_3} \phi_{a_1a_2a_3}
$ is understood as the kinetic term\footnote{
In future work it will also consider a 
slightly modified trace with a symmetry-breaking term  $E$, though:  
\begin{equation} \nonumber S[\phi,\bar\phi]=\Tr_2{(\bar \phi,E\phi)}+
\sum_\alpha\Tr_{\mathcal{
    B}_\alpha}(\phi,\bar\phi)\,, \label{generalthree} \end{equation}
with $E:\H_1\otimes \H_2 \otimes \H_3\to\H_1\otimes \H_2 \otimes \H_3$
`self-adjoint', $\Tr_2(\bar\phi,E\phi)=\Tr_2(E\bar\phi,\phi)$.  The
first term is distinguished, and being quadratic in $\phi$, it
represents the kinetic part of the action, where $E$ could be
interpreted as the Laplacian. This allows to state Group Field
Theories, via Fourier-transform, as colored tensor models.}.  
Higher order terms like
\begin{equation} \label{v2explicito}
\Tr_{\V_2}(\phi,\bar\phi)=\lambda \sum_{\mathbf{a,b,p,q}} \bar\phi_{q_1q_2q_3} \bar\phi_{p_1p_2p_3}
(\delta_{a_1p_1}\delta_{b_1q_1}\delta_{a_2q_2}\delta_{b_2p_2}\delta_{a_3p_3}\delta_{b_3q_3})
\phi_{a_1a_2a_3} \phi_{b_1b_2b_3}\, 
\end{equation}
are the interaction vertices appearing in 
$S_{\mtr{int}}=\sum_\alpha\Tr_{\mathcal{ B}_\alpha}(\phi,\bar\phi)$. 
By Schur's Lemma the tensors in the trace can be
contracted only with $\delta$'s (or multiples thereof, which can be absorbed
in the \textit{coupling constant} $\lambda$).  The explicit expression
of each one of these $\Tr_{\mathcal{ B}_\alpha}$ has certain number
(say $k$) of fields $\bar \phi_{\mathbf p^{1}},\ldots, \bar
\phi_{\mathbf p^{k}}$, which are fully contracted with same number of fields 
$  \phi_{\mathbf a^{1}},\ldots, \phi_{\mathbf a^{k}}$, where
$\mathbf{p}^{i}=(p^{i}_1,p^{i}_2,p^{i}_3),\mathbf{a}^{i}
=(a^{i}_1,a^{i}_2,a^{i}_3)\in I_1\times I_2\times I_3$ for each
$i=1,\ldots,k$.  \par 
Here it is handy, in order to avoid writing
these long expressions, to represent these vertices using either
\textit{stranded} graphs or their \textit{colored, bipartite}
version. The former is obtained as follows: each invariant trace must
contain $\phi_{a_1a_2a_3}$ and the complex conjugate field
$\bar\phi_{p_1p_2p_3}$.  We represent these graphically by associating to
them a bunch of $D=3$ white nodes with outgoing strands and a bunch of
$D=3$ dark nodes with incoming strands respectively:
\[
\vspace{-.2cm}
\phi_{a_1a_2a_3}\mapsto \tikz[scale=.75,  outer ysep=-2 pt,
    baseline=-.5ex,shorten >=.1pt,node distance=18mm, 
    semithick,auto,
    every state/.style={fill=texto ,draw=texto,inner sep=.4mm,text=black,minimum size=0},
    accepting/.style={fill=gray!50!black,text=white},
    initial/.style={white,text=black}
]{ \node[state,minimum size=0,circle, fill=white,  draw=green!58!black] (A11) at 
(0,0) {\tiny 1}; 
 \node[state,minimum size=0,circle, fill=white, draw=red] (A12) at (.5,0) {\tiny 2}; 
 \node[state,minimum size=0,circle, fill=white, draw=blue] (A13) at (1,0) {\tiny 3}; 
 \draw[green!58!black,-stealth] (A11) ..controls (.1,.7) and (.2,.8).. ($(A11)+(.6,1)$) ;
 \draw[red,-stealth] (A12) ..controls ($(.1,.7)+(A12)$) and ($(.2,.8)+(A12)$).. ($(A12)+(.6,1)$) ;
 \draw[blue,-stealth] (A13) ..controls ($(.1,.7)+(A13)$) and ($(.2,.8)+(A13)$).. ($(A13)+(.6,1)$) ;
 }\and
 \qquad 
 \bar\phi_{p_1p_2p_3}\mapsto \tikz[scale=.75,  outer ysep=-2 pt,
    baseline=-.5ex,shorten >=.1pt,node distance=18mm, 
    semithick,auto,
    every state/.style={fill=texto ,draw=texto,inner sep=.4mm,text=black,minimum size=0},
    accepting/.style={fill=gray!50!black,text=white},
    initial/.style={white,text=black}
]{ \node[state,minimum size=0,circle , draw,fill=blue!20] (A11) at
(0,0) {\tiny 3}; 
 \node[state,minimum size=0,circle, draw,fill=red!30] (A12) at (.5,0) {\tiny 2}; 
 \node[state,minimum size=0,circle,draw,fill=green!20!lightgray] (A13) at (1,0) {\tiny 1}; 
 \draw[blue,-stealth reversed] (A11) ..controls (.1,.7) and (.2,.8).. ($(A11)+(.6,1)$) ;
 \draw[red,-stealth reversed] (A12) ..controls ($(.1,.7)+(A12)$) and ($(.2,.8)+(A12)$).. ($(A12)+(.6,1)$) ;
 \draw[green!58!black,-stealth reversed] (A13) ..controls ($(.1,.7)+(A13)$) and ($(.2,.8)+(A13)$).. ($(A13)+(.6,1)$) ;
 } \mbox{(reversed order).}
 \]
We associate to each $\delta$ single lines, in such a
way that $\delta_{a_1p_1}=\tikz
[scale=.65,  outer ysep=-2 pt,
    baseline=-.5ex,shorten >=.1pt,node distance=18mm, 
    semithick,auto,
    every state/.style={fill=texto ,draw=texto,inner sep=.4mm,text=black,minimum size=0},
    accepting/.style={fill=gray!50!black,text=white},
    initial/.style={white,text=black}
]{
\node[ green!58!black] at (0.5,.25) {\tiny 1};
\draw[stealth reversed-stealth, green!58!black] (0,0) -- (1,0);
}$ connects the color-$1$
(green) strands of $\bar\phi_{p_1p_2p_3}$ and
$\phi_{a_1a_2a_3}$;
 $\delta_{a_2p_2}=\tikz
[scale=.65,  outer ysep=-2 pt,
    baseline=-.5ex,shorten >=.1pt,node distance=18mm,
    semithick,auto,
    every state/.style={fill=texto ,draw=texto,inner sep=.4mm,text=black,minimum size=0},
    accepting/.style={fill=gray!50!black,text=white},
    initial/.style={white,text=black}
]{
\draw[stealth reversed-stealth, red] (0,0) -- (1,0);
\node[ red] at (0.5,.25) {\tiny 2};
}$ connects the color-$2$
(red) strands and  $\delta_{a_3p_3}=\tikz
[scale=.65,  outer ysep=-2 pt,
    baseline=-.5ex,shorten >=.1pt,node distance=18mm,
    semithick,auto,
    every state/.style={fill=texto ,draw=texto,inner sep=.4mm,text=black,minimum size=0},
    accepting/.style={fill=gray!50!black,text=white},
    initial/.style={white,text=black}
]{
\node[blue] at (0.5,.25) {\tiny 3};
\draw[stealth reversed-stealth, blue] (0,0) -- (1,0);
}$
those of color $3$ (blue). Therefore
the trace in eq. \eqref{v2explicito} is 
\[
 \centering
  \raisebox{-0.5\height}{\includegraphics[height=3.0cm]{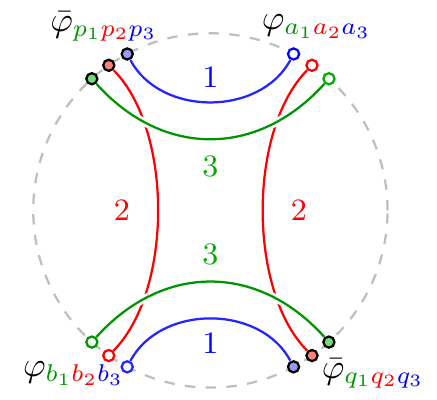}}
  \hspace*{.2in} 
\]

It turns out that these graphs are still somehow quite elaborate and
we will opt for even more simplified graphs that contain the same
information. To the interactions one associates finite regularly edge-$3$-colored
vertex-bipartite graphs.  This picture is obtained from the stranded
representation of graphs by collapsing the nodes $\tikz[scale=.65,
  outer ysep=-2 pt, baseline=-.5ex,shorten >=.1pt,node distance=18mm,
  semithick,auto, every state/.style={fill=texto ,draw=texto,inner
    sep=.4mm,text=black,minimum size=0},
  accepting/.style={fill=gray!50!black,text=white},
  initial/.style={white,text=black} ]{ \node[state,minimum
    size=0,circle, fill=white, draw=green!58!black] (A11) at (0,0) {};
  \node[state,minimum size=0,circle, fill=white, draw=red] (A12) at
  (.4,0) {}; \node[state,minimum size=0,circle, fill=white, draw=blue]
  (A13) at (.8,0) {};}$ of $\phi_{a_1a_2a_3}$ to a single white vertex
$\tikz[scale=.65, outer ysep=-2 pt, baseline=-.5ex,shorten >=.1pt,node
  distance=18mm, semithick,auto, every state/.style={fill=texto
    ,draw=texto,inner sep=.4mm,text=black,minimum size=0},
  accepting/.style={fill=gray!50!black,text=white},
  initial/.style={white,text=black} ]{ \node[state,minimum
    size=0,circle, fill=white, draw=black] (A11) at (0,0) {}; }$, and
those of $\bar\phi_{p_1p_2p_3}$, i.e.  $\tikz[scale=.65, outer ysep=-2
  pt, baseline=-.5ex,shorten >=.1pt,node distance=18mm,
  semithick,auto, every state/.style={fill=texto ,draw=texto,inner
    sep=.4mm,text=black,minimum size=0},
  accepting/.style={fill=gray!50!black,text=white},
  initial/.style={white,text=black} ]{ \node[state,minimum
    size=0,circle , draw,fill=blue!40] (A11) at (0,0) {};
  \node[state,minimum size=0,circle, draw,fill=red!50] (A12) at (.4,0)
       {}; \node[state,minimum
         size=0,circle,draw,fill=green!40!lightgray] (A13) at (.8,0)
       {}; }$, to a black vertex $\tikz[scale=.65, outer ysep=-2 pt,
  baseline=-.5ex,shorten >=.1pt,node distance=18mm, semithick,auto,
  every state/.style={fill=texto ,draw=texto,inner
    sep=.4mm,text=black,minimum size=0},
  accepting/.style={fill=gray!50!black,text=white},
  initial/.style={white,text=black} ]{ \node[state,minimum
    size=0,circle , draw,fill=black] (A11) at (0,0) {}; }$.
Accordingly, the three strands of $\phi_{a_1a_2a_3}$ join at
$\tikz[scale=.65, outer ysep=-2 pt, baseline=-.5ex,shorten >=.1pt,node
  distance=18mm, semithick,auto, every state/.style={fill=texto
    ,draw=texto,inner sep=.4mm,text=black,minimum size=0},
  accepting/.style={fill=gray!50!black,text=white},
  initial/.style={white,text=black} ]{ \node[state,minimum
    size=0,circle, fill=white, draw=black] (A11) at (0,0) {}; }$, and
those of $\bar\phi_{p_1p_2p_3}$ at $\tikz[scale=.65, outer ysep=-2 pt,
  baseline=-.5ex,shorten >=.1pt,node distance=18mm, semithick,auto,
  every state/.style={fill=texto ,draw=texto,inner
    sep=.4mm,text=black,minimum size=0},
  accepting/.style={fill=gray!50!black,text=white},
  initial/.style={white,text=black} ]{ \node[state,minimum
    size=0,circle , draw,fill=black] (A11) at (0,0) {}; }$, like
  \begin{equation}
 \phi_{a_1a_2a_3} \mapsto 
\tikz[scale=.65,  outer ysep=-2 pt,
    baseline=-.5ex,shorten >=.1pt,node distance=18mm, 
    semithick,auto,
    every state/.style={fill=texto ,draw=texto,inner sep=.4mm,text=black,minimum size=0},
    accepting/.style={fill=gray!50!black,text=white},
    initial/.style={white,text=black}
]{ 
\node[state,minimum size=0,circle, fill=white,  draw=black] (A11) at (0,0) {}; 
\path [-stealth]
(A11)  edge[bend right=50] node[above] {\tiny $a_1$} (0:1) 
(A11)  edge[bend right=50] node  {\tiny $a_3$} (120:1)  
(A11)  edge[bend right=50] node  {\tiny $a_2$} (240:1) ;
}\, ,
\qquad \qquad\bar\phi_{p_1p_2p_3} \mapsto
\tikz[scale=.65,  outer ysep=-2 pt,
    baseline=-.5ex,shorten >=.1pt,node distance=18mm, 
    semithick,auto,
    every state/.style={fill=texto ,draw=texto,inner sep=.4mm,text=black,minimum size=0},
    accepting/.style={fill=gray!50!black,text=white},
    initial/.style={white,text=black}
]{ \node[state,minimum size=0,circle , draw,fill=black] (A11) at 
(0,0) {};
\path [-stealth reversed]
(A11)  edge[bend right=50] node[above] {\tiny $p_1$} (0:1) 
(A11)  edge[bend right=50] node  {\tiny $p_2$} (120:1)  
(A11)  edge[bend right=50] node  {\tiny $p_3$} (240:1) ;
}\, 
\label{vts}
 \end{equation}
 To the $\delta$'s one associates numbered, or in the parlance \textit{colored},
 strands. Then the  $i$-colored strand
$\tikz
[scale=.65,  outer ysep=-2 pt,
    baseline=-.5ex,shorten >=.1pt,node distance=18mm, 
    semithick,auto,
    every state/.style={fill=texto ,draw=texto,inner sep=.4mm,text=black,minimum size=0},
    accepting/.style={fill=gray!50!black,text=white},
    initial/.style={white,text=black}
]{
\node[ black] at (0.5,-.35) {\tiny $i$};
\draw[stealth reversed-stealth, black] (0,0) -- (1,0);
}$ for $\delta_{a_ip_i}$ joins 
$a_i$ and $p_i$ in the vertices \eqref{vts}, so that, for instance
eq. \eqref{v2explicito} becomes 
\begin{equation}
\Tr_{\V_2}(\phi,\bar\phi)= 
  \hspace*{.2cm} \lambda\cdot\,
 \raisebox{-.45\height}{
  \includegraphics[height=2.1cm]{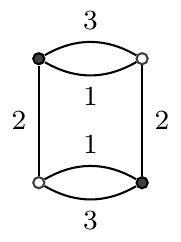}}
\end{equation}

This completes the comments on notation of interaction vertices; now
we address the corresponding notation of the Feynman graphs. The
partition function reads\footnote{ Actually the partition function has
  a factor $N^{D-1}$ ($N=|I_a|$ for $a=1,\ldots,D$) in front of the action $S[\phi,\bar\phi]$ in a
  more realistic scenario for a rank-$D$ model, and the measure should be rescaled accordingly. 
  Here we work with the graph structure of
  the theory and thus the factor can be restored anytime. See Remark
  \ref{being_melon} below.}
\begin{equation} \label{measure}
 Z[J,\bar J]=\frac{\int\Df[\phi, \bar\phi] \,\ee^{\Tr{(\bar J\phi)}+
\Tr{(\bar\phi J)}-S[\phi,\bar\phi]}}{\int\Df[\phi, \bar\phi]\,\ee^{-S[\phi,\bar\phi]}}\,, \quad 
 \Df[\phi, \bar\phi]:= \prod\limits_{\mathbf {a} \in
I_1\times I_2\times I_3} \frac{\dif\varphi_{\mathbf a} \dif\bar\phi_{\mathbf{a}}}{2\pi \ii} \,.
\end{equation}
The quantity $\dif{\mu}(\varphi,\bar\phi)= \Df[\phi, \bar\phi]\exp({-S_0[\varphi,\bar\phi]})$
defines a Gau\ss ian measure. Parenthetically, $\dif\mu$ and its perturbations, as other 
quantities can be formally studied in probability 
(see \cite{universality}, where a tensor version of the Dyson-Wigner law is 
obtained) but that is beyond our aim in this paper.  \par 
The perturbative expansion yields the Wick's contraction of products of
powers $[\Tr_{\mathcal{ B}_{\alpha_i}}(\phi,\bar\phi)]^{n_i}$ of the interaction vertices, 
i.e. all different fully Wick-contracted
terms obtained of the integrals of
\[
[\Tr_{\mathcal{ B}_{\alpha_1}}(\phi,\bar\phi)]^{n_1} [\Tr_{
\mathcal{ B}_{\alpha_2}}(\phi,\bar\phi)]^{n_2}\cdots [\Tr_{\mathcal{ B}_{\alpha_p}}(\phi,\bar\phi)]^{n_p},
\quad (n_i\in\Z_{>0}).
\]
The corresponding Feynman diagrams are $(3+1)$-colored graphs (Sec.
\ref{bubblehomology}).  It is illustrative to see how one arrives to
that result departing from the stranded representation.  First, we
associate with the propagator that contracts a field $\phi$ with a
$\bar\phi$, three parallel colored dotted lines:
\begin{equation} 
\raisebox{-.35\height}{
\tikz[scale=.65,  outer ysep=-2 pt,
    baseline=-.5ex,shorten >=.1pt,node distance=18mm, 
    semithick,auto,
    every state/.style={fill=texto ,draw=texto,inner sep=.4mm,text=black,minimum size=0},
    accepting/.style={fill=gray!50!black,text=white},
    initial/.style={white,text=black}
]{ 
\node[state,minimum size=0,circle, fill=white,  draw=green!58!black] (A11) at (0,0) {}; 
 \node[state,minimum size=0,circle, fill=white, draw=red] (A12) at (0,.4) {}; 
 \node[state,minimum size=0,circle, fill=white, draw=blue] (A13) at (0,.8) {};
\node[state,minimum size=0,circle , draw,fill=blue!40] (q13) at 
(3,.8) {}; 
 \node[state,minimum size=0,circle, draw,fill=red!50] (q12) at (3,.4) {}; 
 \node[state,minimum size=0,circle,draw,fill=green!40!lightgray] (q11) at (3,0) {};
 \foreach \c/\j in {green!58!black/1, red/2, blue/3}{
 \draw[dashed,\c] (A1\j) -- (q1\j);
 }
 } }\, \mbox{ = } \raisebox{-.30\height}{
 \tikz[scale=.65,  outer ysep=-2 pt,
    baseline=-.5ex,shorten >=.1pt,node distance=3mm, 
    semithick,auto,
    every state/.style={fill=texto ,draw=texto,inner sep=.4mm,text=black,minimum size=0},
    accepting/.style={fill=gray!50!black,text=white},
    initial/.style={white,text=black}
]{ 
\node[state,minimum size=0,circle, fill=white,  draw=gray] (A11) at (0,0) {}; 
 \node[state,minimum size=0,circle, fill=white, draw=gray] (A12) at (0,.4) {}; 
 \node[state,minimum size=0,circle, fill=white, draw=gray] (A13) at (0,.8) {};
\node[state,minimum size=0,circle , draw,fill=black] (q13) at 
(3,.8) {}; 
 \node[state,minimum size=0,circle, draw,fill=black] (q12) at (3,.4) {}; 
 \node[state,minimum size=0,circle,draw,fill=black] (q11) at (3,0) {};
 \foreach \c/\j in {black/1, black/2, black/3}{
 \draw[dashed,\c] (A1\j) -- (q1\j)
  ;
  \node[left of=A1\j] {\tiny \j};
 \node[right of=q1\j] {\tiny \j};
 }
 }}  
 \label{prop}
\end{equation}
and the corresponding Wick's contraction, 
as usual 
in the CTM-{literature}, with the $0$-color, 
$ \begin{tikzpicture}[scale=.75,
    baseline=-.5ex,shorten >=.1pt,node distance=18mm, 
    semithick,auto,
    every state/.style={fill=texto ,draw=texto,inner sep=.4mm,text=black,minimum size=0},
    accepting/.style={fill=gray!50!black,text=white},
    initial/.style={white,text=black}
]
 \node[state,minimum size=0,circle, draw,fill=white] (A) at (0,0) {}; 
 \node[state,minimum size=1,circle, draw,black,fill=texto ] (B) at (2,0) {}; 
 \path[-] 
           (A) edge[dashed, bend left=30] node[below] {\footnotesize $0$} (B);
 \end{tikzpicture}
$, which is the simplified version of the stranded
representation in eq. \eqref{prop}.
In that notation, a perturbative expansion 
in the \textit{the $(\phi_{D=3}^4)$-theory}, whose 
action is defined by the sum of the vertices
\begin{align} 
\label{vertices}
\mathcal V_1  &=\lambda \cdot \,\,  \raisebox{-.475\height}{\includegraphics[width=2cm]{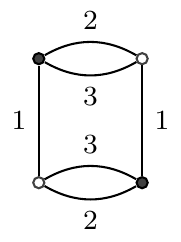}},   
& & \mathcal{V}_2 \hspace{- .63cm} 
&= \lambda\cdot \,\, \raisebox{-.475\height}{\includegraphics[width=2cm]{gfx/vtwo}}\, , & & \mathcal{V}_3 
&= \lambda \cdot \,\,\raisebox{-.475\height}{\includegraphics[width=2cm]{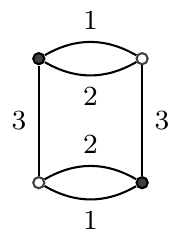}}
\, ,
\end{align}
generates $4$-colored bipartite regular graphs in colors
$\{0,1,2,3\}$ by Wick-contracting these vertices.  Among the, say,
fourth-order terms like $\int \dif \mu(\phi ,\bar \phi) (\V_1 \V_2
[\V_3]^2)$, the following is an example of vacuum graph contribution:
 \begin{equation}\hspace{-.5cm} 
\raisebox{-0.50\height}{\includegraphics[width=6.3cm]{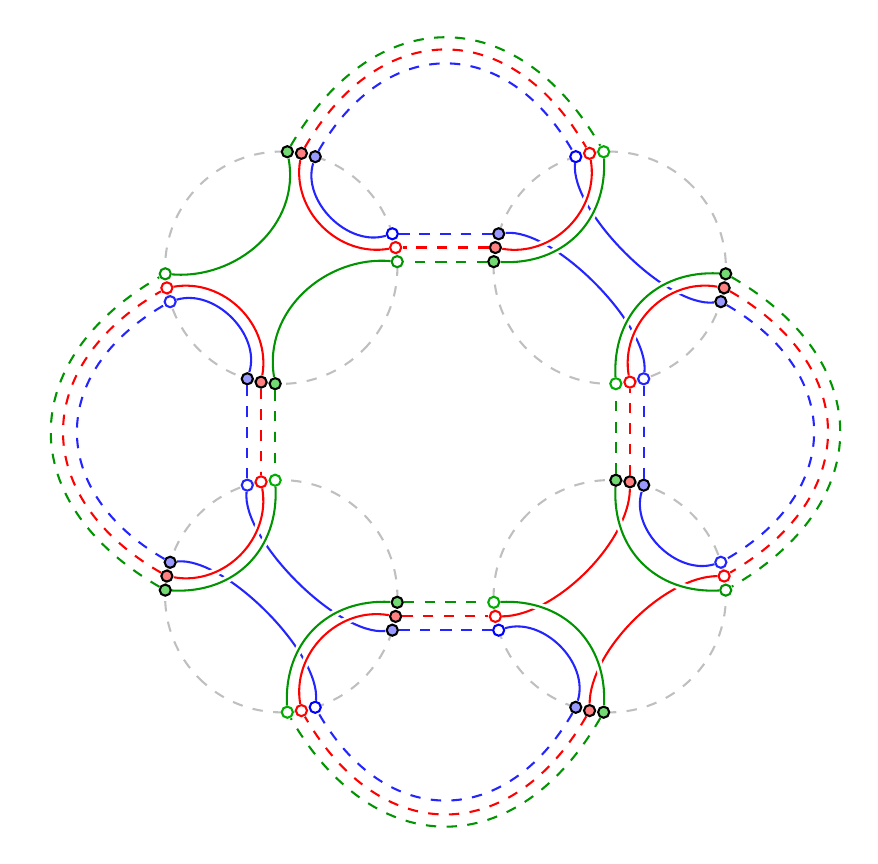}}
\mbox{ or } \raisebox{-0.50\height}{
\includegraphics[width=6cm]{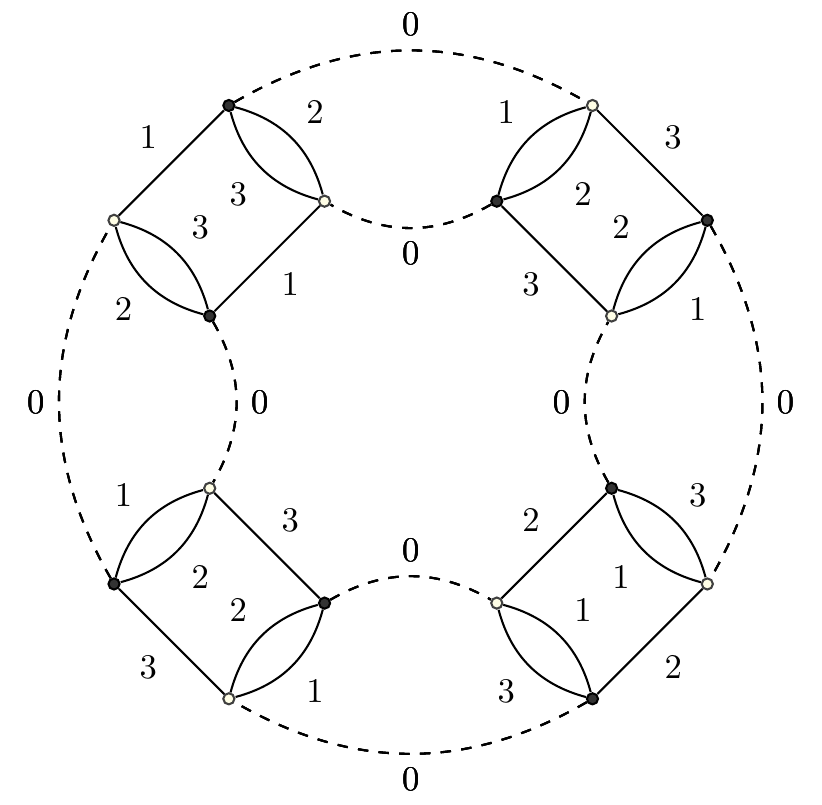}  \label{exampleforbubbles}} 
\lim_{\mtr{low\, energy}}\to \raisebox{-.4\height}{\includegraphics[width=1.5cm]{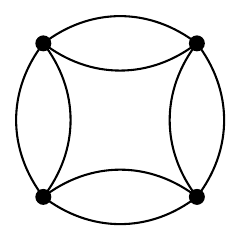}}
\end{equation}
 Both (big) graphs have the same information and are the `tensor-model' 
 version of the rightmost $\phi^4$-scalar field diagram, 
seen with higher energy resolution.
\begin{rem}\label{thm:col_uncol}
  The representations above have a slightly deceiving terminology.
  The graph $\G$ in the left of \eqref{exampleforbubbles} is called
  \textit{uncolored graph} \cite[Sec. 1.3]{wgauge}; the graph in the
  right is denoted by $\G_{\mtr{color}}$ or $\G_{\mtr{c}}$ and is the
  colored version of $\G$.  These terminology is also well-6explained
  in \cite[Def. 1 and Fig. 5]{4renorm}.  Sometimes, avoiding the
  stranded representation, $\G$ is represented as $\G_{\mtr{c}}$ with
  lines that are fainted if they are $c$-colored ($c\neq0$), but we
  refrain from doing so.  In the initial version of colored tensors a
  collection of $D$ random tensors, rather than a single one as here,
  was considered. Then $D-1$ of them were integrated out, and what one
  remains with is an `uncolored' \cite{uncoloring} tensor model with
  another (effective) action, which however, had still the colors
  encoded in their indices.  The difference is that the geometric
  realization for $\G_{\mtr{color}}$ includes a \textit{face} for
  each loop in $2$ arbitrary colors---either $0c$ or $cd$, for $c,
  d=1,\dots,D$ ($c\neq d$). For the uncolored version, one considers
  the loops of colors $0c$; the equivalence of notations, as in
  \eqref{exampleforbubbles}, explains why. We shall no longer use the
  stranded representation and prefer the `colored one', for which,
  however, we drop the label in $\G_{\mtr{color}}$ and use directly
  $\G$ instead.
\end{rem}

\subsection{Homology of colored graphs}
\label{bubblehomology} 

\begin{defn}
By a $D$-\textit{colored graph}, $\G$,  one understands here 
a graph $\G=(\G\hp 0,\G\hp1)$ with the following properties:
\begin{itemize}
\item[i)] \textit{Bipartiteness:}
The \textit{vertex set} is finite $\G^{(0)}$ and bipartite:
$\G^{(0)}=\G^{(0)}_{\mathrm{w}\vphantom{b}}\cupdot\G^{(0)}_{\mathrm{b}}$, where
$\G^{(0)}_{\mathrm{w\vphantom{b}}}$ are the \textit{white}, 
and $\G^{(0)}_{\mathrm{b}}$ the \textit{black}
vertices. Further, 
any $e\in \G\hp1$ is attached to precisely
one black vertex $w$ and one white vertex,
$b$, which we write $s(e)=b, t(e)=w$ or just
$e=\overline{bw}$.

\item[ii)] \textit{Regular coloring:} The \textit{edge set} is
  partitioned as $\G\hp1=\cupdot_{c=1}^D\G\hp1_c$.  The elements of
  $\G_c\hp1$ are said to have color $c$; this coloration is
  \textit{regular}, i.e. there are $D$ edges incident to any vertex
  having different colors.
\end{itemize}
We denote by $\Grph{D}$ 
the set of connected $D$-colored graphs and 
by $\amalg \Grph{D}$ the set of
disconnected $D$-colored graphs. Graphs in either $\Grph{D}$ or $\amalg \Grph{D}$ are also 
 called \textit{closed}, in contrast to open graphs 
 defined bellow (Sec. \ref{sec:BoundGraph}). A $p$-\textit{bubble} $\B$ of $\G
\in \Grph{D}$ is a connected subgraph
of $\G$ with edges in $p$ fixed colors ($1\leq p \leq D$), that is a 
subgraph $\B$ of $\G$ in $\Grph{p}$ (the $p$ colors
being a subset of $\{1,\ldots, D\}$). The set 
of $p$-bubbles of $\G$ is denoted $\G\hp p$. This 
is consistent with $\G\hp0$ and $\G\hp1$ being
the vertex and edge sets, respectively.
\end{defn}

\begin{example}
  Let $\G$ be the Feynman graph \eqref{exampleforbubbles}.  Then
  $\G\in \Grph{3+1}$, $\G$ having colors $\{0,1,2,3\}$.  The number of
  $0$-bubbles (indexed by $\G^{(0)}$) is 16; it also has $32$
  $1$-bubbles, $8$ of each color $c=0,\ldots,3$ (indexed by
  $\G^{(1)}$); there are $24$ $2$-bubbles sitting in $\G$: 3 bubbles
  of colors $\{01\}$ and $\{02\}$ each; 2 bubbles of colors $\{03\}$,
  6 bubbles with colors $\{12\}$ and 5 with colors $\{13\}$ and
  $\{23\}$ each.  The eight $3$-bubbles are drawn here:

\begin{align*}
\B^{012}_{\left\{\substack{\text{upper conn. component}\\\text{lower conn. component}}\right\}}&=\raisebox{-0.5\height}{
\includegraphics[width=4.2cm]{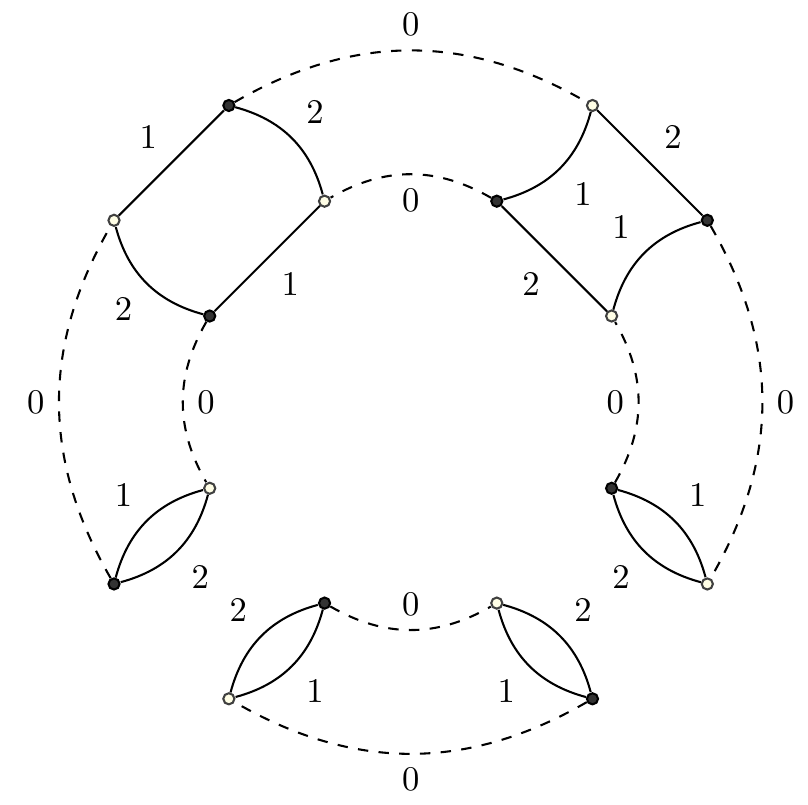}} & &\B^{013}=\raisebox{-0.5\height}{
\includegraphics[width=4.2cm]{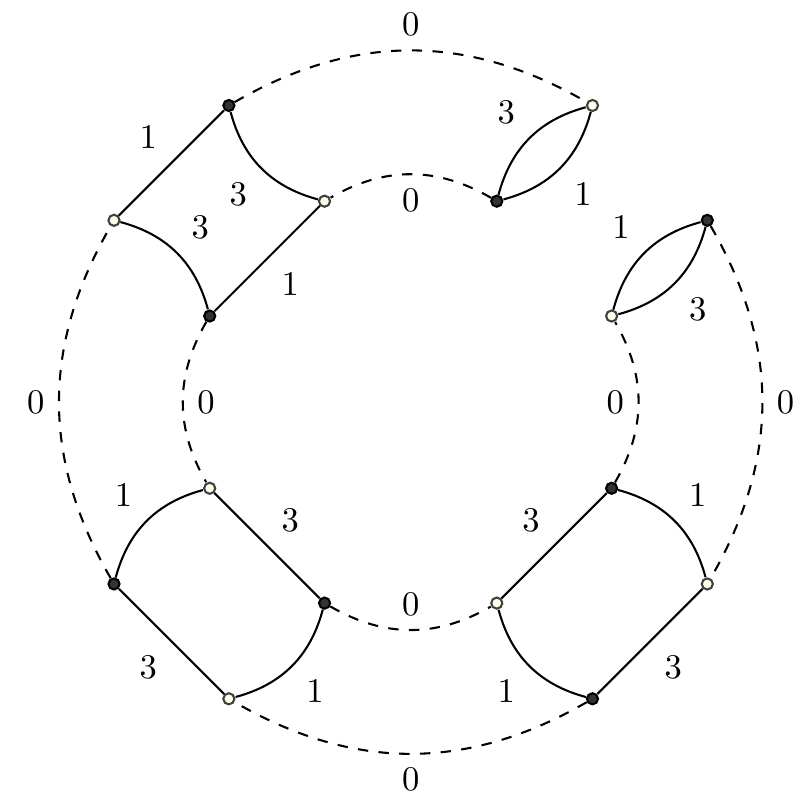}} \\
%                                                  \end{align*}
%   \begin{align*}
\B^{023}&=\raisebox{-0.5\height}{
\includegraphics[width=4.2cm]{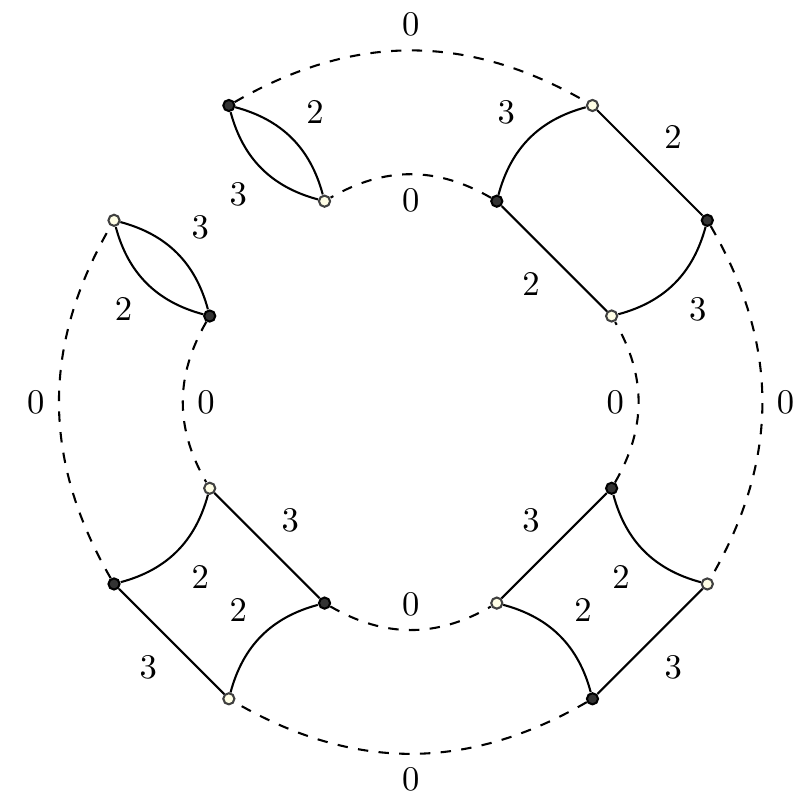}} && \qquad \begin{matrix}
                                                   \B^{123}_{1}=\V_1, & &  \B^{123}_{2}=\V_2,  \\
                                                    & \\
                                                   \B^{123}_{3}=\V_3, &  &\B^{123}_{4}=\V_3. 
                                                 \end{matrix} 
\end{align*}
 
\end{example}

The theory of homology for the Feynman
graphs of colored tensor models has been defined by Gur\u au
\cite{Gurau:2009tw} and, initially, it was referred 
to as \textit{bubble-homology}. 
The term \textit{colored homology} is also used. 
One defines the \textit{chain complex} of the graph $\G$ as the
collection of groups
$C_p(\G;\Z)=\mathrm{span}_\Z\{\B \,|\, \B \mbox{ is a $p$-bubble of }\G\}=
\mathrm{span}_\Z(\G\hp p)$ 
if $p=0,\ldots,D-1$, and $C_p(\G;\Z)=0$ otherwise. 
(Hence, in the case of the example above
$C_0(\G)=\Z^{16},\,\,C_1(\G)=\Z^{32},\,\,C_2(\G)=\Z^{24},\,\,C_3(\G)=\Z^{8}.$)
Any bubble in the generating set of $C_p(\G;\Z)$ has then the form
$\B^{I}_{\V}$ for $I\subset \{0,\ldots, D\}$, $I=(i_1,\ldots,i_p)$ fully 
ordered ($i_\alpha< i_\beta$ if $\alpha<\beta$) 
and $\V$ some vertex or number determining
the connected component. The \textit{boundary map} is
\[
\partial_p(\B^{(i_1,\ldots,i_p)}_{\V})=\sum_{q=1}^p(-1)^{q+1}
\sum\limits_{\mathcal W\subset \V} 
\B^{i_1\ldots \widehat{i_q}\ldots i_p}_{\mathcal W} \where (p\geq 2).
\]
The inner sum is performed over all the vertex-subsets $\mathcal W$ of $ \V$
with colors $i_1\ldots \widehat{i_q}\ldots i_p$.
For arbitrary $p$ one writes then:
\begin{align} \label{Rand_Kettenkomplex}
\partial_p(\B^{I^p}_{\V})=
\begin{cases} 0 & \mbox{ if } p=0, \\
 v-\bar v & \mbox{ if } p=1,\,  \mbox{and } \,\bar v=t(\B^{I}_\V),v=s(\B^{I}_{\V})\\ 
 \sum_{q=1}^p(-1)^{q+1} \sum_{\mathcal W_{\widehat{i_q}}} 
\B^{i_1\ldots \widehat{i_q}\ldots i_p}_{\mathcal{W}_{\widehat{i_q}}} & \mbox{ if }  p\geq 2
\end{cases}
\end{align}
and the restriction to $\mathcal W_{\widehat{i_q}} \subset \V $
on the sum is implicit. Thus, one orients the edges $e$ from the white
(sources $s(e)$) into the black vertices (targets $t(e)$).
\begin{defn} \label{thm:echar} The \textit{bubble homology} $H_\star(\G)$ of a 
colored graph $\G$ is the homology of the chain complex $(C_\star(\G),\partial_\star)$. 
The \textit{Euler characteristic} $\chi(\G)$ of a colored graph $\G$
is $\sum_q(-1)^{q}\,\mathrm{rnk} H_q(\G)$.\\
\end{defn}
Examples of the homology of graphs and 
the respective Euler characteristic follow.
We also refer to Appendices \ref{appA} and \ref{app_ribbons},
for more detailed computations.

%  \hyperref[fig:]{\ref*{fig: } \textsc{(a)}}

\subsection{Jackets and degree-computations} 

\begin{defn}(Jackets and degree.)
Let $\G$ be a $(D+1)$-colored bipartite graph.  Each cycle
$\tau=(\ell_0\ldots\ell_{D})\in \Sym_{D+1}$, defines a graph $\J_\tau$
called \textit{jacket} as follows: $\J_\tau$ has the same sets of
vertices and edges as $\G$,
\[ \J_\tau^{(0)}:=\G^{(0)}, \qquad \J_\tau^{(1)}:=\G^{(1)},\]
but its faces are those faces of $\G$ (i.e. two-bubbles) that have
colors $(\ell_{i+1}\ell_0)$ or $(\ell_i\ell_{i+1})$ with
$i=0,\ldots,D$:
\[ \J_{\tau}^{(2)}=\{f\in\G^{(2)} | \,f \mbox{ has colors }(\tau^q(0),\tau^{q+1}(0)), q\in \Z\}.\]
Here $\tau^q$ stands for $\tau\circ\ldots \circ\tau$ applied ($q$
times) to the color $0$.  By definition, $\J^{(k)}_\tau=\emptyset$
for $k>2$, so that jackets are ribbon graphs.  Since $\tau$ and
$\tau\inv$ lead to the same face-sets,
$\J_\tau^{(2)}=\J_{\tau\inv}^{(2)}$, those cycles are considered
equivalent. Hence $\G$ has $D!/2$ jackets. By computing their 
bubble-homology one finds, for certain non-negative integer $g_{\mathcal J}$, 
\[
H_q(\J)=\begin{cases}
\Z & \mbox{if}  \,\,q=0,2, \\
\Z^{2g_{\J}} & \mbox{if} \,\, q=1,\\
0 & \mbox{if} \,\,q>2.
\end{cases}
\]
i.e. the Euler characteristic of the geometric realization of $\J$ is
$2-2g_\J$. 
\end{defn}
\begin{defn} \label{thm:degree}
Given a closed graph $\G\in\Grph{D+1}$ ($D\geq 2$), its 
\textit{Gur\u{a}u's degree} $\omega(\G)$ is defined
as the sum of all the genera of the jackets of $\G$:
\[\omega(\G)=\sum_{\J\subset\G} g_\J.\]
A graph $\G$ with $\omega(\G)=0$ is called \textit{melon.}
\end{defn}

To fully understand the concept of jacket
the next brief examples might help.

\begin{example}\label{thm:exampleribbon} 
  The case $D=2$. There, ($2+1$)-colored graphs have a single jacket,
  the ribbon graph itself. The degree $\omega(\mathcal{R})$ is
  therefore precisely the genus $g(\mathcal{R})$. This will allow to
  treat matrix models as a rank-$2$ tensor model, as specified below
  \ref{matrix_as_tensor}.  Consider the next graphs
\begin{equation} \label{toroesferaribbon}
 \mathcal{R}_0=\hspace{-.7cm}\raisebox{-0.48\height}{\includegraphics[width=6.3cm]{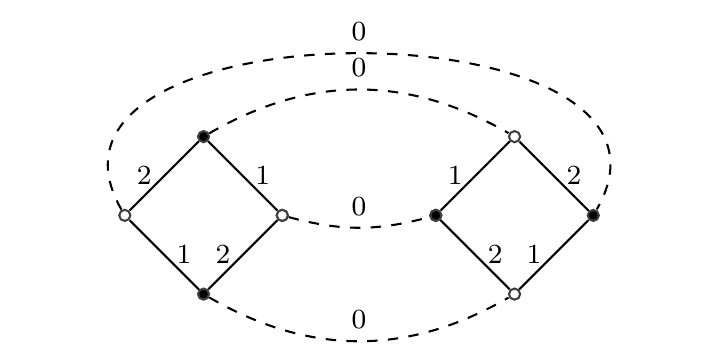}}
 \!\!\!\!\!\ \!\!\!\! \and 
 \mathcal{R}_1=\hspace{-1.4cm}\raisebox{-0.48\height}{\includegraphics[width=8.0cm]{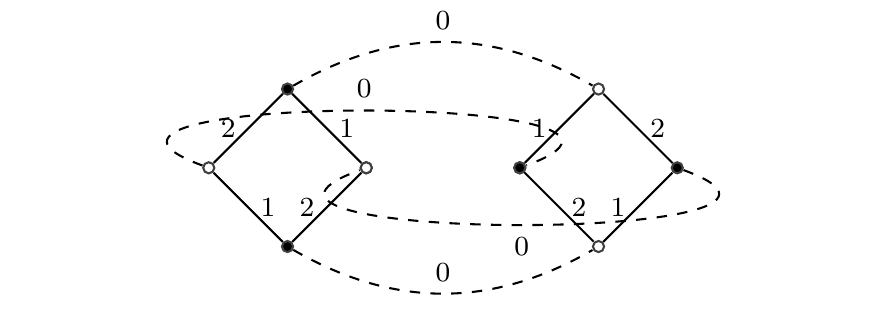}}
\hspace{-1.5cm}\end{equation}
arising in the $\mathcal O(\lambda^2)$-terms from
of the evaluation of the quartic rank-$2$ CTM, 
\[\int \Df{[\phi,\bar \phi]} \exp\big({-\Tr(\bar \phi\bar\phi)-\lambda\Tr((\bar\phi\phi)^2)}\big), \]
where the $0$-colored lines are Wick's contractions.  The homology of
$\R_0$ can be computed, but one can directly use its (dual) cell complex 
$\Delta({\mathcal R_0})$: the vertices and edges of the graph
are the $0$-cells and $1$-cells of $\Delta({\mathcal R_0})$,
respectively; to any bicolored path, one glues a $2$-cell. Then
$\Delta({\mathcal R_0})$ turns out to be a sphere.  The graph $\mathcal
R_1$ requires a slightly more detailed computation (see \ref{homtorus}
in App. \ref{appA}). From its homology $H_0(\mathcal{R}_1)=\Z ,
H_1(\mathcal{R}_1)=\Z^2, H_2(\mathcal{R}_1)=\Z$, one infers that
$\mathcal{R}_1$ is a (dual triangulation of a) torus.
\end{example}

% ------
% ------
% ------

% ------
% ------
% ------

\begin{example}(When is a graph non-melonic?) There are at least four
criteria to test non-melonicity: \label{thm:criteria}
\begin{itemize}
\item[(i)] Find a non-melonic subbubble $\B^{\hat D}_{(\kappa_0)}$ and
  use $\omega(\G)\geq D \sum_\kappa \omega(\B^{\hat D}_{(\kappa)})$
  (see \cite[Lemma 1]{universality} or \cite[Prop. 2]{uncoloring}),
  where $\kappa$ indexes all $3$-bubbles with colors $\hat D$.
  \item[(ii)] Since melonic graphs can be shown to be by necessity dual to 
 spheres \cite[Lemma 4]{Nexpansion}, one can compute the homology $H_\star(\G)$
 and show that it is not isomorphic to the (say cell-) homology
 $H_{\star}^{\mtr{cell}}(\mathbb{S}^3)$ of a sphere.  
\item[(iii)] Find a non-spherical jacket of $\G$. 
 \item[(iv)] Face-counting (cf. \cite[Prop. 1]{uncoloring} and \cite{us}).
\end{itemize}
Contrary to matrix models, tensor models turn out to encode more than
topology. That can be noted in the following example. Let
$\G$ be the following necklace-graph\footnote{The name
`necklace' has been borrowed from \cite{Carrozza_review}.
} in four colors: 
\[\includegraphics[width=2.2cm]{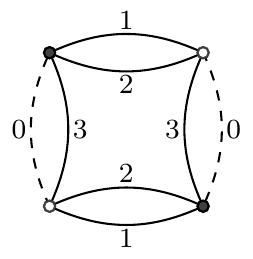} \]
For this graph, (i) above does not help. Since all four 3-bubbles
$\B^{\hat 0},\ldots,\B^{\hat 3}$ the same as the vertices $\V_i$. Thus
those bubbles are melonic, and the lower bound given by (i) is
trivially satisfied --- by definition, $\omega(\G)$ is already a
non-negative integer.  To begin
with (ii) the chain complex is
\[
\includegraphics[height=.66cm]{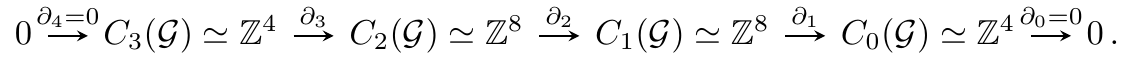}
\]
Taking homology yields (see Ex. \ref{homonecklace},  App. \ref{appA})
\[H_3(\G)=\Z,\qquad H_2(\G)=0,\qquad H_1(\G)=0, \qquad H_0(\G)=\Z.\]
That method still does not help to know whether the graph is melonic,
since $\G$ has the same homology as $\mathbb{S}^3$. Proceeding with
(iii), one has for the jackets $\J_\tau,\J_\pi,\J_\sigma$
corresponding to $\tau=(0123), \pi=(0213), \si=(0132) \in\Sym_4$, the
following groups:
\begin{align*}
H_2(\J_\tau)=\Z\,, & & H_1(\J_\tau)=&0\,, & H_0(\J_\tau)&=\Z
\,, \\
H_2(\J_\pi)=\Z\,, & & H_1(\J_\pi)=&\Z^2, & H_0(\J_\pi)&=\Z\, , \\
H_2(\J_\si)=\Z\,, & & H_1(\J_\si)=&0\,, & H_0(\J_\si)&=\Z\, . 
\end{align*}

Therefore $\omega(\G)=1$. Of course this is consistent with the
face-counting formula \cite[eq. 2.9]{critical} for $d$-colored graphs:
\[F_\G:=\#(\G^{(2)})= {d-1\choose 2}\cdot \frac{\#(\G^{(0)})}{2}+(d-1)-\frac{2\omega(\G)}{(d-2)!} \,,
\]
which in this case ($d=3+1= \#(\G^{(0)}),F_\G=8$) also yields
$\omega(\G)=1$. By Proposition 4.3 of \cite{GurauRyan} (namely,
four-colored graphs possessing a spherical jacket $J$, i.e. $g_J=0$,
are themselves spherical) one has a counterexample for the reciprocal
of `being a melon implies being dual to a sphere'.  In \cite{Ryan} it
has been found that the jackets represent embedded matrix theories in
the tensor theories.  Moreover the jackets are interpreted as the
surface along which a Heegaard splitting takes place.  Here we found
an example of this explicit splitting for the three-sphere.  The
genus-$0$ jackets $\J_\tau$ and $\J_\sigma$ correspond to the
genus-$0$ Heegaard splitting of $\mathbb{S}^3\subset \C^2$, that is,
the coordinates $(z_1,z_2)$ of $\mathbb{S}^3$ with $\Im (z_2)=0$,
i.e. $\mathbb S^2$ inside $\mathbb{S}^3$.  The jacket $\J_\pi$ is the
genus-$1$ Heegaard splitting of $\mathbb{S}^3$, the Clifford torus
$\T^2$, given by the locus $|z_1|=\sqrt{2}/2=|z_1|$.
\end{example}

\begin{rem}\label{being_melon}(On the importance of melons). 
So far we have considered the action functional as frozen, not flowing
with the energy scale, $N$.  This large integer is defined as follows.
One usually sets the Hilbert spaces $\H_c$ all to the fundamental
representation of $\mtr{U}(N)$.  One keeps in mind the color as an
artifact that forbids elements in one factor of $\mtr{U}(N)\times
\mtr{U}(N) \times \mtr{U}(N)$ to jump to another one. The action
(without sources) is then scaled as
\begin{equation} 
 Z_0=\int \Df[\phi, \bar\phi]\,\ee^{ -N^{D-1}S[\phi,\bar\phi]}\,.
\end{equation}
Melons are the dominating graphs, for 
the amplitude of a graph $\G$ is weighted \cite{gurauReview}
by Gru\u{a}u's degree as
\begin{equation}
 \A(\G) \sim N^{D-\frac{2 \omega(\G)}{(D-1)!}} \,.
 \label{weightN}
\end{equation}
This is the tensor version of the 
well-known genus-weighted amplitudes 
of ribbon graphs $\R$ in matrix models, 
$\A(\R)\sim N^{2-2g(\R)}$. Having 
ribbon graphs a single jacket, the latter 
formula is a particular case of formula
\eqref{weightN}, with $D=2$. 
\end{rem}

\subsection{The boundary graph}\label{sec:BoundGraph}

\begin{defn}
  A graph $\G$ is an \textit{open} $(D+1)$-colored graph (with colors
  $c=0,1,\ldots, D$) if its vertex-set is bipartite in the following
  two senses:
\begin{itemize}
 \item[(i)] As 
before, $\G^{(0)}=\G^{(0)}_{\mathrm{w}\vphantom{b}}\cupdot\G^{(0)}_{\mathrm{b}}$, where
$\G^{(0)}_{\mathrm{w\vphantom{b}}}$ are the \textit{white}, and $\G^{(0)}_{\mathrm{b}}$ the \textit{black}
vertices,
\item[(ii)] any vertex is either \textit{inner} or \textit{outer},
  $\G^{(0)}=\G^{(0)}_{\mathrm{inn}}\cupdot\G^{(0)}_{\mathrm{out}}$;
  further, the set $\G^{(0)}_{\mathrm{inn}}$ of inner vertices is
  regular with valence $D+1$ and outer vertices have valence $1$.
\end{itemize}
Additionally, the edge set $\G^{(1)}$ is $(D+1)$-\textit{colored} 
---that is $\G^{(1)}=\cupdot_{c=0}^D\G^{(1)}_c$,
where $ \G^{(1)}_c$ is the set of color-$c$
edges---
and satisfies the following:
\begin{itemize}
\item[(iii)] for each color $c$ and each inner vertex
  $v\in\G^{(0)}_{\mathrm{inn}}$, there is exactly one color-$c$ edge 
  $e\in \G^{(1)}_c$ attached to $v$,
 \item[(iv)] each external vertex is connected to the graph only by a
 color-$0$ edge.
\end{itemize}
Both the leaves of open graphs and the edges that are 
attached to them shall be referred to as \textit{external legs}.
Therefore alternative notations might include omission of the outer
vertices (`half lines') or their replacement by sources.  There, to
each non-contracted black (resp. white) vertex, a (tensorial) source $J$ (resp
$\bar J$ is attached). We also let 
\begin{equation}
 \Grph{D+1}\hp {\mathcal N}:= \big\{ \G \mbox{ open } (D+1) 
 \mbox{-colored }\big|\,\, \#\big(\G\hp 0_{\mathrm{out}}\big)=\mathcal N \big\}\,.
\label{eqn:open_graphs_legs}
\end{equation}
In the models we treat here, being the graphs bipartite, $\mtc N$ is
even. We allow $\mtc N=0$, setting $\Grph{D+1}\hp {0}:=\Grph{D+1}$. To
complete the notation, given an open graph $\G$ one can extract a
(generally non-regularly) colored graph $\mathrm{inn}(\G)$ defined by
\begin{equation}
 \mathrm{inn}(\G)\hp 0 = \G\hp 0_{\mathrm{inn}}   \qquad 
 \mathrm{inn}(\G)\hp 1 = \G\hp 1  \setminus \{\mbox{external legs of }\G\}\,.
\end{equation}
The graph $\mathrm{inn}(\G)$ is called \textit{amputated} graph. 
\end{defn}
For instance, the amputation of the the following open graph $\G\in\Grph{3+1}\hp{2}$ 
is represented at its right-side:
\begin{equation} \label{first_Ex_opengraph}
\!\!\!\G=\raisebox{-.38\height}{\includegraphics[height=2.4cm]{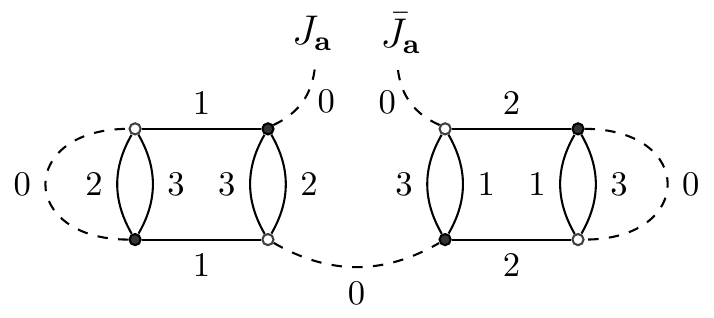}} \,\,\,\,\,\,\,
\mathrm{inn}(\G)=\raisebox{-.5\height}{\includegraphics[height=1.83cm]{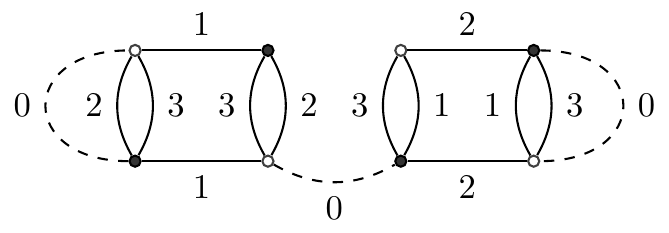}} 
\end{equation}
In the QFT-context, open graphs 
are not-fully Wick-contracted interaction vertices.
\begin{defn}
A \textit{colored tensor model} $V(\phi,\bar\phi)$ is determined by
an integer $D \geq 2$, called \textit{dimension}, which is given by
the rank of the tensors, and by an \textit{action}
\begin{equation}  %\label{potencial}
V(\phi,\bar\phi)=  \sum_{\B\in \Upsilon } \lambda_\B  \,\Tr_\B(\phi,\bar \phi)\, ,\nonumber
\end{equation} 
where $\Upsilon\subset \Grph{D}$ is a finite subset and $\lambda_\B\in \re$ for each $\B\in\Upsilon$.
The full action is then $S[\phi,\bar\phi]=\Tr_2(\bar\phi,\phi)+ V(\phi,\bar\phi)$.
The set of \textit{connected Feynman diagrams} 
of the \textit{model} $V(\phi,\bar\phi)$ is denoted by $\fey_D (V)$. It satisfies
\[
\fey_D (V)= \big\{ \G \in \cupdot _{k=0}^\infty 
\Grph{D+1}\hp{2k}\,\big|\, \mathrm{inn}(\G)^{\hat 0} \mbox{ has connected components in } \Upsilon\big\} \,.
\]
The graphs in $\Grph{D+1}\hp 0\cap \fey_D (V)=\Grph{D+1}\cap\fey_D (V):=\fey_D^{\mtr{v}}(V)$ 
are called \textit{vacuum graphs} of the model $V$. 
We write $\fey^\re_D(V)$ if the tensor field is real. 
\end{defn}

The boundary of a $(D+1)$-colored graph is a 
$D$-colored graph. It was defined in such a way that its associated
(pseudo)simplicial complex matches the 
boundary of the complex of the original graph (see \cite{GurauRyan} for this fact).  
Thus, the boundary of a graph is defined to be empty on 
vacuum (i.e. closed) graphs. Otherwise:

\begin{defn}\label{def:open}
Let $\G$ be an open Feynman diagram of a rank-$D$ colored
tensor field theory. The \textit{boundary graph} $\partial\G$
of $\G$ has by definition the 
following vertex and edges sets:
\[(\partial\G)^{(0)}=\G_{\mathrm{out}}\hp 0, \qquad (\partial\G)^{(1)}=
\cupdot _{c=1}^D (\partial\G)^{(1)}_c ,\with
(\partial\G)^{(1)}_c=\{(0c)\mbox{-colored paths in }\G\}\,. \]

Another conventions define $(\partial\G)^{(0)}$ as the set of external
lines. Since an external line goes to an outer vertex, both
definitions are equivalent.  The definition also says that the outer
vertices are the leaves of the graph. 
Two vertices of $\partial\G$ are connected by an $i$-colored 
edge if and only if there exists a $(0i)$-bicolored path between 
them in $\G$. 
\end{defn}

\begin{example}
  The graph $\G$ in eq. \eqref{first_Ex_opengraph} has two outer
  vertices, and by the previous construction, $\partial\G$ will have
  two vertices. For each $c=1,2,3$, an edge of color $c$ connects
  those, since there is a $(0c)$-bicolored path between the two outer
  vertices $\G$. Then $\partial\G$ is
$\raisebox{-.38\height}{
\includegraphics[height=4ex]{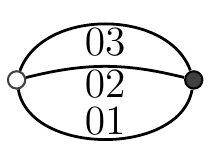}}
$
which does not differ (but apparently in labelling) from 
the graph of the propagator. A slightly 
more interesting example is 
\begin{equation} \label{eq:integraciondek33}
\raisebox{-0.45\height}{\includegraphics[height=4.1cm]{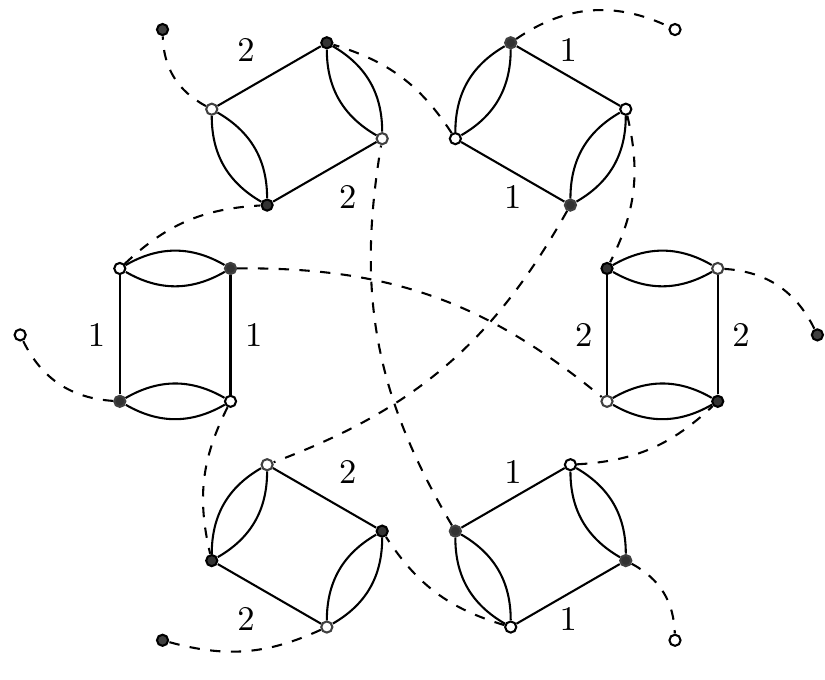}}
\quad \stackrel{\partial}{\mapsto } \quad\raisebox{-0.4\height}{
 \includegraphics[height=1.7cm]{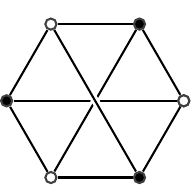}} 
\end{equation}
\end{example}
Boundary graph might be disconnected, as we will easily see
(in Lemma \ref{thm:sepa}, constructively).  \\

\subsection{The geometric realization of colored graphs}\label{reconstruction_from_graphs}

One can construct a colored triangulation $K(\G)$ of a compact
piecewise-linear manifold $|K(\G)|$ departing from a $(D+1)$-colored graphs $\G$
as we now describe. 
The simplicial (pseudo)complex $K(\G)$ is 
assembled as follows \cite{survey_cryst}:
\begin{itemize}
 \item[\scriptsize$\bullet$] for each $v\in \G\hp 0$, $K(\G)$ has a $D$-simplex $\sigma_v$
 \item[\scriptsize$\bullet$] one labels the vertices $\sigma_v$ by the colors $\{0,1,\ldots,D\}$
 \item[\scriptsize$\bullet$] for each color $c$ and each edge $e_c\in\G_c\hp1$ one identifies 
  the faces $\sigma_{s(e_c)}$ and $\sigma_{s(t_c)}$ that do not contain the color $c$ (i.e.
  the $(D-1)$-simplices that lie opposite to the vertex labelled by $c$)
 \end{itemize}
Because colored graphs allow multiple edges,
$D$-simplices might intersect at more than one face (whence the prefix \textit{pseudo}).
We write $|\Delta(\G)|$ for the manifold that 
$K(\G)$ triangulates, but we abuse on notation and abbreviate it as $\Delta(\G)$. 
We say that $\G$ \textit{represents} $\Delta(\G)$.  

\begin{rem}
 In the 2-dimensional case the cell complex associated to ribbon
 graphs and $3$-colored graphs exposed in Appendix \ref{app_ribbons}
 is the Poincar\'e dual of this $\Delta$-construct.  Since we
 basically analyze the Euler characteristic of graphs, this subtlety
 is not important.\\
  \end{rem} 
% In the late 70s and 80s colored triangulations 
% with a minimum of vertices, so called 
% called \textit{crystallizations} (see \ref{thm:crys}),
% played an important role in discrete geometry.

Colored graph theory also incarnates the cone $X\mapsto CX$ of a
topological space \cite{gurauReview}.  If $\B$ is a $D$-colored graph,
then one defines $C\B$ as the open $(D+1)$-colored graph with
$(C\B)^{(0)}_{\mathrm{inn}}=\B^{(0)}\times \{0\}$ and
$(C\B)^{(0)}_{\mathrm{out}}=\B^{(0)}\times \{1\}$ (a second copy of
the vertices). Moreover, if $v\in \B^{(0)}$ is white (resp. black)
then $(v,0)$ is white (resp. black) as well, but $(v,1)$ is black
(resp. white). 
The edges are defined by
\[(C\B)^{(1)}=\B^{(1)} \cupdot \{ \mbox{single color-$0$ edge
  between} (v,0) \mbox{ and } (v,1)\, | \,v \in \B^{(0)}\}\]
The cone is defined so that $\Delta(C\B)=C\Delta(\B)$. The relation
$\partial (C \B)=\B$ holds also for each graph.
\\

\subsection{Ribbon and $3$-colored graphs}
Ribbon graphs are also known as \textit{fat graphs}.  We choose mainly 
the definition of \cite{nlab} with a notation inspired by \cite{mulase}, but we will
really need only a subset of those graphs, which arises either as
Feynman diagrams in matrix models or as boundary-graphs of Feynman
graphs of rank-$3$ tensor field theories. In all generality, though:

\begin{defn} \cite{mulase} 
A \textit{ribbon graph} is a finite graph 
without isolated vertices nor leaves, together with a cyclic 
ordering of the set of half-edges at each vertex.
\end{defn}

The definition includes \cite{nlab}, implicitly, the following set of data and conditions:
\begin{itemize}
  \item[\scriptsize$\bullet$] two finite sets: the \textit{vertex-set} $\Rb\hp 0$, 
 and the  \textit{half-edges} set $\Rb\hp{1/2}$.
  \item[\scriptsize$\bullet$] a map $p:\Rb\hp {1/2} \to \Rb\hp 0$. The picture of `$p(h)=v$'
   is that the half-edge $h$ emanates from $v$.
 \item[\scriptsize$\bullet$] $n_v:=|p\inv(v)|$ is called the \textit{valence} of a vertex $v$. 
 Furthermore the condition $n_v>1$ is imposed. Thus
$\Rb$ has neither isolated vertices ($n_v\neq 0$) nor leaves ($n_v\neq 1$). 
 \item[\scriptsize$\bullet$] a cyclic orientation of $\Rb\hp{1/2}.$
 \item[\scriptsize$\bullet$] an involution $j$ on the set of half-edges; 
here $j(h)=h'$ means that $\{h,h'\}$ is 
a full edge ---if so then, of course, $j(h')=j^2(h)=h$.
Moreover, it is imposed that $j$ has no fixed point.
\end{itemize}
The usual graph notion in terms of vertices and (full) edges,
$\Rb=(\Rb\hp 0,\Rb\hp1)$, is then recovered by defining $\Rb\hp 1$,
the \textit{edge-set}, as the set of cycles of $j$ \cite{nlab}.  Without loss of
generality we write
\[\Rb\hp{1/2}=\{(v,\alpha): v\in \Rb\hp 0, \alpha=1,\ldots,n_v\} \and
p=\pr 1. \]
Before formally constructing the cell-complex for a ribbon graph in Appendix \ref{app_ribbons}, we
motivate in an informal vein their usual notation from the abstract
definition.  For a vertex $v\in \Rb\hp 0$ of valence $n_v$, the cyclic
ordering sees the following operation:
\begin{equation} \label{prohibida}
\raisebox{-.45\height}{\includegraphics[width=7.5cm]{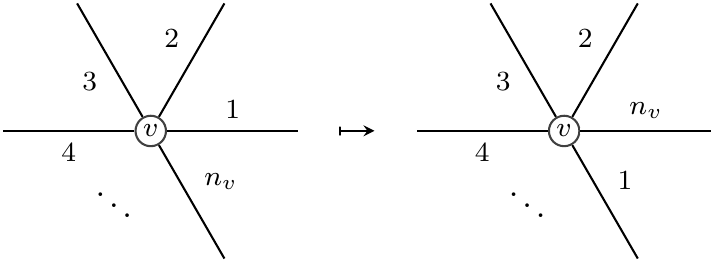}}
\end{equation}

\begin{figure}%[H] 
\begin{subfigure}{0.59\textwidth}
\includegraphics[width=8.9cm]
{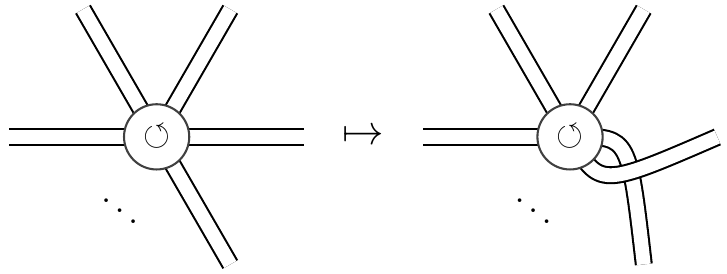}
\caption{   Cyclic ordering on the vertices
represented by a disk.} \label{fig:listonesA}
\end{subfigure}
\hspace*{\fill}  
\begin{subfigure}{0.29\textwidth}
\includegraphics[width=4.1cm]
{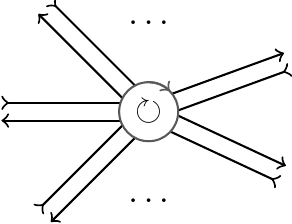}
\caption{   How the orientation of the edges is determined.} \label{fig:listonesB}
\end{subfigure}
\caption{    	
}  \label{fig:listones}
\end{figure}
In order to keep track of the order, the incidence relations are
usually graphically represented as follows: edges are
\textit{ribbons}, that is rectangles (topological disks $D^2$);
vertices are \textit{disks}. The map $j$ represents the attachment of
one side of a half-edge to a disk, as in
Fig. \hyperref[fig:listonesA]{1\textsc{ (a)}}, thus keeping track of
the operation \eqref{prohibida}.  The cyclic ordering of the vertex
determines an orientation on half-edges--rectangles as shown in
Fig. \hyperref[fig:listonesB]{1\textsc{ (b)}} and the ribbons should
be drawn taking into account the orientation on both ends. Moreover,
the ribbons do not intersect and the way they are attached to the
disks must respect the orientation. If we represent the graph on the
plane, mismatch of orientations is represented by lines
$\raisebox{-.2\height}{\includegraphics[height=0.48cm]{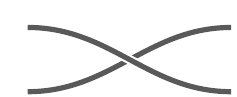}}$,
which do cross. Nevertheless, crucially, the graph can be drawn
without intersections on other surfaces.  The lowest-genus closed, orientable surface on
a ribbon graph $\Rb$ can be planarly drawn on is its geometric
realization, $\Sigma(\Rb)$ (see App. \ref{app_ribbons} for its construction).

\begin{defn} \label{genus_ribbon}
We write $\chi (\Rb)$ for the \textit{Euler characteristic} of the geometric
realization of $\Rb$, that is $\chi (\Rb)=\chi(\Sigma(\Rb))$.  In
turn, this also defines the \textit{genus} $g(\Rb)$ of $\Rb$.
\end{defn}

\begin{example}\label{thm:ribbons_concrete}
We illustrate the concepts in the last 
paragraph for the following simple
ribbon graphs: \vspace{-.21cm}
\begin{align*}
\raisebox{-.1\height}{
$\Rb =$ 
\raisebox{-.3\height}
{
\includegraphics{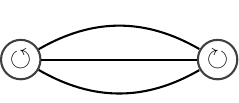}
},
$\,\,\,
\mtc Q=$
\raisebox{-.3\height}{ 
\includegraphics{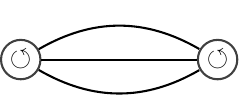}
}$,\,\,\,
\mtc W=$ 
\raisebox{-.23\height}{
\includegraphics[height=1.38cm]{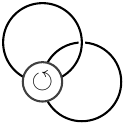} 
}.}
\end{align*}
Their ribbon representation is the following (thought of as 
filled vertices and ribbons):
\[
\Rb = 
\raisebox{-.45\height}
{
\includegraphics[height=1.5cm]{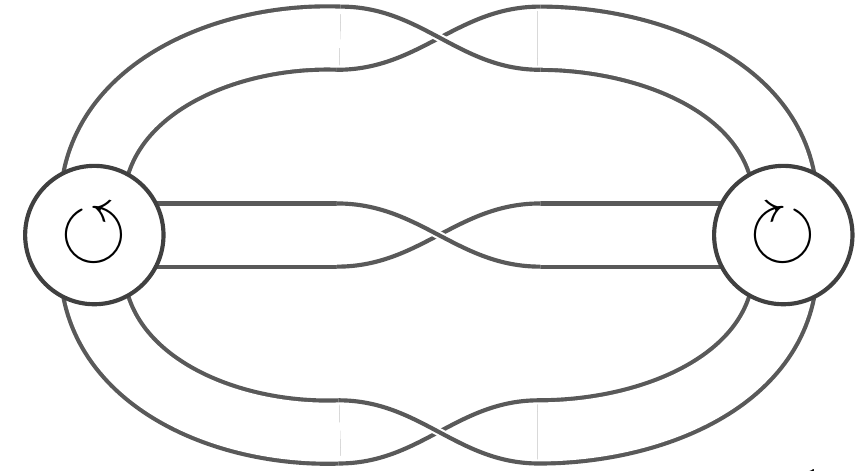}
},
\,\,\,
\mtc Q=
\raisebox{-.45\height}{ 
\includegraphics[height=1.6cm]{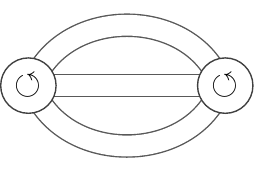}
},\,\,\,
\mtc W= 
\raisebox{-.4\height}{
\includegraphics[height=2.08cm]{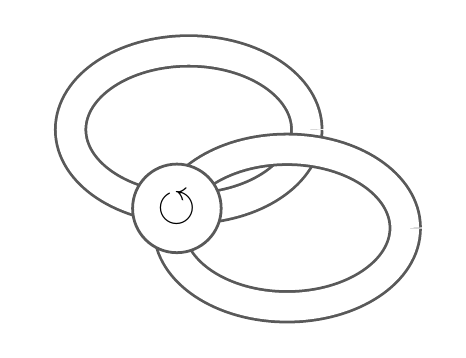} 
}.
\]
Thus, the fat graph $\Rb$ has only one boundary component, so
$\chi(\Sigma(\Rb))=0$. Thus $\Rb$ can only be planarly drawn on a
torus. Also $\mathcal W$ has genus $1$, as it has only one boundary
component, one vertex and two edges (ribbons). The graph $\mathcal Q$
has genus $0$.  The notation we will choose from now on is the
omission of the disks, usual in the physics literature, 
as well as disregarding 
crossings $\raisebox{-.2\height}{\includegraphics[height=0.40cm]{gfx/mismatch.pdf}}$. 
With that notation, $\mathcal W$ is shown in
graph \eqref{eqn:ribonfisico}. This does not affect the previously
defined quantities, because they are homotopy invariant.
\end{example}

\begin{lem}\label{thm:col_rib}
Regularly edge-$3$-colored, vertex-bipartite graphs
are ribbon graphs.
\end{lem}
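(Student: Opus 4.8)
\emph{Proof plan.} The plan is a direct verification that a regularly edge-$3$-colored, vertex-bipartite graph $\G=(\G\hp0,\G\hp1)$ carries, canonically, all the data and conditions listed after the definition of ribbon graph borrowed from \cite{mulase,nlab}. First I would set $\Rb\hp0:=\G\hp0$ and read the half-edges off the coloring: by regularity each $v\in\G\hp0$ is incident to exactly one edge of each color $c\in\{1,2,3\}$, whose half at $v$ I denote $(v,c)$, so that $\Rb\hp{1/2}:=\{(v,c)\colon v\in\G\hp0,\ c=1,2,3\}$ and $p:=\pr1\colon\Rb\hp{1/2}\to\Rb\hp0$. Every vertex then has valence $n_v=3>1$, hence $\Rb$ has neither isolated vertices nor leaves, and finiteness is inherited from $\G$.

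Next I would produce the involution $j$. In a colored graph every edge joins a black to a white vertex, so $\G$ is loop-free; in fact a loop would make a vertex incident to two half-edges of equal color, contradicting regular coloring, so the bipartiteness hypothesis is invoked here only to shorten the argument. Thus for each color $c$ and each $v\in\G\hp0$ the color-$c$ edge at $v$ has a well-defined second endpoint $v'\neq v$, and I set $j(v,c):=(v',c)$. This is a fixed-point-free involution on $\Rb\hp{1/2}$ whose $2$-cycles are precisely the edges of $\G$, so the graph $(\Rb\hp0,\Rb\hp1)$ reconstructed as the vertex set together with the cycle set of $j$, as in \cite{nlab}, is $\G$ itself.

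Finally I would put a cyclic order on the three half-edges at each vertex. For the bare definition of \cite{mulase} any choice works; the canonical one, which in addition renders the geometric realization of Appendix \ref{app_ribbons} orientable (so that the ribbon structure is compatible with the jacket construction), is the one dictated by the colors together with the bipartition: at a white vertex use $(v,1)\to(v,2)\to(v,3)\to(v,1)$, and at a black vertex the reversed order $(v,1)\to(v,3)\to(v,2)\to(v,1)$. With $\Rb\hp0$, $\Rb\hp{1/2}$, $p$, $j$ and this cyclic ordering, all the conditions in the definition of a ribbon graph hold, which proves the lemma. I expect no genuine obstacle here: the only point needing a little care is the bookkeeping that a regular $3$-edge-coloring yields exactly one half-edge of each color at each vertex, hence a loop-free valence-$3$ graph; everything else is a straightforward unwinding of definitions.
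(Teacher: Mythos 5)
Your proposal is correct and follows essentially the same route as the paper's own proof: identical half-edge set $\G\hp0\times\{1,2,3\}$ with $p=\pr1$, the same color-preserving involution $j$, and the same cyclic orders $(123)$ at white and $(321)$ at black vertices. The only additions are harmless asides (that loop-freeness already follows from the regular coloring, and that this choice of cyclic order makes the realization orientable), neither of which changes the argument.
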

\begin{proof}
  Let $\G=(\G\hp0,\G\hp1)$ be a $3$-colored graph. We exhibit the 
  ribbon graph structure of $\G$. The set of vertices of the ribbon graph is
  the same, $\G\hp0$. Define the set of half-edges
  $\G\hp{1/2}:=\G\hp{0}\times \Z_3$, where $\Z_3=\{1,2,3\}.$ The map
  $p:\G\hp{1/2}\to \G\hp0$ is the projection $p=\pr1$, which satisfies
  $n_v>1$ for each vertex $v\in\G\hp 0$, since $p\inv(v)=\Z_3$.  We
  let the cyclic order for white [resp. black] vertices be $(123)$
  [resp. $(321)$]. Finally, the involution $j$ on $\G\hp{1/2}$ is
  defined as follows: given $h=(v,c)\in\G\hp{1/2}$, let $e\in \G\hp 1$
  be the edge of color $c$ at $v$ (because of regularity and
  coloring, $e$ is uniquely determined) and $w$ the other vertex $e$
  is attached to. Then let $j(h):=(w,c)$. The map $j$ is an
  involution, since any two vertices can be connected only by one
  edge.
\end{proof}

The converse of the previous lemma does not hold. For instance,
consider the graph $\mathcal W$ in Example \ref{thm:ribbons_concrete}
(or in \ref{eqn:ribonfisico}). That ribbon graph is not bipartite,
since it is a graph of a real model,  $\mtc W \in \fey^{\re}_2( \phi ^4)$.  Regular colored
bipartite graphs in more colors can be also given the structure
of a ribbon graph, however the cell-attachment does not stop at
dimension $2$ (see App. \ref{app_ribbons}).  This explains why having exactly $3$ colors is
important.   
%% 
%%  = = = = = = = = = = = = = = = = = = = = = = = = = = = = = = = = = =  
%% 
%%  = = = = = = = = = = = = = = = = = = = = = = = = = = = = = = = = = =  
%% 
%%  = = = = = = = = = = = = = = = = = = = = = = = = = = = = = = = = = =  

\section{Graph-encoded surgery} \label{gsurgery} We develop elementary 
colored-graph-encoded surgery.  The aim of this concept is
twofold.  The physical motivation is to see that we can expand the
free energy $\log(Z[J,\bar J])$ of the model in sources indexed by
ribbon graphs, having as goal the Ward Identity of the
$\phi_3^4$-theory \cite{fullward}.  We will see here that this
expansion is optimal after the identification of those ribbon
boundary-graphs with closed, possibly disconnected Riemannian
surfaces.  The second aim, also for future work, is a macroscopic
realization of the theory. This surgery shall become useful as for
computing the space the final gluing of a large number of known 
`chunks of space' represents.  \par

An obstacle to perform this surgery is that one might have not enough
simplices; in that case, by removing a simplex (or more), the space
might fall apart into a topologically simpler one and information
about its topology would be lost.  In the same line of thought, there
are subtleties concerning disk excision of open graphs (manifolds with
boundary).  With graphs, by doing what one could naively call
`removing a disk' $\mathring D^n$ in the wrong place might not create
a boundary component $\mathbb S^{n-1}$ in the way one expects to do
so. This phenomenon is better illustrated by example. Here, in the
$\phi^4_{D=2}$ (matrix)-theory, consider the graph $\R_1$ of
Ex. \ref{thm:exampleribbon}.  By cutting two color-$0$ edges one
arrives to Figure \ref{fig:toro_fronteras}, when one realizes it as a
surface. But if one follows any of the two lines of the `boundary of the
ribbon' both connect the two boundary components of the surface.

\begin{figure}
\centering
\includegraphics[width=4.6cm]{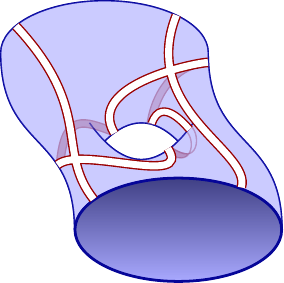} 
 \caption{   If one caps the two boundary components,
 one obtains $\R_1$ (Ex. \ref{thm:exampleribbon})  \label{fig:toro_fronteras} \normalsize}
\end{figure} 
\subsection{Colored graph surgery}

To prove our statements we need to see how to cut a $2$-disk, in
this realm, a two-bubble of a graph\footnote{ For instance, take the
  surface in Figure \ref{fig:toro_fronteras} and put a cap in one of
  the boundary components (join one ribbon). If one takes two copies
  of this and tries to glue them along the boundary, and then
  represent this back as a colored graph, one arrives at definition
  \ref{thm:surg_ribb}}.

\begin{defn}\label{thm:surg_ribb}
Let $\Rb$ and $\mathcal Q$ be (closed) $D$-colored graphs,
 $\Rb, \mathcal Q\in\Grph{D}$. 
Let $c$ be any color, and $e$ and $f$ be 
color-$c$ edges in $\Rb$ and $\mtc Q$, respectively, i.e.  
$e\in \Rb\hp 1_c$ and $f\in \mathcal Q_c\hp 1$. 
We define the graph $\Rb  \tensor*[_e]{\#}{_{\!f}} \mtc Q  $
as follows:
\begin{align*}
(\Rb  \tensor*[_e]{\#}{_{\!f}} \mtc Q) \hp 0 & = \Rb\hp 0\cup \mtc Q\hp 0  \quad \mbox{and} \quad
(\Rb  \tensor*[_e]{\#}{_{\!f}} \mtc Q) \hp 1   = 
(\Rb\hp 1 \setminus \{e\})\cup (\mtc Q \hp 1 \setminus\{ f\}) \cup \{ e',f'\},
\end{align*}
where $e'$ and $f'$ are $c$-colored edges defined by $s(e')=s(e)$,
$t(e')=t(f)$ and $s(f')=s(f)$, $t(f')=t(e)$ (see Figure
\ref{source_target}), which makes $\Rb \tensor*[_e]{\#}{_{\!f}} \mtc
Q$ a connected graph in $\Grph{D}$. We will often obviate the edges and
just write $\Rb \# \mtc Q$ if this simplification does not lead to
confusion.
\end{defn}
\begin{figure}[H] 
\centering
\includegraphics[height=2.3cm]{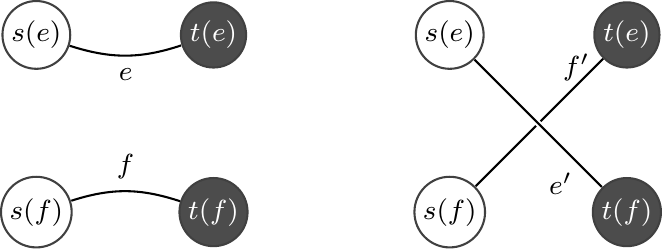}
\caption{   On Definition \ref{thm:surg_ribb}. Here $s$ and $t$ are source and target. The 
orientation of the edges is given by setting the \label{source_target} 
`white vertices' as sources.}
\end{figure}
\begin{lem} \label{thm:ECrib}
For any $\R$, $\mathcal Q \in \tcol$ and edges $e$ in $\R_c\hp1 $ and $f\in\mathcal Q_c\hp 1$
of the same color $c$, the graph 
$\Rb \tensor*[_e]{\#}{_{\!f}} \mtc Q\in \Grph{3} $ satisfies 
\[
\chi(\Rb \tensor*[_e]{\#}{_{\!f}} \mtc Q) = \chi (\Rb) + \chi (\mtc Q) - 2 \,.
\]
\end{lem}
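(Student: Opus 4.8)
The statement relates the Euler characteristics of two $3$-colored graphs to that of their connected sum $\Rb \tensor*[_e]{\#}{_{\!f}} \mtc Q$ along same-colored edges. Since every $3$-colored graph is a ribbon graph (Lemma \ref{thm:col_rib}) with a single jacket, its Euler characteristic is computed from the geometric realization $\Sigma$, which has one face for every bicolored cycle (in colors $\{12\},\{13\},\{23\}$), one edge for every edge of $\G$, and one vertex for every vertex of $\G$. So the plan is to track how the three counts $V=\#\G\hp0$, $E=\#\G\hp1$, $F=\#\G\hp2$ change under the surgery, and assemble $\chi = V - E + F$.

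\textbf{Step 1: vertices and edges.} From Definition \ref{thm:surg_ribb}, the vertex set is the disjoint union, so $V(\Rb\#\mtc Q) = V(\Rb) + V(\mtc Q)$. The edge set removes $e$ and $f$ and adds $e', f'$, so the cardinality is unchanged: $E(\Rb\#\mtc Q) = E(\Rb) + E(\mtc Q)$. Hence the contribution $V - E$ is simply additive: $(V-E)(\Rb\#\mtc Q) = (V-E)(\Rb) + (V-E)(\mtc Q)$.

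\textbf{Step 2: faces.} This is the crux. I would argue that deleting the color-$c$ edge $e$ from $\Rb$ and reconnecting via $e',f'$ to $\mtc Q$ affects only the faces containing $e$, i.e. the two bicolored cycles through $e$ — one of colors $\{cd\}$ and one of colors $\{ce\}$ where $\{d,e\} = \{1,2,3\}\setminus\{c\}$. (In a $3$-colored graph each edge lies on exactly two faces.) Similarly only the two faces through $f$ in $\mtc Q$ are affected. Tracing the bicolored cycles: the $\{cd\}$-face of $\Rb$ through $e$ and the $\{cd\}$-face of $\mtc Q$ through $f$ get spliced into a \emph{single} $\{cd\}$-face of $\Rb\#\mtc Q$ (the cycle now runs out of $s(e)$ along $e'$ into $t(f)$, around $\mtc Q$'s portion, back along $f'$ into $t(e)$, around $\Rb$'s portion). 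Likewise the two $\{ce\}$-faces merge into one. All other faces are untouched. So $2+2=4$ old faces become $2$ new faces: $F(\Rb\#\mtc Q) = F(\Rb) + F(\mtc Q) - 2$. One subtlety to dispatch: if in $\Rb$ the two faces through $e$ coincided, the count would differ — but in a $3$-colored graph the $\{cd\}$- and $\{ce\}$-faces are genuinely different faces (they involve different color pairs), so this cannot happen.

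\textbf{Step 3: assemble.} Combining, $\chi(\Rb\#\mtc Q) = (V-E+F)(\Rb\#\mtc Q) = [(V-E)(\Rb) + (V-E)(\mtc Q)] + [F(\Rb) + F(\mtc Q) - 2] = \chi(\Rb) + \chi(\mtc Q) - 2$, using $\chi(\G) = \chi(\Sigma(\G)) = V - E + F$ for a $3$-colored graph (Definition \ref{genus_ribbon}, and the jacket discussion giving $\chi = 2-2g_\J$). The main obstacle is Step 2: one must carefully verify that the bicolored cycles merge rather than split, and rule out degenerate cases (e.g. $e$ and $f$ belonging to cycles that already pass through both endpoints in a way that would cause a split instead of a merge). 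Checking this on the level of the permutations defining the ribbon structure — the color-$c$ involution is modified on exactly the half-edges at $s(e),t(e),s(f),t(f)$ — makes the face-merging transparent and handles all cases uniformly.
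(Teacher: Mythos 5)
Your proposal is correct and follows essentially the same route as the paper: additivity of the operation in vertices and edges, plus the observation that exactly the four $2$-bubbles (faces) through $e$ and $f$ are destroyed while two new ones (one per color $d\neq c$) are created, each new $(cd)$-bicolored cycle necessarily traversing both $e'$ and $f'$. Your extra care in Step 2 about ruling out face-coincidence and split-versus-merge is a welcome refinement of the same argument, not a different proof.
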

\begin{proof}
 The operation $ \tensor*[_e]{\#}{_{\!f}} $
 is additive with in vertices and edges. Thus, only
 the $2$-bubbles might change:
 \begin{align*}
 \chi(\Rb \tensor*[_e]{\#}{_{\!f}} \mtc Q)&=
 \chi (\Rb) + \chi (\mtc Q) + \mbox{(created $2$-bubbles $-$ deleted $2$-bubbles)}.
  \end{align*}
  Now, the change in the $2$-bubbles can only take place in those
  containing the edges $e$ or $f$.  Since there are three colors,
  there are two $2$-bubbles of $\Rb$, $\B^{cd}_e(\Rb)$, containing
  $e$, namely those with colors $(cd)$, $d\in \{ \hat c\}$.
  Similarly, there are two bubbles $\B^{cd}_f(\mtc Q)$ containing $f$.
  The removal of the edges $e$ and $f$ has as consequence the
  elimination of the $2$-bubbles containing $e$ and $f$, whence four
  $2$-bubbles are eliminated in the new graph.  Now, for each color
  $d\neq c$, the new edges $e'$ and $f'$ lie on the same $2$-bubble of
  $\Rb \tensor*[_e]{\#}{_{\!f}} \mtc Q$. There is exactly one new
  bubble $\B^{cd}_{e'f'}( \Rb \tensor*[_e]{\#}{_{\!f}} \mtc Q)$ for
  each $d$, thus $2$ are created in total. Therefore the $2$-bubbles
  decrease in two and the result follows.
 \end{proof}

The previous lemma justifies the notation in previous definition, since
for compact, closed  $n$-manifolds $M$ and $N$ one has
$
\chi (M\# N)=\chi(M)+\chi(N)-\chi(\mtb S^n)=\chi(M)+\chi(N)+(-1)^{n+1}-1
$.

\begin{example} \label{thm:ex_toro_tres}
We perform this first  
operation on the graphs for a torus and a sphere graphs, namely 
$\mathcal R_0$ and $\mathcal R_1$ of eq. \eqref{toroesferaribbon},
respectively:
\[
\raisebox{-.469\height}{\includegraphics[height=3.1415cm]{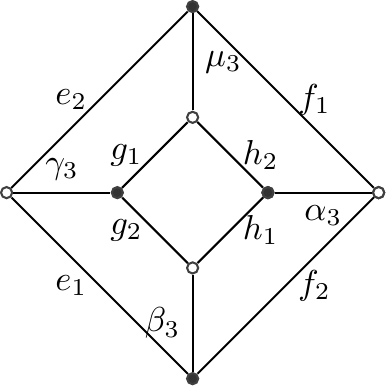}}
\,\,  \tensor*[_{\mu_3}]{\#}{_{\!{\nu_3}}} \,\,
\raisebox{-.469\height}{\includegraphics[height=3.1415cm]{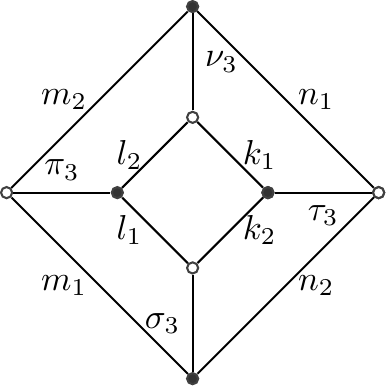}} \,\,=\,\, 
\raisebox{-.389\height}{\includegraphics[height=3.69cm]{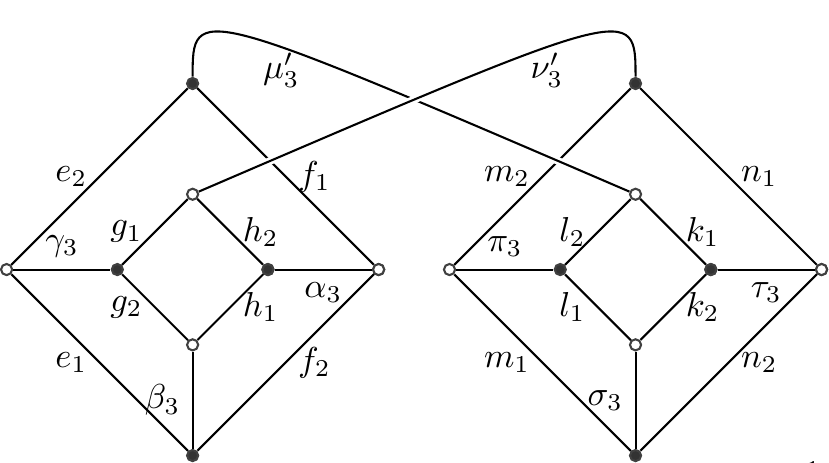}} 
\]
Here each subindex of the edges corresponds with 
its coloring. Moreover, 
the color $0$ has been replaced by $3$, since the result holds 
for graphs in abstract (not only in the QFT context). 
According to the lemma, $\chi(\R_0\# \R_1)=\chi(\R_0)+\chi(\R_1)-2$.
The same happens if we contract $ \mathcal R_0 \# \mathcal R_1$ with 
another copy $ \mathcal R_0'$ of $\R_0$ 
along the edges $\sigma_3'$ and $\beta_3$, respectively. 
We get then
$\Delta ( (\mathcal R_0 \# \mathcal R_1) \# \mathcal R_0)\simeq \T^2$.
This very graph will be used for 
the construction in Section \ref{nontriviality}.
\end{example}

\begin{rem}\label{thm:crys}
If certain $(D+1)$-colored graph $\mathcal B$ represents a ($D$-dimensional) manifold $M$ one
says that $\mathcal B$ is a \textit{crystallization} of $M$ if moreover the
number of $D$-bubbles in $\mathcal B$ is exactly $D+1$; or,
rephrasing that condition, if by removing a single arbitrary color
$i=0,\ldots, D$, one gets a connected graph $\mathcal B^{\hat i}$.
For example, the necklace graph in Example \ref{thm:criteria} is a
crystallization, but the graph \eqref{exampleforbubbles} is not.
\par The school of graph-theoretical representations of manifolds
found a crystallization $\B_M \#_{\mathrm{crys}} \B_N $ of the
connected sum $M\#N$ two manifolds, from the crystallizations of
$\B_M$ and $\B_N$ of the summands. In the orientable case (i.e. for
bipartite graphs), in order to obtain $\B_M \#_{\mathrm{crys}} \B_N $
one deletes two vertices $p \in (\B_M)\hp0_{\vphantom b\mathrm w}$ and
$q \in (\B_N)\hp0_{\mathrm b}$ and puts together, by color, the
half-edges at $s\inv(p)$ and $t\inv(q)$ created by said
vertex-removal.  \par
If one wants to use directly $ \#_{\mathrm{crys}}$ 
a first issue is that crystallizations are not that abundant in
$\fey_D(V)$. Furthermore, the serious drawback is that 
one can always find Feynman diagrams $\B$ and
$\B'$ of a model $V(\Phi)$, such that 
$\B\#_{\mtr{crys}}\B'\in\fey_D(V)$ always lies outside 
the set of Feynman graphs $\fey_D(V)$. 
Nothing forces $(\B\#_{\mtr{crys}}\B')^{\hat
  0}$ to be in the interaction potential $V$. Proposition
\ref{thm:qftcompatible} shows the advantage of using 
the operation $\#$ defined above instead. As a last reason
to prefer $\#$ over $\#_{\mtr{crys}}$ is simplicity. Both 
are related by a $1$-dipole insertion\footnote{The referee 
is acknowledged for this remark}:
\[
\#_{\mtr{crys}}= (1\mbox{-dipole contraction}) \circ \# \,.
\]
Although this relation can be inverted, it is $\#_{\mtr{crys}}$ 
which factors through a simpler operation, namely $\#$, and not 
the other way around. In the crystallization theory of manifolds
it is understandable that $\#_{\mtr{crys}}$, which gets rid of 
two vertices, is natural, for it leads a colored graph towards a simpler
one (`totally contracted', 
or properly a `crystallization'). But here, precisely we do need those vertices
to stay in the same model. 
\end{rem}

\subsection{Matrix models as tensor-models}\label{matrix_as_tensor}
The perturbative expansion of the partition function of 
the matrix model
\begin{equation}
\int \Df{[M,\bar M]} \ee^{-\Tr(\bar M M)-\lambda V(M,\bar M)} \label{pevalente}\,,
\end{equation}
as is well-known, generates ribbon graphs, which are canonically given 
the structure of a triangulated surface by taking 
the dual complex of the construction $\Sigma(\R)$ in  
Appendix \ref{app_ribbons}. Thus the interaction 
vertices of 
\begin{equation}
S[M,\bar M]=\Tr(\bar M M)+ V(M,\bar M) =\Tr(\bar M M) + \lambda\Tr((\bar M M)^p)  
\end{equation}
contribute with $(2p)$-agonal vertices
\begin{equation}
  \raisebox{-.48\height}{\includegraphics[height=2.8cm]{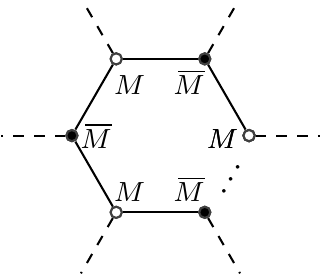}} 
  \quad = \quad \mbox{dual triangulation to }  
  \raisebox{-.475\height}{\includegraphics[height=2.8cm]{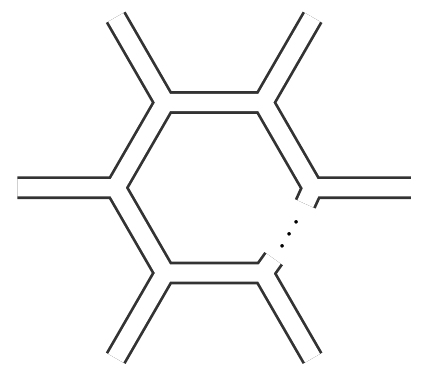}} \,. \label{con_interno}
\end{equation}
 ribbon graph in the LHS is due to the construction in Lemma
 \ref{thm:col_rib}.  We denote by $\fey_2(\phi^{2p})$ the set of
 Feynman diagrams of the theory defined by the functional
 \eqref{pevalente}. Other conventions differ from the one given so
 far. There, the loop inside the vertex, that is the ($12$)-colored
 bubble, is not drawn.  For instance, if $p=2,3,4$ one would have the
 following representation of the vertices:
\begin{equation}\raisebox{-.5\height}{
\includegraphics[height=2.25cm]{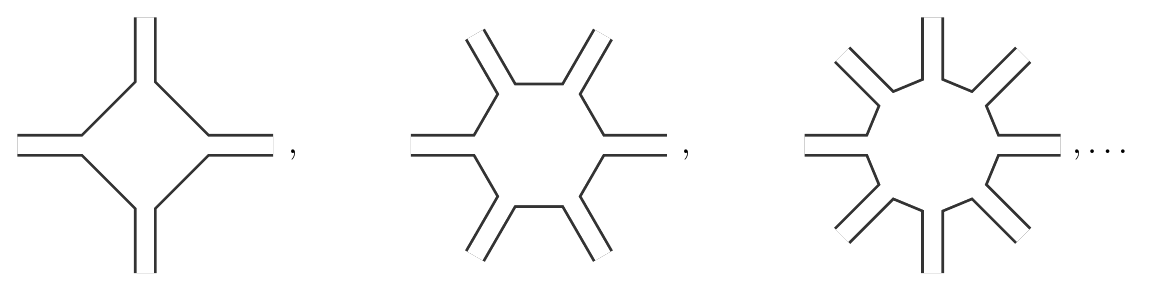}}
\label{sin_interno}
\end{equation}
that is, the colors $i=1,2$ are drawn as simple lines, and the
$0$-color double. The geometric realization is of course unaltered,
since in the only difference is to decide whether one adds a vertex,
as in the latter case (cf. \eqref{sin_interno}), or a face, as in the
representation \eqref{con_interno}.\\

A (complex) matrix model here is, in the 
tensor model context, a polynomial interaction:
\[
V(\phi, \bar\phi)= \sum_{p\in I} \lambda_p\Tr ((\phi\bar\phi)^p)) 
\qquad (\mbox{finite } I\subset  \Z_{>2}).
\]

\begin{prop} \label{thm:qftcompatible}
Fix a rank-$2$ CTM (or a complex matrix model) interaction $V$.
Let $\R $ and $\mathcal Q$ be Feynman diagrams
of that model. Let $e\in \R_0\hp0$ and $f\in \mathcal Q_0\hp0$.  Then
$\R \tensor*[_{e}]{\#}{_{\!f}} \mathcal Q \in \fey_2(V)$ as well.
\end{prop}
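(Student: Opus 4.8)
The plan is to show that the connected-sum operation $\tensor*[_e]{\#}{_{\!f}}$ on rank-$2$ Feynman graphs does not disturb the property that identifies a graph as belonging to $\fey_2(V)$, namely that each connected component of $\mathrm{inn}(\R)^{\hat 0}$ lies in the interaction set $\Upsilon$. Recall that for a rank-$2$ CTM the interaction vertices are the $(2p)$-gons, and a $3$-colored graph $\R$ is a Feynman diagram of $V$ precisely when, after deleting all color-$0$ edges (the Wick contractions), the resulting $\{1,2\}$-colored graph is a disjoint union of cycles, each cycle being one of the allowed $(2p)$-gon vertices in $\Upsilon$. So the whole point is that $\tensor*[_e]{\#}{_{\!f}}$, applied along \emph{color-$0$} edges $e\in\R_0\hp1$ and $f\in\mtc Q_0\hp1$, only re-routes propagators and leaves every interaction vertex of $\R$ and of $\mtc Q$ completely untouched.

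First I would unwind the notation: here $e\in\R\hp0_0$ in the statement should be read as $e\in\R\hp1_0$, a color-$0$ edge (a propagator), and likewise $f\in\mtc Q\hp1_0$; the surgery of Definition \ref{thm:surg_ribb} is being performed with $c=0$. Then I would argue as follows. By Definition \ref{thm:surg_ribb}, $(\R \tensor*[_e]{\#}{_{\!f}} \mtc Q)\hp0 = \R\hp0\cup\mtc Q\hp0$ and the edge set is $(\R\hp1\setminus\{e\})\cup(\mtc Q\hp1\setminus\{f\})\cup\{e',f'\}$, where $e',f'$ are new color-$0$ edges. In particular, for every color $c\neq 0$, the color-$c$ edge set is exactly $\R\hp1_c\cup\mtc Q\hp1_c$, unchanged. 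Now pass to the amputated graph: $\mathrm{inn}(\R \tensor*[_e]{\#}{_{\!f}} \mtc Q)$ is obtained by removing external legs, which does not interfere with the surgery since the surgery is performed on internal color-$0$ edges (one may assume, after an automorphism, that $e,f$ are internal; if the model has no internal propagators the claim is vacuous for nontrivial $V$, and in any case the external legs are simply carried along). Then removing \emph{all} color-$0$ edges from $\R \tensor*[_e]{\#}{_{\!f}} \mtc Q$ deletes $e',f'$ along with the rest, and what survives is
\[
(\R \tensor*[_e]{\#}{_{\!f}} \mtc Q)^{\hat 0} \;=\; \R^{\hat 0} \;\cupdot\; \mtc Q^{\hat 0}\,,
\]
the disjoint union of the $\{1,2\}$-colored graphs underlying $\R$ and $\mtc Q$. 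Hence the set of connected components of $(\R \tensor*[_e]{\#}{_{\!f}} \mtc Q)^{\hat 0}$ is the union of those of $\R^{\hat 0}$ and of $\mtc Q^{\hat 0}$, all of which lie in $\Upsilon$ by hypothesis that $\R,\mtc Q\in\fey_2(V)$. Together with connectedness of $\R \tensor*[_e]{\#}{_{\!f}} \mtc Q$ (Definition \ref{thm:surg_ribb}) and the bipartite, regularly $3$-colored structure (which is preserved because $e'$ joins $s(e)$ white to $t(f)$ black and $f'$ joins $s(f)$ white to $t(e)$ black, respecting the color-$0$ incidences at all four endpoints), this shows $\R \tensor*[_e]{\#}{_{\!f}} \mtc Q\in\fey_2(V)$.

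The only genuinely delicate point—and the step I expect to need the most care—is the bookkeeping of which vertices become external versus internal and the implicit assumption that $e,f$ can be taken internal: one must check that the definition of $\fey_2(V)$ via $\mathrm{inn}(\cdot)^{\hat 0}$ is insensitive to the propagator re-routing, i.e. that no $(2p)$-gon vertex is accidentally split or merged. But since $\tensor*[_e]{\#}{_{\!f}}$ with $c=0$ never adds, deletes, or re-colors any edge of color $1$ or $2$, and the $(2p)$-gon vertices are by definition the connected components of the $\{1,2\}$-colored subgraph, they are literally the same multiset of subgraphs before and after surgery. This makes the argument essentially a tautology once the notation is correctly set up—which is exactly the contrast with $\#_{\mathrm{crys}}$ emphasized in Remark \ref{thm:crys}, where vertices \emph{are} deleted and the interaction content genuinely changes.
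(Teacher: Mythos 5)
Your proposal is correct and is exactly the paper's argument spelled out in full: the paper's entire proof is the one-line observation that $0$-colored edges are Wick contractions, so surgery along them re-routes propagators without touching the $\{1,2\}$-colored interaction vertices. Your expanded bookkeeping (including the correct reading of $e\in\R_0\hp1$ as a color-$0$ edge) is a faithful elaboration of that same idea.
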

\begin{proof}
 It is trivial by noticing that
 $0$-colored edges are Wick contractions.
\end{proof}

Let $\Riem$ be the homeomorphism-classes of connected, closed
orientable surfaces.  We consider the empty surface also as an element
of $\Riem$.  Then we claim that the only quartic model
$\fey_2^{\C}((\phi\bar\phi)^2)$ has enough graphs to generate all of
$\Riem$.  A weaker version of the following result corresponding to
the real matrix $\phi^4$-theory might be known. In the complex theory
with potential $(\phi\bar\phi)^2$ some graphs of the real theory,
$\fey_2^{\re}(\phi^4)$, are forbidden; nonetheless:
\begin{lem} \label{thm:surj_2}
There is a surjection $\fey_2((\phi\bar\phi)^2)\to \Riem$.
\end{lem}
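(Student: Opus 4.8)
The plan is to exhibit, for each closed connected orientable surface $\Sigma_g$ of genus $g \geq 0$, a vacuum Feynman graph of the quartic complex rank-$2$ model whose geometric realization $\Delta(\cdot)$ is $\Sigma_g$; together with a graph realizing the empty surface (the empty graph, or equivalently a graph with empty dual complex) this gives surjectivity. Since a rank-$2$ colored graph is a ribbon graph (Lemma \ref{thm:col_rib}) with a single jacket, its Gur\u au degree equals the genus of that ribbon graph (Example \ref{thm:exampleribbon}), and $\Delta(\cdot)$ is the dual surface; so the task reduces to: for every $g$, build a connected bipartite $3$-colored graph, all of whose $\{1,2\}$-bubbles and $\{0,c\}$-bubbles are compatible with being a Feynman graph of the potential $(\phi\bar\phi)^2$, and whose ribbon genus is exactly $g$.

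First I would fix the building blocks. The genus-$0$ case is the sphere graph $\mathcal R_0$ of \eqref{toroesferaribbon}, which is a vacuum graph of $\fey_2((\phi\bar\phi)^2)$, and the genus-$1$ case is the torus graph $\mathcal R_1$ of the same equation, likewise in $\fey_2((\phi\bar\phi)^2)$ (both arise at order $\lambda^2$ in the quartic complex matrix model, as noted there). Then I would induct on $g$ using the connected sum $\#$ of Definition \ref{thm:surg_ribb}: given a graph $\G_g \in \fey_2((\phi\bar\phi)^2)$ with $\Delta(\G_g) \simeq \Sigma_g$, form $\G_g \tensor*[_{e}]{\#}{_{\!f}} \mathcal R_1$ along a pair of color-$0$ edges $e \in (\G_g)^{(1)}_0$, $f \in (\mathcal R_1)^{(1)}_0$. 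By Proposition \ref{thm:qftcompatible} this stays inside $\fey_2((\phi\bar\phi)^2)$ (the $\#$ along $0$-colored edges only reshuffles Wick contractions, leaving the amputated interaction structure untouched). By Lemma \ref{thm:ECrib}, $\chi(\G_g \# \mathcal R_1) = \chi(\G_g) + \chi(\mathcal R_1) - 2 = (2-2g) + 0 - 2 = -2g$, i.e. the ribbon genus increases by exactly $1$; and connectedness is part of the conclusion of Definition \ref{thm:surg_ribb}. To conclude $\Delta(\G_g \# \mathcal R_1) \simeq \Sigma_{g+1}$ rather than some other surface of the same Euler characteristic, I would invoke that for orientable closed surfaces the Euler characteristic is a complete invariant, and that $\G_g \# \mathcal R_1$ is a bipartite (hence orientable) $3$-colored graph, so its dual is an orientable closed surface; the Euler-characteristic computation then pins it down to genus $g+1$.

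The step I expect to be the main obstacle is the bookkeeping that the connected sum along $0$-colored edges genuinely realizes the topological connected sum of the dual surfaces — that is, that $\Delta(\mathcal R \tensor*[_{e}]{\#}{_{\!f}} \mathcal Q) \simeq \Delta(\mathcal R) \# \Delta(\mathcal Q)$, not merely that the Euler characteristics add correctly with a $-2$ correction (Lemma \ref{thm:ECrib} only gives the latter). The remark after Lemma \ref{thm:ECrib} and Figure \ref{fig:toro_fronteras}, together with the surgery-by-example discussion, already indicate the geometric picture: cutting a $0$-colored edge and regluing across is exactly the operation of excising a disk from each surface and identifying the boundary circles. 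I would make this precise by passing to the dual cell complex $\Delta$ of the ribbon graph (Appendix \ref{app_ribbons}): replacing $e, f$ by the crossed pair $e', f'$ replaces two faces of $\Delta(\mathcal R) \sqcup \Delta(\mathcal Q)$ incident to those edges by faces that bridge the two components, which is a connected-sum move. Once that identification is in hand, the induction closes: $\Delta(\G_g) \simeq \mathcal R_0 \# \mathcal R_1^{\# g} \simeq \Sigma_g$ for all $g \geq 0$, and together with the empty surface every class in $\Riem$ is hit, so $\fey_2((\phi\bar\phi)^2) \to \Riem$ is surjective.
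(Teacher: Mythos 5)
Your proposal is correct and follows essentially the same route as the paper: iterated connected sums $\#$ along color-$0$ edges (so Proposition \ref{thm:qftcompatible} keeps you inside $\fey_2((\phi\bar\phi)^2)$), the Euler-characteristic drop of Lemma \ref{thm:ECrib}, and the classification of closed orientable surfaces to pin down the genus — the paper merely chains $g$ copies of the genus-$1$ block $\mathcal O=(\mathcal R_0\#\mathcal R_1)\#\mathcal R_0$ in one explicit formula instead of inducting with $\mathcal R_1$. The worry in your final paragraph is unnecessary: since the resulting graph is connected, bipartite and closed, its realization is a connected closed orientable surface, so the Euler characteristic already determines it and you never need the stronger identification $\Delta(\mathcal R\#\mathcal Q)\simeq\Delta(\mathcal R)\#\Delta(\mathcal Q)$, which the paper likewise does not prove.
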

\begin{proof}
Any vacuum graph $\G_0$ in $\Grph{2+1}\hp 0 \subset
\fey_2^{\mtr{v}}((\phi\bar\phi)^2)$ yields the empty surface $\beta(\G_0)\in
\Riem$; we exclude this trivial case from now on.  Because of the
classification of orientable, closed surfaces, $\Riem \simeq \Z_{\geq
  0}$, one has to construct a graph for each $g\in \Z_{\geq 0}$.  For
$g=0$, the graph $\mtc{R}_0$ has been shown to triangulate $\mathbb S^2$. For
$g>0$ we proceed differently. Let $\mathcal O$ be the following graph:
 \begin{equation}
\mathcal O= 
 \raisebox{-.475\height}{
   \includegraphics[height=3.60cm]{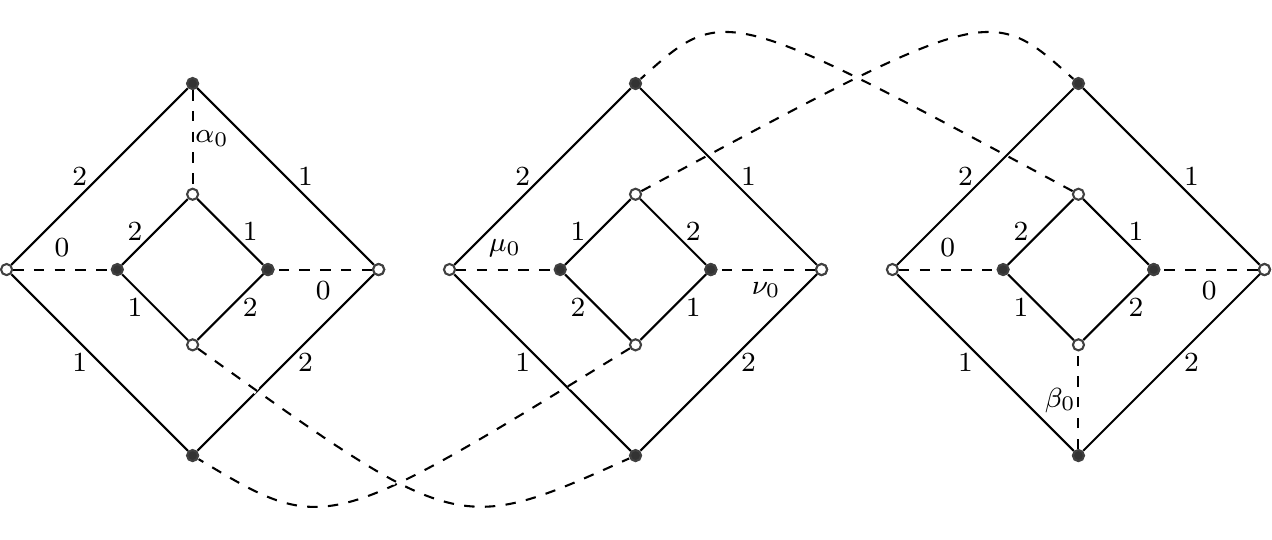}}
\label{block2cob} 
 \end{equation}

 Then consider $g$ copies $\mtc{O}\hp1,\ldots, \mtc{O}\hp g$, of that
 graph $\mathcal O$ with distinguished $0$-colored edges $\mu_0^ i$
 and $\nu_0^ i$, $i=1,\ldots,g$ as shown above.  Then we claim
 that\footnote[9]{ One can also do cell-counting: the number of
   vertices of this graph is $3\cdot 8\cdot g$; the number of edges is
   $3\cdot 12 \cdot g$ and the number of faces is
   $|\mtc{Q}_g\hp{12}|+|\mtc{Q}_g\hp{01}|+|\mtc{Q}_g\hp{02}|=2+10\cdot
   g$.  Indeed, one sees trivially that the $(12)$-colored bubbles
   are $3\cdot 2 \cdot g$. The only non-trivial part is to count the
   $(0i)$-bubbles. We see now by induction in the number $g$ of sums
   $\#$ that the number of $(01)$-bubbles $|\mtc{Q}_g\hp{01}|$ is
   $2g+1$ and it is also evident that
   $|\mtc{Q}_g\hp{01}|=|\mtc{Q}_g\hp{02}|$. }
 \begin{equation}
  \label{sum_muchos}
 \mathcal{Q}_g:=\mathcal O\hp1 \tensor*[_{\mu_0^ 1}]{\#}{_{\!\nu_0^2}} (\mathcal O\hp2  
  \tensor*[_{\mu_0^ 2}]{\#}{_{\!\nu_0^3}} ( \mathcal O\hp 3 \tensor*[_{\mu_0^ 3}]{\#}{_{\!\nu_0^4}}\ldots 
  \tensor*[_{\mu_0^ {g-2}}]{\#}{_{\!\nu_0^{g-1}}}(
 \mathcal O \hp{g-1} \tensor*[_{\mu_0^{g-1}}]{\#}{_{\!\nu_0^g}}  \mathcal O \hp{g})_{\ldots}))
 \end{equation}
 has genus $g$. Indeed, after Lemma
 \ref{thm:ECrib} each sum $\#$ in \eqref{sum_muchos} decreases the
 Euler characteristic in $2$. Since each summand $\mathcal O$ has
 $\chi(\mathcal O)=0$ (cf.  Example \ref{thm:ex_toro_tres}),
 \[\chi(\mathcal{Q}_g)=(g-1)\cdot \chi(\mathcal O)-2(g-1)=2-2g\,.   \] 
 \end{proof}

\begin{rem} \label{alphabeta}
By the same token, one can also glue by $\alpha_0$ and $\beta_0$ instead of by $\mu_0$ and $\nu_0$.
The resulting graph 
$
 \mathcal{K}_g:=\mathcal O\hp1 \tensor*[_{\beta_0^ 1}]{\#}{_{\!\alpha_0^2}} (\mathcal O\hp2 \tensor*[_{\beta_0^ 2}]{\#}{_{\!\alpha_0^3}}  (\ldots
 \mathcal O \hp{g-1} \tensor*[_{\beta_0^{g-1}}]{\#}{_{\!\alpha_0^g}}  \mathcal O \hp{g})_{\ldots}))
$ has genus $g$.
\end{rem}
 
\begin{example}\label{keineSphere}
In view of Lemma \ref{thm:surj_2},
the rank-$3$ model with interaction vertex set to $\mathcal Q_g$, for 
$g\geq 1$ (after
properly changing the color $0$ into $3$),   
generates no melons at all. A lower-order
polynomial interaction with the same characteristic is 
\[V(\phi,\bar\phi)=\raisebox{-.46\height}{
  \includegraphics[height=2.25cm]{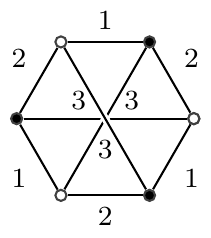}}\]
This is consequence of the lower-bound for the degree (i), mentioned
in example \ref{thm:criteria}.  Thus in rank-$3$ theories, an interaction
vertex with suitable high degree (e.g. $g=3$) has degree
$\omega(\G)\geq 3$. Thus generation of spheres is not guaranteed (at least not
before renormalization, if one does not introduce quadratic
counterterms).  
\end{example}

For any graph 
$\mathcal F\in\Grph{D+1}\hp {2k} \setminus \mathrm{im}\, C$, i.e. 
for each graph $\mathcal F$
that is not the cone of a $D$-colored graph, 
the set of inner propagators is not empty.  For such graphs one can 
increase the number of external legs as follows:
\begin{defn} Let $k\in \Z_{\geq 0}$ and $\mathcal F\in\Grph{D+1}\hp {2k} \setminus \mathrm{im}\, C$. Let $e_0=(\overline{ap})$ an internal edge of $\mtc F$. 
We denote by $\mathcal F \curlywedge e_0$ or $\mathcal F  \curlywedge(\overline{ap})$
the graph obtained 
from $\F$ by \textit{opening} the $0$-colored edge $e_0$. By that, we
mean that one creates two external legs (or leaves), one \textit{at} $a$ and
one \textit{at} $p$, so $\mathcal F \curlywedge e_0 \in \Grph{D+1}\hp
{2k+2}$. Pictorially,
\[
  \begin{tikzpicture}%
    [
    baseline=-1.2ex,shorten >=.1pt,%node distance=18mm, 
    semithick,auto,
    every state/.style={fill=white,draw=texto,inner sep=.3mm,text=texto,minimum size=0},
    accepting/.style={fill=white,text=black},
    initial/.style={white,text=texto}
]
 
       \draw [dashed] (-2,0) -- (-3,.6);
   %\draw [dashed] (-2,0) -- (-3,.1);
   \draw [dashed] (-2,0) -- (-3,-.6);
        \draw[dashed] (-1.4,0 ) circle (.4) ;  
        \node at (-.7,0) {$e_0$} ; 

   \node at (-3.,0.1) {\scriptsize $\vdots$};
%    \node at (3.,0.1) {\scriptsize $\vdots$};
    \node[state,accepting ] at (-2,0) {$\,\,\, \phantom{a}\mathcal{F} \phantom{b}\,\,\,$};
   
      \begin{scope}[xshift=1cm]
    
    ;
\node at (-.751,-.1250) {$\mapsto$};
\begin{scope}[xshift=3cm]
   \draw [dashed] (-2,0) to[bend left] (-1,.6);
    \draw [dashed] (-2,0) to[bend right] (-1,-.6);
       \draw [dashed] (-2,0) -- (-3,.6);
   %\draw [dashed] (-2,0) -- (-3,.1);
   \draw [dashed] (-2,0) -- (-3,-.6);
   \node at (-3.,0.1) {\scriptsize $\vdots$};
   %\node at (3.,0.1) {\scriptsize $\vdots$};
   \node[state,accepting  
    ] at (-2,0) {$\mathcal{F} \curlywedge e_0 $};
   \end{scope}  
\end{scope} 
\end{tikzpicture}
\]
\end{defn} 

\begin{thm} \label{thm:gen_bordisms}
Let $2$-$\Cob$ be the set of all orientable $2$-bordisms. 
There exists a surjection 
\[ \xi:\fey_2((\phi\bar\phi)^2)\to 2\mbox{-}\Cob\,.\]
\end{thm}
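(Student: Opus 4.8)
The plan is to realise the surjection through the geometric-realisation functor: to an open Feynman diagram $\G\in\fey_2((\phi\bar\phi)^2)$ one assigns the compact orientable surface-with-boundary $\Delta(\G)$, regarded as a $2$-bordism whose boundary circles are the components of $\Delta(\partial\G)$. Orientability is automatic from bipartiteness of $\G$, and the way the boundary circles are partitioned into incoming and outgoing will be immaterial: realising the underlying surface is the substantive content, and once $\Delta(\G)$ is the right surface, any prescribed bordism structure on it is obtained, up to the equivalence defining $2$-$\Cob$, by a boundary-permuting self-diffeomorphism of $\Delta(\G)$. Using the classification of compact connected orientable surfaces by the pair $(g,b)$ (genus, number of boundary circles) and the fact that every orientable $2$-bordism is a disjoint union of such pieces, the task reduces to producing, for every $(g,b)$ with $g,b\geq0$, a \emph{connected} graph $\G_{g,b}\in\fey_2((\phi\bar\phi)^2)$ with $\Delta(\G_{g,b})\cong\Sigma_{g,b}$; disconnected bordisms are then disjoint unions of such graphs (one allows finite disjoint unions of Feynman diagrams, or, equivalently, reads the statement as concerning connected bordisms, which is already covered).

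For $b=0$ this is exactly Lemma \ref{thm:surj_2}: one takes the vacuum graph $\G_0\in\fey_2^{\mtr{v}}((\phi\bar\phi)^2)$ which is $\mathcal R_0$ for $g=0$ and $\mathcal Q_g$ for $g\geq 1$, so that $\Delta(\G_0)\cong\Sigma_{g,0}$. For $b\geq 1$ I would start from such a $\G_0$, attach $b$ copies of the sphere graph $\mathcal R_0$ by connected sums along color-$0$ edges -- by Proposition \ref{thm:qftcompatible} this stays inside $\fey_2((\phi\bar\phi)^2)$, and by Lemma \ref{thm:ECrib} (with $\chi(\Delta(\mathcal R_0))=2$) the underlying surface is still $\Sigma_{g,0}$ -- and then apply the opening operation $\curlywedge$ to one internal color-$0$ edge $e_0^i$ lying inside the $i$-th attached $\mathcal R_0$-summand, for $i=1,\dots,b$, setting
\[
\G_{g,b}\;:=\;\bigl(\,\G_0\;\#\,\mathcal R_0\;\#\cdots\#\,\mathcal R_0\,\bigr)\,\curlywedge e_0^1\curlywedge\cdots\curlywedge e_0^{b}\,.
\]
Since opening a color-$0$ edge only opens a Wick contraction and leaves $\mathrm{inn}(\,\cdot\,)^{\hat 0}$ untouched, $\G_{g,b}$ is again a connected diagram of the model. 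The key computation is that opening a single color-$0$ edge $e_0=\overline{ap}$ adds exactly one component to the boundary graph: by Definition \ref{def:open} the two new outer vertices are joined by precisely one color-$1$ edge (the $(01)$-bubble through $e_0$, now an open path) and one color-$2$ edge, so that component is a bicolored $2$-cycle, i.e.\ it represents $\mathbb S^1$. Placing the $e_0^i$ in distinct $\mathcal R_0$-summands keeps the $b$ new circles disjoint, so $\Delta(\partial\G_{g,b})$ is $b$ disjoint circles; and each $\curlywedge$ performed there is an honest disk excision, so $\Delta(\G_{g,b})$ is $\Sigma_{g,0}$ punctured $b$ times, whence $\Delta(\G_{g,b})\cong\Sigma_{g,b}$ and surjectivity follows.

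\textbf{Main difficulty.} The delicate step is precisely the phenomenon illustrated around Figure \ref{fig:toro_fronteras}: opening a color-$0$ edge at the wrong location need not puncture the surface along an embedded disk -- the two sides of the cut arc can land on a single boundary circle instead of bounding a fresh one -- so $\curlywedge$ cannot be applied to arbitrary edges. Confining the openings to freshly attached $\mathcal R_0$-summands, where the relevant $(01)$- and $(02)$-bubbles are short and explicit, is what guarantees that each $\curlywedge$ excises a genuine disk and that the $b$ resulting boundary circles are disjoint; verifying this carefully, and reconciling the graph-level bookkeeping with the Euler-characteristic/bubble-homology count $\chi(\Delta(\G_{g,b}))=2-2g-b$, is the technical core of the proof. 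A minor secondary point is the reduction from connected Feynman graphs to possibly disconnected bordisms, handled by disjoint unions as indicated above.
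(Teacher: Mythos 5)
Your strategy is in essence the paper's own: realize the closed genus-$g$ surface by connected sums of blocks (Lemma \ref{thm:surj_2}, Prop.~\ref{thm:qftcompatible}, Lemma \ref{thm:ECrib}), then manufacture each boundary circle by applying $\curlywedge$ to a color-$0$ edge whose $(01)$- and $(02)$-bubbles are confined to a single block, so that the two new leaves close up into a bicolored $2$-cycle of $\partial\G$ and the cut is an interior disk excision. The paper packages this with a chain of $m=\max\{g,C\}$ copies of two calibrated blocks $\mathcal O$ and $\mathcal N$, each carrying dedicated chaining edges $\mu_0,\nu_0$ and dedicated opening edges $\alpha_0,\beta_0$ whose $2$-bubbles avoid $\mu_0,\nu_0$; your version attaches sphere summands to $\mathcal Q_g$ instead, but the mechanism is the same.

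The genuine gap is that you never verify that a suitable $e_0^i$ exists inside an attached $\mathcal R_0$, and under the most natural reading of your formula it does not. Label the four color-$0$ edges of $\mathcal R_0$ as $A,B,C,D$; for the planar contraction the two $(01)$-bubbles pair them as $\{A,B\},\{C,D\}$ and the two $(02)$-bubbles as a \emph{different} pairing, say $\{B,C\},\{A,D\}$. If the $i$-th sphere is glued over a single edge, say $A$, then $C$ is the unique edge both of whose bicolored bubbles avoid $A$, and your argument closes. But in a linear chain $\G_0\#\mathcal R_0\#\cdots\#\mathcal R_0$ every intermediate sphere surrenders \emph{two} of its four color-$0$ edges to connected sums, and then no survivor has both its $(01)$- and its $(02)$-bubble clean: each edge has two distinct ``bubble partners'', so both partners would have to survive, which is impossible once two edges are consumed. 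For such an edge the $(01)$- or $(02)$-path out of one new leaf escapes through a connecting edge into the neighbouring block, and the opening fails to produce a fresh disjoint circle --- exactly the Figure \ref{fig:toro_fronteras} pathology you cite. So you must fix the attachment pattern (e.g.\ glue every to-be-opened sphere over a single edge, onto distinct color-$0$ edges of $\G_0$, first enlarging $\G_0$ by unopened summands when $b$ exceeds the supply) and record the bubble computation; this is precisely the calibration the paper builds into $\mathcal O$ and $\mathcal N$ with the remark that none of the $2$-bubbles of $\alpha_0$ and $\beta_0$ contains $\mu_0$ or $\nu_0$.
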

\begin{proof}
Consider the graph $\mathcal O$ in \eqref{block2cob} and let 
\begin{equation}
\mathcal{N}:=
\raisebox{-.475\height}{
\includegraphics[height=3.60cm]{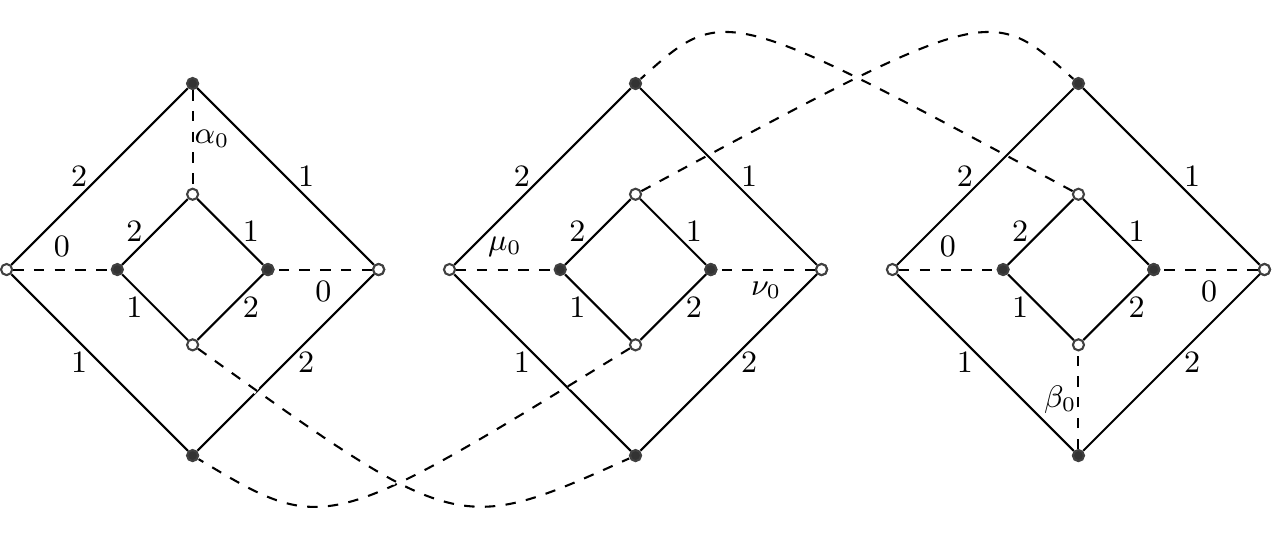}}
\end{equation}
 Obviously, both graphs are in $\fey_2((\phi\bar\phi)^2)$.  Let $M:
 \sqcup^C \mathbb S^1 \to \sqcup^B \mathbb S^1$ be an arbitrary
 element in $2$-$\Cob$.  That is, two arbitrary closed $1$-manifolds,
 $\sqcup^B \mathbb S^1$ and $\sqcup^{C} \mathbb S^1$ are cobordant via
 $M$, a genus-$g$ orientable, compact surface with boundary. We now find
 a graph $\mathcal Q_{g,B,C}$ which (dually) triangulates $M$.  \par
 Remark that the case $B=C=0$ is the statement in Lemma
 \ref{thm:surj_2}. Thus we can suppose $0<B\leq C$ and set  
 $m:=\max\{g,C\}>0$. Define the following Feynman-graph-valued
 functions:
\[
 \mathcal X: \{0,1\}\to \fey_2 ((\phi\bar\phi)^2), \qquad
 \mathcal X(\epsilon)= \begin{cases}
                         \mathcal N & \mbox{ if } \epsilon=0
                         \,, \\
                         \mathcal O & \mbox{ if } \epsilon=1 \, ,    
                       \end{cases}
 \] 
 and
 \[ \mathcal S: \{0,1\}\times \{0,1,2\} \to \fey_2 ((\phi\bar\phi)^2), \qquad
 \mathcal S(\epsilon,i)= \begin{cases}
 \phantom{(}\mathcal X(\epsilon)& \mbox{ if } i=0\,, \\
                         \phantom{(}\mathcal X(\epsilon)\curlywedge \alpha_0 & \mbox{ if } i=1\,, \\
                         (\mathcal X(\epsilon)\curlywedge \alpha_0) \curlywedge\beta_0  & \mbox{ if } i=2\,.
                       \end{cases}\]
(For example, $\mathcal S(1,2)=\mathcal O\curlywedge \alpha_0 \curlywedge\beta_0$, or explicitly, 
\[\raisebox{-.475\height}{
\includegraphics[height=3.29cm]{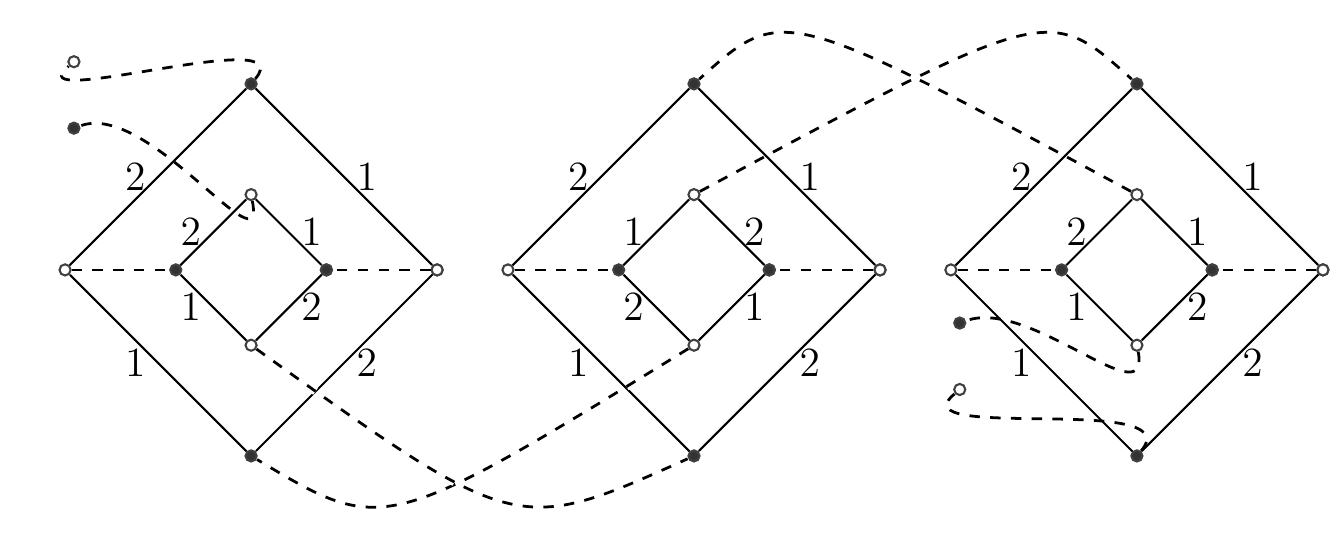}}\] where
the external lines come from cutting $\alpha_0$ and $\beta_0$.)
Notice that the only difference between $\mathcal O$ and $\mathcal N$
is the edge-coloration of four edges, namely  the central $(12)$-bicolored bubble adjacent to $\mu_0$ 
and $\nu_0$. The following connected sum is well-defined:
\[
G((\epsilon_1,\iota_1),\ldots,(\epsilon_m,\iota_m))=
\mathcal S(\epsilon_1,\iota_1) \tensor*[_{\mu_0^ 1}]{\#}{_{\!\nu_0^2}}  \mathcal S(\epsilon_2,\iota_2)  
  \tensor*[_{\mu_0^ 2}]{\#}{_{\!\nu_0^3}}  \ldots 
 \tensor*[_{\mu_0^{m-1}}]{\#}{_{\!\nu_0^m}}  \mathcal S(\epsilon_m,\iota_m)\,,
\]
where $\mu^i_0$ and  $\nu_0^i$ refer, respectively, to the $\mu_0$ and $\nu_0$ edge
of $\mathcal X(\epsilon_i)$. Define then the graph
$\mathcal Q_{g,B,C}$ by evaluating 
$G((\epsilon_1,\iota_1),\ldots,(\epsilon_m,\iota_m))$
at 
\[
\epsilon_k=\begin{cases}
            1& \mbox{ if } 1\leq k\leq g,\\
            0& \mbox{ if } k>g, 
           \end{cases}
\and
\iota_k=\begin{cases}
            2 & \mbox{ if } 1\leq k\leq B, \\
            1 & \mbox{ if } B<k\leq C, \\
            0 & \mbox{ if } k>C.
           \end{cases}
\]
Each $0$-colored-edge removal creates exactly a boundary component
$\mathbb S^1$, for none of the $2$-bubbles of $\alpha_0$ and $\beta_0$
implies the edges $\mu_0$ and $\nu_0$. Hence one has indeed created
$C+B$ boundaries $\mathbb S^1$.  If we cap them (closing all the
broken $\alpha_0$ and $\beta_0$) we get $\mathcal{Q}_{g,0,0}$ which is
the $\mathcal Q_g$ of Lemma \ref{thm:surj_2} and hence has genus $g$.
\end{proof}

\begin{example}\label{2cob}
Consider the following genus-$2$ bordism $W: \sqcup^3\mathbb S^1 \to \sqcup^2 \mathbb S^1 \in 2$-$\Cob$,
\[\hspace{-.7cm}W=\hspace{-.2cm}\raisebox{-.45\height}{\includegraphics[height=2.5cm]{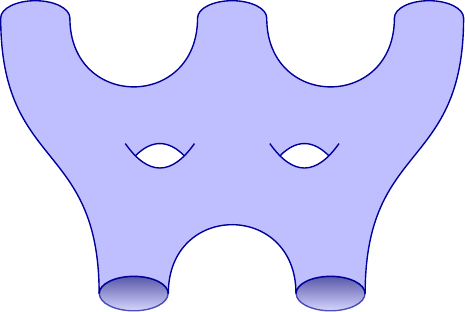}}\]
The construction for $\mathcal Q_{2,2,3}$ is, explicitly, 
the graph in Figure \ref{fig:bord_explicit}, in which we represented by \Rightscissors $\phantom{.}$the opening
of the $0$-colored edges. The theorem states that 
$\mathcal Q_{2,2,3}\in\fey_2((\phi\bar\phi^2))$
triangulates $W$.
\end{example}

  \begin{figure} 
\center
\raisebox{-.45\height}{\includegraphics[width=.8171\textwidth]{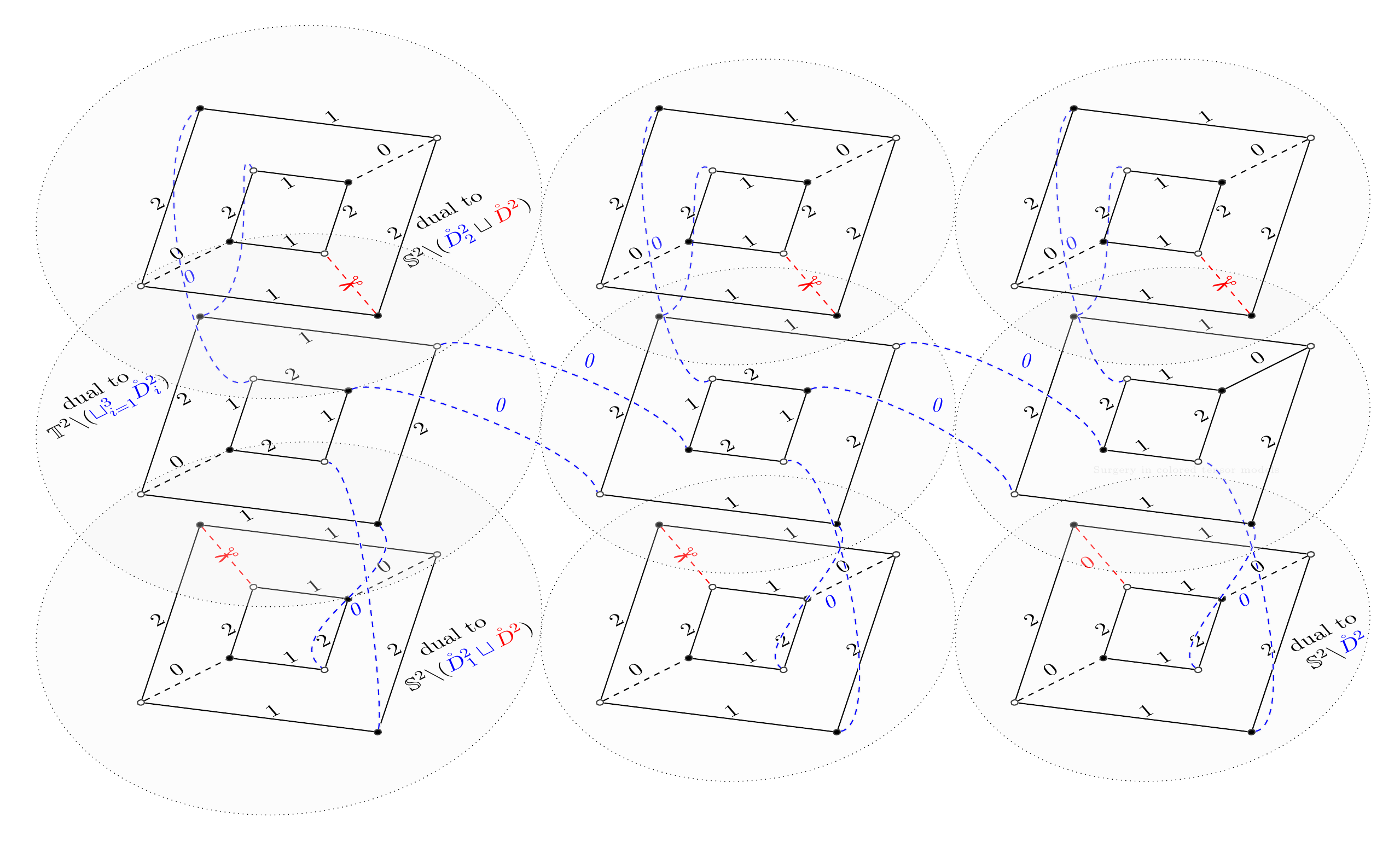}}
\caption{\label{fig:bord_explicit}The triangulation of $W$ given by the algorithm of Theorem \ref{thm:gen_bordisms}}
\end{figure}
%% 
%%  = = = = = = = = = = = = = = = = = = = = = = = = = = = = = = = = = =  
%% 
%%  = = = = = = = = = = = = = = = = = = = = = = = = = = = = = = = = = =  
%% 
%%  = = = = = = = = = = = = = = = = = = = = = = = = = = = = = = = = = =  

\section{Topological completeness of the boundary sector of the $\phi_3^4$-theory} 
\label{nontriviality}
 Let $\feyn$ be (shorthand for) the set of connected Feynman graphs of
the rank-$3$ $\phi^4$-colored tensor theory with the three 
vertices in eq. \eqref{vertices}. Throughout, $\Riem$ will be 
the set of homeomorphism-classes of closed orientable
\textit{connected} Riemannian surfaces,   $\Riem\simeq \Z_{\geq 0}$.
Further, we denote by $\Riemd$ be the set of 
\textrm{possibly disconnected} closed, orientable
Riemannian surfaces. In order to proof the main 
result of this section, we need. 
\par
It is trivial to construct Feynman graphs $\G$ which have disconnected
boundary, just by letting $\G$ itself be disconnected. This would be
rather useless, though,  for $\G$ would be cancelled out in the generating functional of
connected correlation functions.  The previous lemma says that,
nevertheless, it is possible to `separate boundaries' at
wish. Moreover, it tells us how to generate \textit{connected} graphs
with a precise disconnected boundary. 
\begin{lem}\label{thm:sepa}
The following two graphs $\mathcal M,\mathcal P 
\in \feyn$ separate boundary components:
\begin{align}
\mathcal{M}  :=   
 \raisebox{-.45\height}
{\includegraphics[height=2.00cm]{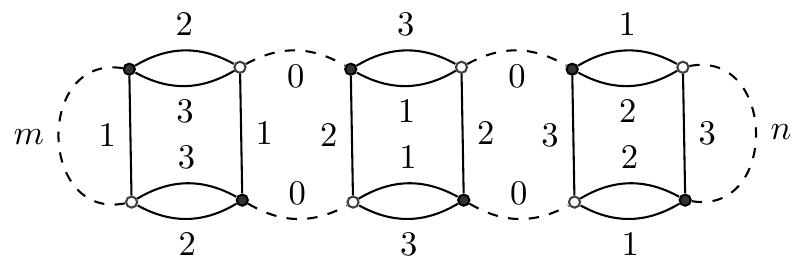}
}  \qquad
\mathcal{P} := 
\raisebox{-.45\height}{\includegraphics[height=2.0cm]{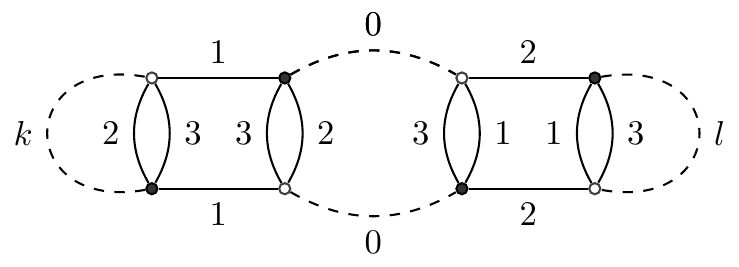}}
\end{align}
This means that, if $\G$ and $\H$ are $\phi^4_3$-Feynman graphs 
that are not in the image $\im \,C$ 
of the cone, $\G, \H \in  \feyn \setminus \im \,C$, and $g$
and $h$ are internal propagators of $\G$ and $\H$, respectively, then one has:
\begin{equation} \label{eq:separaciones}
\partial (\G  \tensor*[_{g}]{\#}{_{k}}  \mathcal P  \tensor*[_{l}]{\#}{_{h}}   \H ) = \partial \G \sqcup \partial \H\,  
\qquad \mbox{and}\qquad
\partial ( \G  \tensor*[_{g}]{\#}{_{m}} \mathcal M \tensor*[_{n}]{\#}{_{h}}  \H ) = \partial \G  \sqcup \partial \H\,.
\end{equation}
We abbreviate $ \G  \tensor*[_{g}]{\#}{_{k}} \mathcal P \tensor*[_{l}]{\#}{_{h}}  \H $
as $
\G \# \mathcal P \# \H$ and choose a similar simplification for 
the other graph. Both $\G \# \mathcal P \# \H$ and $
\G \# \mathcal M \# \H$ are in $\feyn$. 
\end{lem}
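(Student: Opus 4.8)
The plan is to verify the two boundary identities in \eqref{eq:separaciones} by tracking what the $(0c)$-bicolored paths of $\G$ and $\H$ become after the two connected sums, and by analyzing the new $(0c)$-paths that pass through the inserted graph $\mathcal P$ (resp. $\mathcal M$). First I would recall that, by Definition \ref{thm:surg_ribb}, forming $\G \tensor*[_{g}]{\#}{_{k}} \mathcal P$ along a color-$0$ edge $g=\overline{bw}$ of $\G$ and a color-$0$ edge $k$ of $\mathcal P$ deletes $g$ and $k$ and re-wires the two color-$0$ half-edges so that $\G$ and $\mathcal P$ become joined into a single connected graph; similarly for the second sum along $l$ (another color-$0$ edge of $\mathcal P$) and $h$ of $\H$. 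Since Proposition \ref{thm:qftcompatible} guarantees $\#$ preserves membership in $\fey_2$ and its rank-$3$ analogue is exactly the statement that opening and re-gluing $0$-colored edges leaves $\mathrm{inn}(\cdot)^{\hat 0}$ untouched, both $\G\#\mathcal P\#\H$ and $\G\#\mathcal M\#\H$ lie in $\feyn$; this is the easy half and I would dispatch it first. Note also that the hypothesis $\G,\H\notin\im\,C$ is precisely what makes internal $0$-edges $g,h$ available, so the construction is non-vacuous.

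The core of the argument is the boundary computation. The key observation, which I would isolate as the design principle behind $\mathcal P$ and $\mathcal M$, is that $\partial\G$ is built from the $(0c)$-bicolored paths of $\G$ that run between outer vertices, and these outer vertices and their attached external $0$-edges are \emph{not} touched by the surgery along $g,h$. So after the double connected sum, every old $(0c)$-path of $\G$ either survives intact (if it did not use $g$) or gets redirected through $\mathcal P$. I would then check, color by color $c=1,2,3$, that in $\mathcal P$ (resp. $\mathcal M$) any $(0c)$-path entering through the half-edge coming from $g$ exits through the half-edge that re-glued to $k$ and \emph{returns} into $\G$ — i.e. the path closes up inside the $\G$-side — rather than travelling across $\mathcal P$ to the $l,h$ junction and into $\H$. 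This is where the specific internal coloring of the two vertices in $\mathcal M,\mathcal P$ matters: the $(12)$-bicolored bubble(s) adjacent to the gluing edges must be arranged so that the $(01)$- and $(02)$-paths (the two colors that, together with $0$, could carry a path across) are "blocked," while the $(03)$-paths likewise do not bridge the two sides. Concretely I would trace each of the three bicolored paths through the explicit graphs \texttt{separatrixP} and \texttt{separatrixM} and confirm that no $(0c)$-path has one endpoint among the external legs of $\G$ and the other among those of $\H$. Granting this, $(\partial(\G\#\mathcal P\#\H))^{(1)}_c$ splits as the color-$c$ edges coming purely from $\G$-side paths together with those from $\H$-side paths, with no cross edges, and since the vertex set $(\partial(\cdot))^{(0)}=\G^{(0)}_{\mathrm{out}}\cupdot\H^{(0)}_{\mathrm{out}}$ is already the disjoint union, we get $\partial(\G\#\mathcal P\#\H)=\partial\G\sqcup\partial\H$; the identical bookkeeping with $\mathcal M$ gives the second identity.

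The main obstacle I anticipate is purely the path-tracing through the inserted graphs: one must be careful that inserting $\mathcal P$ does not accidentally \emph{create} a new boundary component (an extra $\mathbb S^1$, i.e. a $(0c)$-cycle among new outer vertices) — but since the surgery along color-$0$ edges introduces \emph{no} new external legs (unlike the $\curlywedge$ operation), $\mathcal P$ and $\mathcal M$ contribute only inner vertices and the outer-vertex set is unchanged, so no spurious component appears. The one genuine subtlety is to confirm that the $(0c)$-path which used the now-deleted edge $g$ does not simply run $\G \to \mathcal P \to \H$; this is exactly why \emph{two} gluings and a carefully two-vertex graph are needed rather than a single gluing, and checking it is a finite case analysis over the three colors and the handful of vertices in \texttt{separatrixP}/\texttt{separatrixM}. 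Once that finite check is done, the statement "$\G\#\mathcal P\#\H,\ \G\#\mathcal M\#\H\in\feyn$" follows from the rank-$3$ version of Proposition \ref{thm:qftcompatible} as already noted, completing the proof.
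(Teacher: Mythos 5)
Your proposal is correct and follows essentially the same route as the paper's proof: both reduce the claim to (i) the outer-vertex sets being untouched by surgery along internal $0$-edges, (ii) a two-sided path-tracing argument showing that every $(0a)$-bicolored path of $\G$ (resp.\ $\H$) that used $g$ (resp.\ $h$) re-closes through the inserted graph back into its own side, while no $(0a)$-path bridges the two sides, and (iii) membership in $\feyn$ because only color-$0$ (propagator) edges are cut and re-glued. The finite color-by-color check you defer is exactly the case analysis the paper carries out explicitly (e.g.\ replacing $g$ by $g'\nu_1\alpha_0\sigma_1\beta_0\rho_1 k'$ for $a=1$ and by $g'\rho_a k'$ otherwise), so nothing essential is missing.
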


We do the proof for the equality in which 
$\mathcal P$ occurs in one of the summands (by replacement of $\mathcal P$ 
by $\mathcal M$ one can get \textit{mutatis mutandis} readily a proof). 
If 
$\mathcal{Y} $ denotes $\G \# \mathcal P\# \H $ 
then Lemma \ref{thm:sepa} can be easily grasped as follows:
\[
\partial \mathcal{Y} = \partial 
\Bigg(\raisebox{-3pt}{
  \begin{tikzpicture}%
    [scale=.9,
    baseline=-1.2ex,shorten >=.1pt,%node distance=18mm, 
    semithick,auto,
    every state/.style={fill=white,draw=texto,inner sep=.3mm,text=texto,minimum size=0},
    accepting/.style={fill=white,text=black},
    initial/.style={white,text=texto}
]
% \node at (-3.5,0) {$\mathcal{Y}\,=\,$};
\node at (-1,-.3) {$g'$};
\node at (-1,.3) {$k'$};
\node at ( 1,-.3) {$l'$};
\node at ( 1,.3) {$h'$};
    \draw [dashed] (0,0) to[bend right] (1,-.6);
    \draw [dashed] (0,0) to[bend left]  (1,.6);
    \draw [dashed] (0,0) to[bend right] (-1,.6);
    \draw [dashed] (0,0) to[bend left] (-1,-.6);
    \draw [dashed] (-2,0) to[bend left] (-1,.6);
    \draw [dashed] (-2,0) to[bend right] (-1,-.6);
    \draw [dashed] (2,0) to[bend right] (1,.6);
    \draw [dashed] (2,0) to[bend left] (1,-.6);
    \draw [dashed] (2,0) -- (3,.6);
  % \draw [dashed] (2,0) -- (3,.1);
   \draw [dashed] (2,0) -- (3,-.6);
       \draw [dashed] (-2,0) -- (-3,.6);
   %\draw [dashed] (-2,0) -- (-3,.1);
   \draw [dashed] (-2,0) -- (-3,-.6);
   \node at (-3.,0.1) {\scriptsize $\vdots$};
   \node at (3.,0.1) {\scriptsize $\vdots$};
   \node[state,accepting,  %drop shadow
   ] at (0,0) {$\,\, {\mathcal P}\,\,$};
    \node[state,accepting  
    ] at (2,0) {$\,\, {\mathcal \H}\,\,$};
    \node[state,accepting  
    ] at (-2,0) {$\,\, {\mathcal \G}\,\,$};
  \end{tikzpicture}}\Bigg)
=\partial\left(\raisebox{-.15\height}{
 \begin{tikzpicture}
    [scale=.9,
    baseline=-1.2ex,shorten >=.1pt,  
    semithick,auto,
    every state/.style={fill=white,draw=texto,inner sep=.3mm,text=texto,minimum size=0},
    accepting/.style={fill=white,text=black},
    initial/.style={white,text=texto}
]
 
     \draw [dashed] (-1.5,0) circle (.38);
     \node at (-1.0-.3,0) {$g$};
   \draw [dashed] (-2,0) -- (-3,.6);
    \node at (-3.,0.1) {\scriptsize $\vdots$};
   \draw [dashed] (-2,0) -- (-3,-.6);
   \node[state,accepting  
   ] at (-2,0) {$\,\, {\mathcal \G} \,\,$};
  \end{tikzpicture}
 }\right) \sqcup \partial
\left(\!\!\raisebox{-.15\height}{
 \begin{tikzpicture}%
    [scale=.9,
    baseline=-1.2ex,shorten >=.1pt,
    semithick,auto,
    every state/.style={fill=white,draw=texto,inner sep=.3mm,text=texto,minimum size=0},
    accepting/.style={fill=white,text=black},
    initial/.style={white,text=texto}
]
    \draw [dashed] (2,0) -- (3,.6);
   \node at (3.,0.1) {\scriptsize $\vdots$};
   \draw [dashed] (2,0) -- (3,-.6);
        \node at ( 1.0+.2,0) {$h$};
    \draw [dashed] (1.4,0) circle (.38);
  \node[state,accepting  
  ] at (2,0) {$\,\, {\mathcal \H} \,\,$};
  \end{tikzpicture}
 }\right) =  \partial \G \sqcup \partial \H\,.
\]
\begin{proof} It is obvious that $\mtc Y$ is in $\feyn$,
since the  $3$-bubbles of the amputation
$\mathrm{inn}(\mtc Y)^{\hat 0}$ are quadratic vertices. 
We verify that the edge and vertex sets of both 
$\partial \G \sqcup \partial \H$ and $\partial \mathcal{Y}$ are the same, as
well as the adjacency. If $\G$ and $\H$ have no external edges, 
then the result is a trivial equality of empty graphs. Then we assume that  
at least one of them has external legs.
\begin{itemize}
\item[\scriptsize$\bullet$] \textit{Vertices.}  
  Notice that $\#$ is additive in the number of external vertices of its graph summands.
  Since $\mathcal P$ has no external vertices, 
  if follows from the definition of boundary graph 
  that the number of vertices of $\partial{\mathcal Y} $
  is the sum of those of  $ \partial \G $ plus those of
   $ \partial \H$. 
  Thus, both vertex-sets of 
  graphs in both sides of \eqref{eq:separaciones} are identical,
  also with the same bipartiteness. 
  
\item[\scriptsize$\bullet$] \textit{Edges}.  
For any vertices 
$x,c \in  \partial \mtc Y$
and for any colour $a=1,2,3$, we prove, that joining them, there 
exist an $a$-colored edge $f_a$ in the graph $\partial \mtc Y$
if and only if there exists an $a$-colored edge $f'_a$ of 
$\partial \G \sqcup \partial\H $ between $x$ and $c$. 
The case in which $x$ and $c$ are both white or both black vertices 
is trivial, for there is no path between them. Thus, we assume w.l.o.g.
that $x$ is black and $c$ white and prove now both directions of the
equivalence:

($\Rightarrow$) If $f_a= \,\overline{cx} \, \in (\partial \mtc Y)\hp 1_a$
the first thing to notice is that, referring to 
in Fig \ref{fig:separatrixp}, there is no bicolored path through 
$(\mathcal P \curlywedge k) \curlywedge l$ 
that joins $d$ with $q$ nor $p$ with with $b$. This means 
that $c$ and $x$ are either both in $\partial \G$ 
or both in $\partial \H$. The case is symmetric in $\G$ and $\H$
and we thus suppose the former case, $c,x \in (\partial\G)\hp 0$, 
and prove that there is a (0$a$)-bicolored path \textit{entirely in} $\partial \G$ joining them.
By definition of boundary graph, there is a  
$(0a)$-bicolored path $\gamma$ in
$\mtc Y$ between $c$ and $x$, that originates the 
edge $f_a=\overline{cx}$.  
Let $g'$ be the 
$0$-colour edge created by the sum $ \tensor*[_g]{\#}{_{\!k}}$ in $\mtc Y$. The edge $g'$
is belongs to the subgraph $(\mathcal P \curlywedge k) \curlywedge l$ of $\mtc Y$ (see Fig 
 \ref{fig:separatrixp}). 
We discern two cases:

\begin{figure}%[H]
\begin{subfigure}{0.45\textwidth}
\centering
\includegraphics[width=5.5cm]{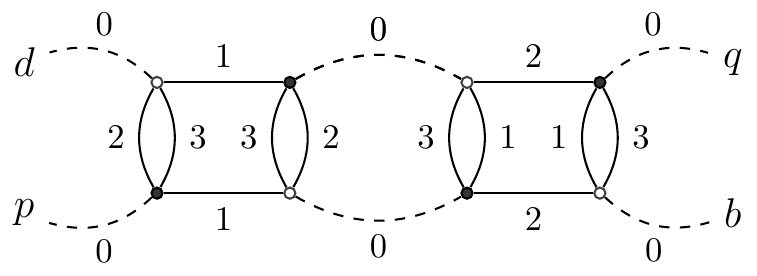} 
 \caption{\small $(\mathcal P \curlywedge k) \curlywedge l$} \label{fig:separatrixp}
\end{subfigure}
\hspace*{\fill} 
\begin{subfigure}{0.45\textwidth}
\centering
\includegraphics[width=5.5cm]{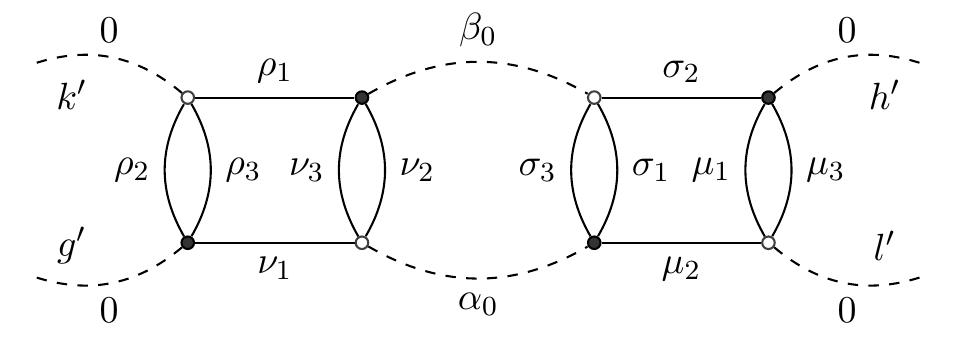}
 \caption{\footnotesize Every subscript of the Greek letters is its color} \label{fig:separatrix}
\end{subfigure}
\caption{On the proof of Lemma \ref{thm:sepa}}   
\label{fig:separatrix_both}
\end{figure}

\begin{itemize}
\item \textit{Case 1: If $\gamma$ does not pass through $g'$}. In this case, $\gamma$ itself is a 
 bicolored path in $\G$ between the given vertices $c$ and $x$. Thus $\gamma$
 also originates an $a$-colored edge between these.
 
\item \textit{Case 2: If $\gamma$ does pass through $g'$}. 
This means, as shown in Figure
 \ref{fig:separatrixp}, that for any color $a$, 
 $\gamma$ will pass through $k'$ as well. But, since $\gamma$ joins 
 $c$ with $x$, this means 
 that in $\G$, there is an $a0$-bicolored path from $c$ to $x$ (passing 
 through $g$, which differs from $\gamma$ only in that edge).	
\end{itemize} 
This shows $\partial \mathcal Y_a\hp1 \subset 
(\partial \mtc G \sqcup \partial \mtc H)\hp 1_a$ for arbitrary $a$. 
 
($\Leftarrow$) Assume that there exists an edge $f_a'\in (\partial \mtc G \sqcup \partial \mtc H)\hp 1_a$ between $c$ and $x$.
Then either both $x,c$ are external vertices of $\G$ or both of $\H$.
\begin{itemize}
\item If $x,c$ are in $\G_{\mtr{out}}\hp 0$. 
Again, $f_a'$ is originated by certain ($a0$)-bicolored path $\zeta$ in $\G$.
If the edge $g$ does not lie on this path, then the whole path is still in $\mtc Y$, so $f_a \in (\partial \mtc Y)\hp 1_a$.
On the other hand, if $g$ is one of the propagators in $\zeta$, notice that 
the same path $\zeta$ with $g$ replaced by the concatenation of the following edges
\[
 \begin{cases}
           g' \nu_1 \alpha_0 \sigma_1 \beta_0 \rho_1 k'  & \mbox{if } a=1\\
           g'\rho_a k' & \mbox{if } a\neq 1
         \end{cases}
\]
is a ($0a$)-bicolored path that lies in $(\partial \mtc Y) \hp 1_a$ and goes from $c$ to $x$.
\item If $x,c$ are in $\H_{\mtr{out}}\hp 0$. 
Similarly, $f_a$ is originated by certain ($a0$)-bicolored path $\xi$ in $\H$.
If the edge $h$ does not lie on this path, obviously $f_a \in (\partial \mtc Y)\hp 1_a$.
But if $h$ is one of the $0$-colored edges in the bicolored path $\xi$, notice that 
replacing $h$ in $\xi$ by the concatenation of
\[
 \begin{cases}
           h' \sigma_2 \beta_0 \nu_2 \alpha_0 \mu_2 l'  & \mbox{if } a=2\\
           h'\mu_a l' & \mbox{if } a\neq 2
         \end{cases}
\]
lies in $(\partial \mathcal Y) \hp 1_a$, 
and is a ($0a$)-bicolored path from $c$ to $x$.
Thus $( \partial \mtc H)\hp1_a \subset \partial \mathcal Y_a\hp1 $.
\end{itemize}
In any case, we have shown the 
direction $\partial \mathcal Y_a\hp1 \supset (\partial \mtc G \sqcup \partial \mtc H)_a\hp1$ 
and thus the lemma as well. 
\end{itemize}
\end{proof}

We use this result to prove the main result of this section. Before doing so, 
we need another result. For any non-negative
integer $g$, let $\Sigma^{g}=\#^g\T^2$ (this $\#$ is the 
usual topological connected sum. Also $\Sigma^0=\mathbb S^2$). 

\begin{lem} \label{thm:Tg}
For each $g\in \Z_{\geq 0}$ there exists a $\phi_3^4$-Feynman graph  $\mtc T_g$ 
whose boundary graph $\partial \mtc T_g $ triangulates $\Sigma^{g}$.
\end{lem}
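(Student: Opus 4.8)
\emph{Guiding principle and input.} The relation $\partial(C\B)=\B$ for cones recorded in Section \ref{reconstruction_from_graphs} dictates the whole strategy: it suffices to realise some $3$-colored graph triangulating $\Sigma^{g}$ as (the boundary of) a $\phi^4_3$-Feynman graph. Such a $3$-colored graph is already at hand from Lemma \ref{thm:surj_2}, namely $\mathcal Q_g=\mathcal O\hp1 \# \cdots \# \mathcal O\hp g\in \fey_2((\phi\bar\phi)^2)$, the iterated connected sum of $g$ copies of the torus block $\mathcal O$ of Example \ref{thm:ex_toro_tres}, for which $\chi(\mathcal Q_g)=2-2g$. I would begin by relabelling the three colours of $\mathcal Q_g$ as $\{1,2,3\}$ and forming its cone $C\mathcal Q_g$, an open $(3{+}1)$-colored graph with $\partial(C\mathcal Q_g)=\mathcal Q_g$.

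\emph{The obstruction, and the base case.} The reason the statement is not immediate is that $C\mathcal Q_g\notin\feyn$: a cone has no internal colour-$0$ edges, so $\mathrm{inn}(C\mathcal Q_g)^{\hat 0}=\mathcal Q_g$ is connected, whereas membership in $\feyn$ forces $\mathrm{inn}(\,\cdot\,)^{\hat 0}$ to be a disjoint union of the quartic vertices $\mathcal V_1,\mathcal V_2,\mathcal V_3$ of \eqref{vertices}. Hence $\mathcal T_g$ cannot be a cone; its quartic building blocks must be genuinely wired together by internal propagators. For $g=0$ this poses no problem: $\mathcal T_0:=C\mathcal V_1$ lies in $\feyn$ since $\mathrm{inn}(C\mathcal V_1)^{\hat 0}=\mathcal V_1\in\Upsilon$, and $\partial\mathcal T_0=\mathcal V_1$ is a ribbon graph (Lemma \ref{thm:col_rib}) with $\chi(\mathcal V_1)=4-6+4=2$; so $\partial\mathcal T_0$ triangulates $\mathbb S^2=\Sigma^0$.

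\emph{The construction for $g\ge 1$.} I would add one handle at a time. First produce a \emph{handle gadget}: a graph $\widehat{\mathcal O}\in\feyn$ carrying two marked internal colour-$0$ edges and with $\partial\widehat{\mathcal O}$ triangulating $\T^2$. One obtains $\widehat{\mathcal O}$ by hand, guided by $C\mathcal O$: insert melonic sub-configurations (copies of $\mathcal V_c$'s tied together by colour-$0$ edges) along the colour-$0$ strands of $C\mathcal O$ — insertions chosen to change neither the boundary graph nor the represented space — until the amputated core becomes a disjoint union of quartic vertices, the surviving free colour-$0$ edges serving as the marks. Then set $\mathcal T_g:=\mathcal T_0\,\#\,\widehat{\mathcal O}\hp1\,\#\,\cdots\,\#\,\widehat{\mathcal O}\hp g$, all connected sums performed along colour-$0$ edges at the marks. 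This stays inside $\feyn$ for exactly the reason of Proposition \ref{thm:qftcompatible}: $\mathrm{inn}(\,\cdot\,)^{\hat 0}$ ignores colour-$0$ edges, hence is untouched by the surgery. What remains is a boundary-level counterpart of Lemmas \ref{thm:ECrib} and \ref{thm:sepa}: when the surgery is carried out on propagators lying on an \emph{open} $(0c)$-path for every colour $c$, it does not split the boundaries but re-glues them, so that $\partial(\G\#\widehat{\mathcal O})$ is the connected sum $\partial\G\#\partial\widehat{\mathcal O}$ in the sense of Definition \ref{thm:surg_ribb}; this is proved by the same $(0c)$-path tracking as in Lemma \ref{thm:sepa}, now engineered to produce rather than to destroy connectivity. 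Granting it, $\chi(\partial\mathcal T_g)=\chi(\partial\mathcal T_0)+\sum_{i=1}^{g}\bigl(\chi(\partial\widehat{\mathcal O}\hp i)-2\bigr)=2+g(0-2)=2-2g$, and since $\partial\mathcal T_g$ is a vertex-bipartite $3$-colored graph it is an orientable ribbon graph; by the classification of closed orientable surfaces it triangulates $\Sigma^{g}$.

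\emph{Where the work is.} I expect the middle step to be the real obstacle, on two counts: turning the morally correct but non-Feynman cone $C\mathcal O$ (or $C\mathcal Q_g$) into an honest element of $\feyn$ with an unchanged boundary, and verifying that the colour-$0$ surgeries connect-sum the boundary surfaces instead of disjoint-union them (which is what a careless gluing, or the separatrices of Lemma \ref{thm:sepa}, would do). Both are finite combinatorial checks, but the second in particular requires a judicious choice of the marked propagators — so that all three $(0c)$-paths through them are open — and then the patient case analysis of how those paths recombine, precisely the flavour of argument carried out for $\mathcal M$ and $\mathcal P$ in the proof of Lemma \ref{thm:sepa}.
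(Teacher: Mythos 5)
Your overall architecture (exhibit a genus-$g$ $3$-colored target graph, then realise it as the boundary of an honest element of $\feyn$) matches the paper's, and your base case $\mtc T_0=C\V_1$ is correct. But the two steps you yourself flag as ``where the work is'' are not merely unfinished --- the first one fails as described. The handle gadget $\widehat{\mathcal O}$ cannot be obtained by inserting melonic two-point configurations along the colour-$0$ strands of $C\mathcal O$: every colour-$0$ edge of a cone is an external leg, and any insertion performed on colour-$0$ edges leaves the colours $1,2,3$ untouched, so $\mathrm{inn}(\,\cdot\,)^{\hat 0}$ still contains the connected $24$-vertex core $\mathcal O$ as a component, which is not one of $\V_1,\V_2,\V_3$. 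Membership in $\feyn$ forces the non-zero-coloured edges to decompose into disjoint quartic vertices, so one must \emph{dissolve every coloured adjacency of the target boundary graph} and re-route it through colour-$0$ propagators joining quartic blocks --- a vertex-replacement, not an insertion. This is exactly what the paper does: it replaces each white/black vertex of an explicit canonical genus-$g$ graph $\mtc C_g$ (a $2(2g+1)$-gon with its long diagonals coloured $3$) by gadgets $\mathfrak a_c,\mathfrak m_x$ built from the vertices \eqref{vertices}, and wires them by propagators following the adjacency of $\mtc C_g$; the only delicate point is colour $3$, where the naive contraction connects the \emph{successor} vertices and a nine-edge detour must be checked (Fig.~\ref{fig:03}). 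With that construction $\partial\mtc T_g=\mtc C_g$ holds by inspection and no surgery on boundary graphs is ever needed.

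The second gap is the asserted ``boundary connected sum'' identity $\partial(\G\#\widehat{\mathcal O})=\partial\G\#\partial\widehat{\mathcal O}$ with $\chi\mapsto\chi_1+\chi_2-2$. Cutting one internal colour-$0$ propagator affects \emph{three} boundary edges at once (one per colour $c$, coming from the three $(0c)$-paths through it), and these three edges generically join three different pairs of boundary vertices; so the induced operation on the boundary ribbon graphs is a simultaneous three-colour edge surgery, not an instance of Definition \ref{thm:surg_ribb}, and its effect on the $2$-bubble count --- hence on the genus --- depends on whether, for each pair $(cd)$, the edges $\tilde e_c$ and $\tilde e_d$ lie on the same or on different $(cd)$-bubbles. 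Lemma \ref{thm:ECrib} does not apply, Lemma \ref{thm:sepa} shows the opposite behaviour (disjoint union), and nothing you have written pins down the face count after recombination. Until both the gadget and this surgery lemma are actually supplied, the induction does not get off the ground; the paper's direct construction sidesteps both issues.
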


\begin{proof}
We define the following (so-called canonical \cite{carrozzaPhD})
$3$-colored graphs $\mathcal C_g$ of genus $g$. 
For $g=0$, $\mathcal C_0$ is just the graph 
$\raisebox{-.4ex}{\includegraphics[width=2.5ex]{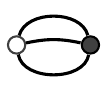}}$. 
For $g\in\Z_{>0}$, one now constructs $\mtc C_g$ form a regular $2(2g+1)$-agon whose
vertices are colorated in an alternating way: black, white, black, white,...; 
between those, the sides are given also an alternating edge-coloration (1,2,1,2,...).
To this $2(2g+1)$-agon we add its $(2g+1)$ longest 
diagonals and color these edges 
with $3$. The resulting graph is $\mathcal C_g$ (for instance, 
$\mathcal C_1$ is $K_{\mathrm c}(3,3)$). 
The terminology `genus' for these graphs is appropriate. Indeed, 
$\mtc C_{g}$ has $2(2g +1)$ vertices, 
$3(2g +1)$ edges and $3$ two-bubbles (faces)
it is a ribbon graph that can be drawn on a surface of maximal 
Euler characteristic $3-(3-2)\cdot(2g +1)=2-2g $, or minimal genus $g$. \par

The next step in the proof is to construct, for each genus $g$,
a graph $\mtc T_g \in\feyn$ with $\partial \mtc T_g=\mtc C_g$.
This graph is constructed in two stages. First, in the vertex-set $\mtc C_g\hp 0$, one replaces 
any black vertex $x$ and any white vertex $c$ by the following rule:
\[
c \quad \mapsto\quad \mathfrak a_c=\raisebox{-.45\height}{\includegraphics[width=3cm]{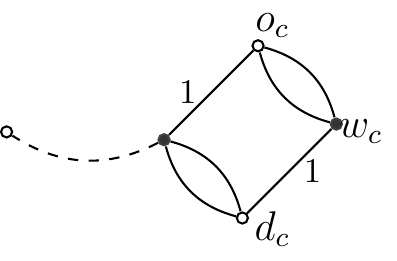}}\qquad \mbox{ and } \qquad
x \quad \mapsto \quad\mathfrak m_x=\raisebox{-.45\height}{\includegraphics[width=3cm]{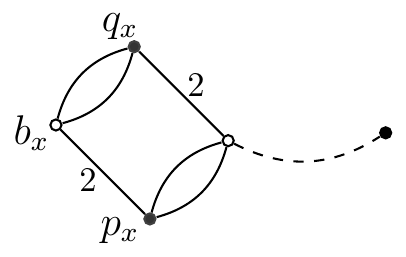}} 
\]
The second stage is to contract with propagators in order to obtain a
well defined element of $\feyn$.
For each $1$ colored line $\overline{cx}$ in $\mtc C_g$ (if it exists) we
join $o_c$ and $p_x$ with a propagator. If there is a $2$-colored edge 
$\overline{cx} $  in $\mtc{C}_g$, one joins $d_c$ with $q_x$ 
with a propagator. It is immediate to see that adjacency by 
an $a$-colored edge $\overline{cx}$ in $\mtc{C}_g$ leads to connectivity
between $\mathfrak{a}_c$ and $\mathfrak{m}_x$ by a $(0a)$-bicolored
path, for $a=1,2$. We want the same property for $a=3$, and actually,
for each $3$-colored line between the given vertices $x$ and $c$, one joins
$w_c$ and $b_x$ with a propagator. Remarkably this does \textit{not} imply the 
$(03)$-path connectedness between $c$ and $x$ but that of their succeeding
vertices of the $2(2g+1)$-gon. Thus, said connectivity between 
a 3-colored edge $\overline{cx}$ is provided by a path composed of nine
edges passing through a propagator between 
$w_d$ and $b_y$, where $d$ (resp. $y$) is the white 
(resp. black) vertex in the polygon, succeeding $c$ (resp.  $x$), as shown in Figure \ref{fig:03}.
\begin{figure}%[H] 
\centering
\includegraphics[width=.68\textwidth]{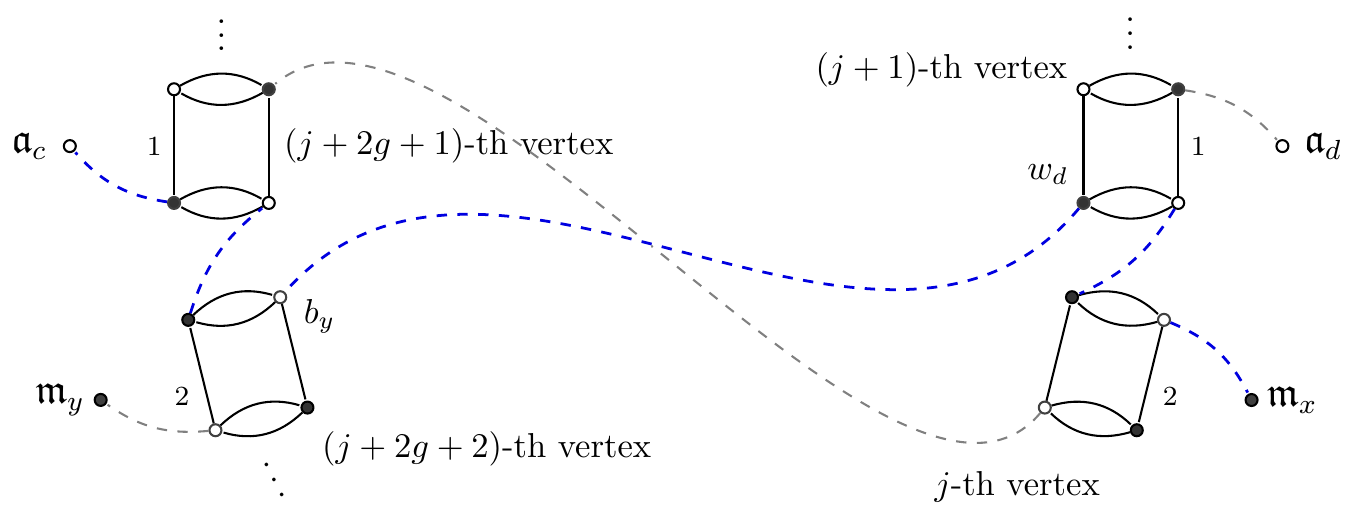}
\caption{If there is a $3$-colored edge $e\in \mtc C_g$ \label{fig:03} one contracts $w_{s(e)}$ and $b_{t(e)}$.
When one does this for all the $3$-colored edges, one guarantees that, 
given two vertices,  $c$ and $x$ joined by a $3$-colored edge in $\mtc C_g$, 
there is a propagator between $w_d$ and $b_y$, where $d$ comes after 
to $x$ and $y$ comes after $c$, anticlockwise. The (03)-colored
path between the external vertices of $\mtf {a}_c$ and $\mtf{m}_x$ passes through  $\overline{w_db_y}$. }
\end{figure}
The graph assembled by applying these two steps to each vertex and edge of $\mtc C_g$ is called 
$\mtc T_g$ and, by construction, it satisfies $\partial \mtc T_g= \mtc C_g$. This proves the result.
\end{proof}
Some $\mtc C_g$ for higher
genera are depicted in Fig. \ref{fig:canonical}. 
As example of this construction, $\mtc T_1$ is the 
leftmost graph in eq. \eqref{eq:integraciondek33}. 
The  more complex $\mtc T_4$ is shown in Figure \ref{fig:T4}. 
We observe that the set $\mathbf T=\{\Tg 0,\mathcal T_1,\mathcal T_2,\ldots\} \subset
\feyn$ endowed with the contraction $\tg$ is a monoid and the
restriction $(\Delta\circ \partial) |_\mathbf T$ to that set is a
monoid-morphism  $(\mathbf T,\tg )\to (\Riem,\#)\,.$ 

\begin{figure}[t]
\begin{subfigure}{0.48\textwidth}
\centering
\includegraphics[width=.9\textwidth]{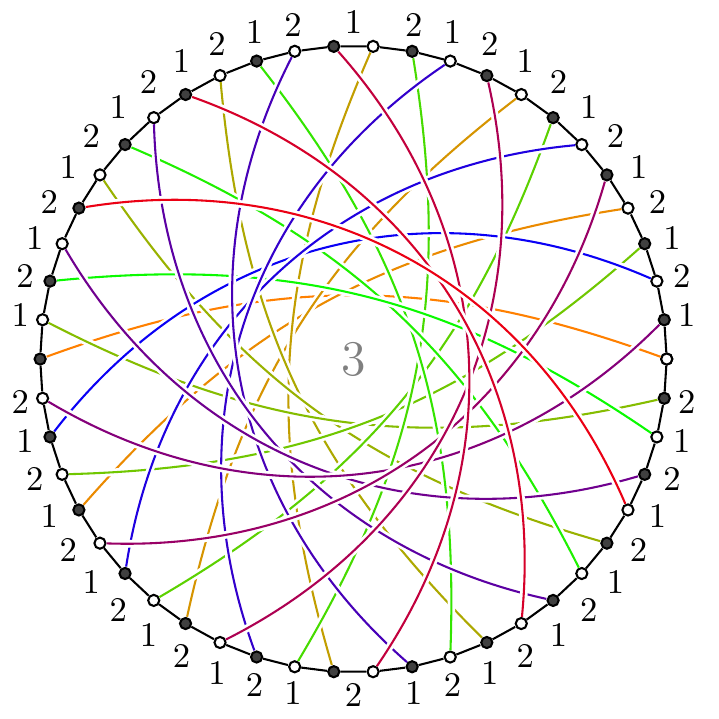}
 \caption{$\mathcal C_{12}$ (labels are on edges)} \label{fig:canonical12}
\end{subfigure}
\hspace*{\fill}
\begin{subfigure}{0.48\textwidth}
\centering
\includegraphics[width=2.9cm]{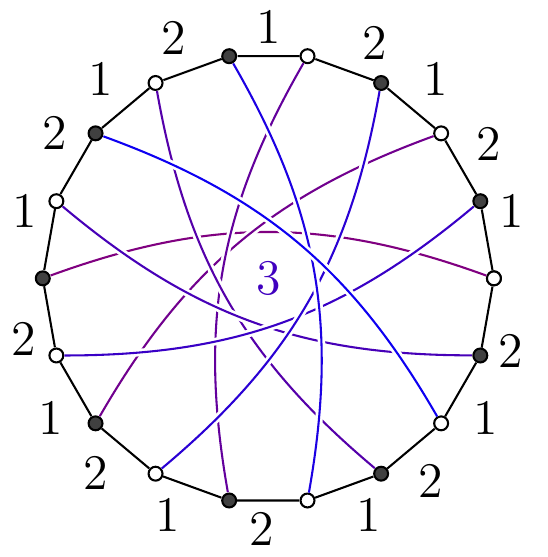} \\
 \caption{$\mathcal C_4$}   \label{fig:genus4}
\vspace{.3cm}
\includegraphics[width=.7\textwidth]{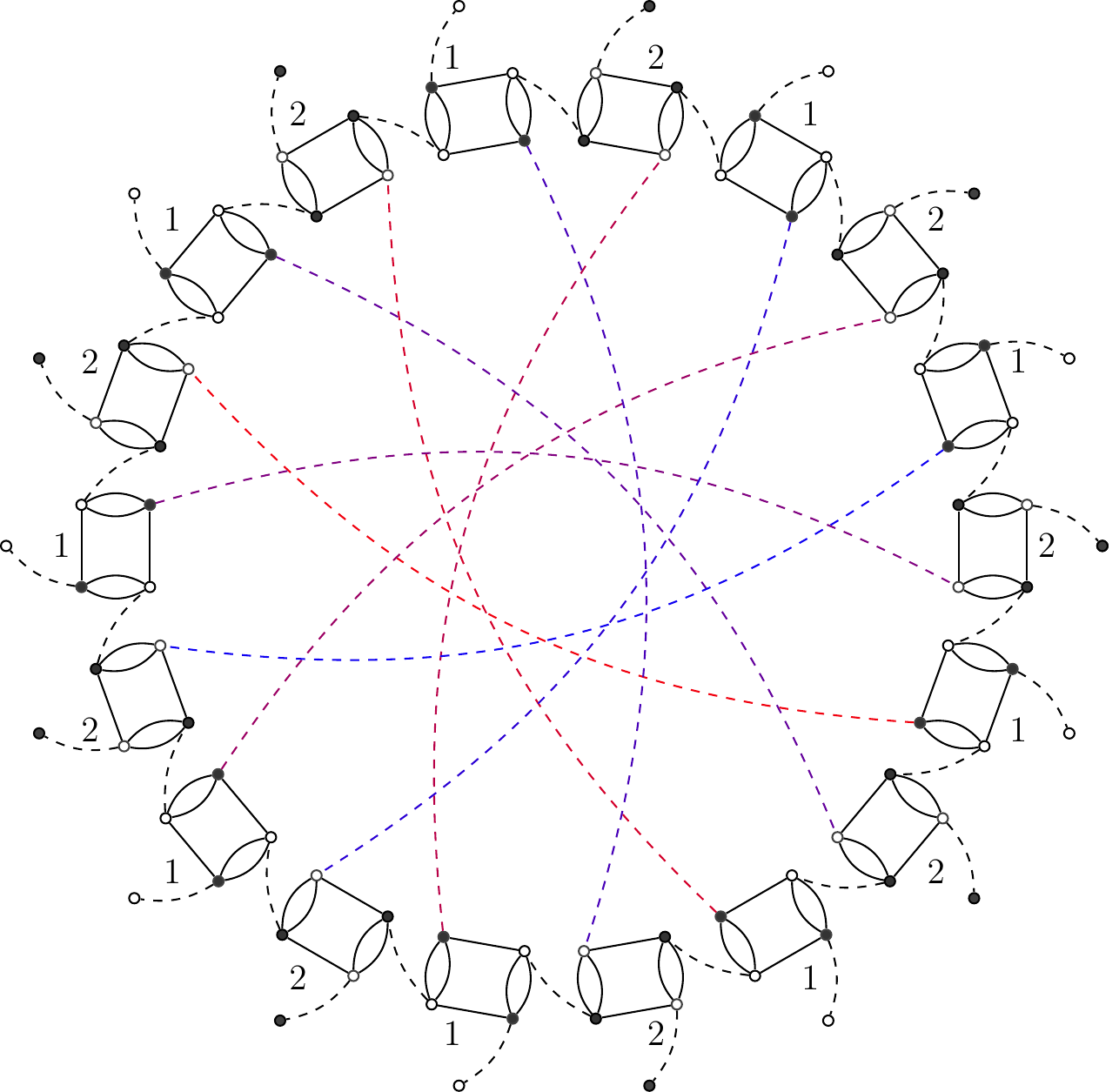} 
 \caption{$\mtc T_4$} \label{fig:T4}
\end{subfigure}

\caption{Examples of canonical graphs $\mathcal C_g$, 
for $g=4$, based on an octadecagon, and  for $g=12$ on a pentacontagon (a).
The graph in (c)  satisfies $\partial \mathcal T_4 =\mtc C_4$; see Lemma \ref{thm:Tg}
(colors in the online version only as visual guide).}
\label{fig:canonical}
\end{figure} 
\begin{thm}\label{thm:3cobord}
The map $\theta$ defined by
\[
\theta=\Delta\circ \partial:\feyn \stackrel{\partial}{\ToLong} 
\amalg \Grph{3} \stackrel{\Delta}{\ToLong}  \Riemd
\]
is surjective. That is, all closed, orientable  (possibly disconnected) Riemannian surfaces
are cobordant via a manifold triangulated by 
(\textit{connected} graphs of) the $\phi^4_3$-theory.
\end{thm}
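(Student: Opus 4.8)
The plan is to bolt together the two preceding lemmas. Lemma~\ref{thm:Tg} already realizes every \emph{connected} closed orientable surface $\Sigma^g$ as $\Delta(\partial\mathcal{T}_g)$ for a connected graph $\mathcal{T}_g\in\feyn$, while Lemma~\ref{thm:sepa} shows how to glue connected graphs through $\mathcal P$ (the separatrix $\mathcal M$ would serve just as well) so that the result is again connected but has a disconnected boundary. So I would first recall that, by the classification of closed orientable surfaces, any $S\in\Riemd$ is of the form $S=\Sigma^{g_1}\sqcup\cdots\sqcup\Sigma^{g_n}$ for some $n\geq 0$ and non-negative integers $g_1,\dots,g_n$; the case $n=0$ (empty surface) is hit by any vacuum $\phi^4_3$-graph, whose boundary is empty by convention, and $n=1$ is exactly Lemma~\ref{thm:Tg}. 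Hence I may assume $n\geq 2$.

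For $n\geq 2$, take the graphs $\mathcal{T}_{g_1},\dots,\mathcal{T}_{g_n}\in\feyn$ from Lemma~\ref{thm:Tg}. Each of them is built from the blocks $\mathfrak a_c,\mathfrak m_x$ joined by propagators, so it carries at least one internal $0$-colored edge and in particular is not a cone: $\mathcal{T}_{g_i}\in\feyn\setminus\im\,C$. Set $\mathcal G_1:=\mathcal{T}_{g_1}$ and, recursively,
\[
\mathcal G_{j+1}:=\mathcal G_j\tensor*[_{g}]{\#}{_{\!k}}\mathcal P\tensor*[_{l}]{\#}{_{\!h}}\mathcal{T}_{g_{j+1}}\,,
\]
where $g$ is an internal propagator of $\mathcal G_j$, $h$ one of $\mathcal{T}_{g_{j+1}}$, and $k\neq l$ are two distinct $0$-colored edges of a fresh copy of $\mathcal P$. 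One checks inductively that each $\mathcal G_j$ is connected and still carries an internal $0$-colored edge (the surgery only replaces $g$ by a new $0$-colored edge joining two inner vertices), so $\mathcal G_j\in\feyn\setminus\im\,C$ and Lemma~\ref{thm:sepa} applies at each step, yielding $\mathcal G_{j+1}\in\feyn$ connected with $\partial\mathcal G_{j+1}=\partial\mathcal G_j\sqcup\partial\mathcal{T}_{g_{j+1}}$. By induction $\mathcal G:=\mathcal G_n$ is a connected $\phi^4_3$-Feynman graph with $\partial\mathcal G=\bigsqcup_{i=1}^{n}\partial\mathcal{T}_{g_i}$.

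It then remains to note the (formal) fact that the reconstruction $\Delta$ commutes with disjoint unions: $K(\mathcal B_1\sqcup\mathcal B_2)$ glues $D$-simplices only along edges of the graph, of which there are none between the two components, so $\Delta(\mathcal B_1\sqcup\mathcal B_2)=\Delta(\mathcal B_1)\sqcup\Delta(\mathcal B_2)$. Together with Lemma~\ref{thm:Tg} this gives
\[
\theta(\mathcal G)=\Delta(\partial\mathcal G)=\bigsqcup_{i=1}^{n}\Delta(\partial\mathcal{T}_{g_i})=\bigsqcup_{i=1}^{n}\Sigma^{g_i}=S\,,
\]
so $\theta$ is surjective. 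Reading this backwards: $\Delta(\mathcal G)$ is a compact $3$-manifold with $\partial\Delta(\mathcal G)=\Delta(\partial\mathcal G)=\theta(\mathcal G)$, so picking $\mathcal G$ with $\theta(\mathcal G)=S_1\sqcup S_2$ exhibits $\Delta(\mathcal G)$ as a cobordism between arbitrary $S_1,S_2\in\Riemd$, which is the stated conclusion. The substantive work is already inside Lemmas~\ref{thm:Tg} and~\ref{thm:sepa}; here the only thing demanding care is the recursion bookkeeping --- keeping every partial assembly out of $\im\,C$ and supplied with an internal propagator so that Lemma~\ref{thm:sepa} keeps applying --- and I expect that to be routine rather than a real obstacle.
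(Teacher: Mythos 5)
Your proposal is correct and follows essentially the same route as the paper: decompose $S$ into connected components $\Sigma^{g_i}$, realize each by $\mathcal T_{g_i}$ via Lemma~\ref{thm:Tg}, and chain them together through copies of the separatrix $\mathcal P$ so that Lemma~\ref{thm:sepa} yields the disjoint-union boundary. The only difference is presentational --- you phrase the chaining as an explicit recursion and spell out the bookkeeping (staying out of $\im\,C$, retaining internal propagators, $\Delta$ commuting with $\sqcup$) that the paper leaves implicit.
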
 

\begin{proof}       
If $M$ has $b$ boundary components, 
there are $b$ (not necessarily different) integers $g_1,\ldots,g_b \in \Z_{\geq 0}$ such that
\begin{equation}
M \simeq \Sigma^{g_1}\sqcup  \Sigma^{g_2}
\sqcup \cdots  \sqcup \Sigma^{g_b}\,. \label{recon}
\end{equation}  
We construct a graph $\mtc L_{g_1,\ldots,g_b}\in \feyn$ with 
$\Delta(\partial  \G)$ homeomorphic to $M$. We consider the following sum:
\[
 \mtc L_{g_1,\ldots,g_b}:= (\mtc T_{g_1})  \tensor*[_{f_1}]{\#}{_{\!k}} 
 \mtc P \tensor*[_{l}]{\#}{_{\!e_2}}  (\mtc T_{g_2}) \tensor*[_{f_2}]{\#}{_{\! k}}  \mtc P  
\ldots \mtc P \tensor*[_{l}]{\#}{_{\!e_b}}   (\mtc T_{g_b}) \,.
\]
Notice that each $\mtc T_{g_i}$ has more than three internal propagators 
and we choose two arbitrary $0$-colored edges $e_i,f_i$ of  $\mtc T_{g_i}$ to
perform the connected sum. Because all its summands are in $\feyn$, so is $\mtc L_{g_1,\ldots,g_b}$. We suppress the 
edge dependence. Finally, $\mtc L_{g_1,\ldots,g_b}$ satisfies 
\begin{align*}\partial \mtc L_{g_1,\ldots,g_b} &= \partial ((\mtc T_{g_1}) \# \mtc P \#  (\mtc T_{g_2})  \# \mtc P  
\ldots \mtc P \#  (\mtc T_{g_b})) =
\partial ( \mtc T_{g_1}) \sqcup \partial( (\mtc T_{g_2})  \# \mtc P  
\ldots \mtc P \#  (\mtc T_{g_b}))= \ldots \\
&=  \partial ( \mtc T_{g_1})\sqcup  
 \cdots \sqcup  \partial (\mtc T_{g_b})
  =
\mtc C_{g_1}\sqcup  
 \cdots \sqcup \mtc C_{g_b}\,,
\end{align*}
by applying $b-1$ times Lemma \ref{thm:sepa} and $b$ times Lemma \ref{thm:Tg}. 
Thus $\theta ( \mtc L_{g_1,\ldots,g_b})=\Delta (\partial \mtc L_{g_1,\ldots,g_b}) \simeq M$, which proves the theorem. \end{proof} 
As example of how this construction works, a bordism $\Sigma^2\to \Sigma^3$
is shown in Figure \ref{fig:ontheproof}.

\begin{figure}
\begin{subfigure}{ \textwidth}  \centering
\includegraphics[width=.8\textwidth]{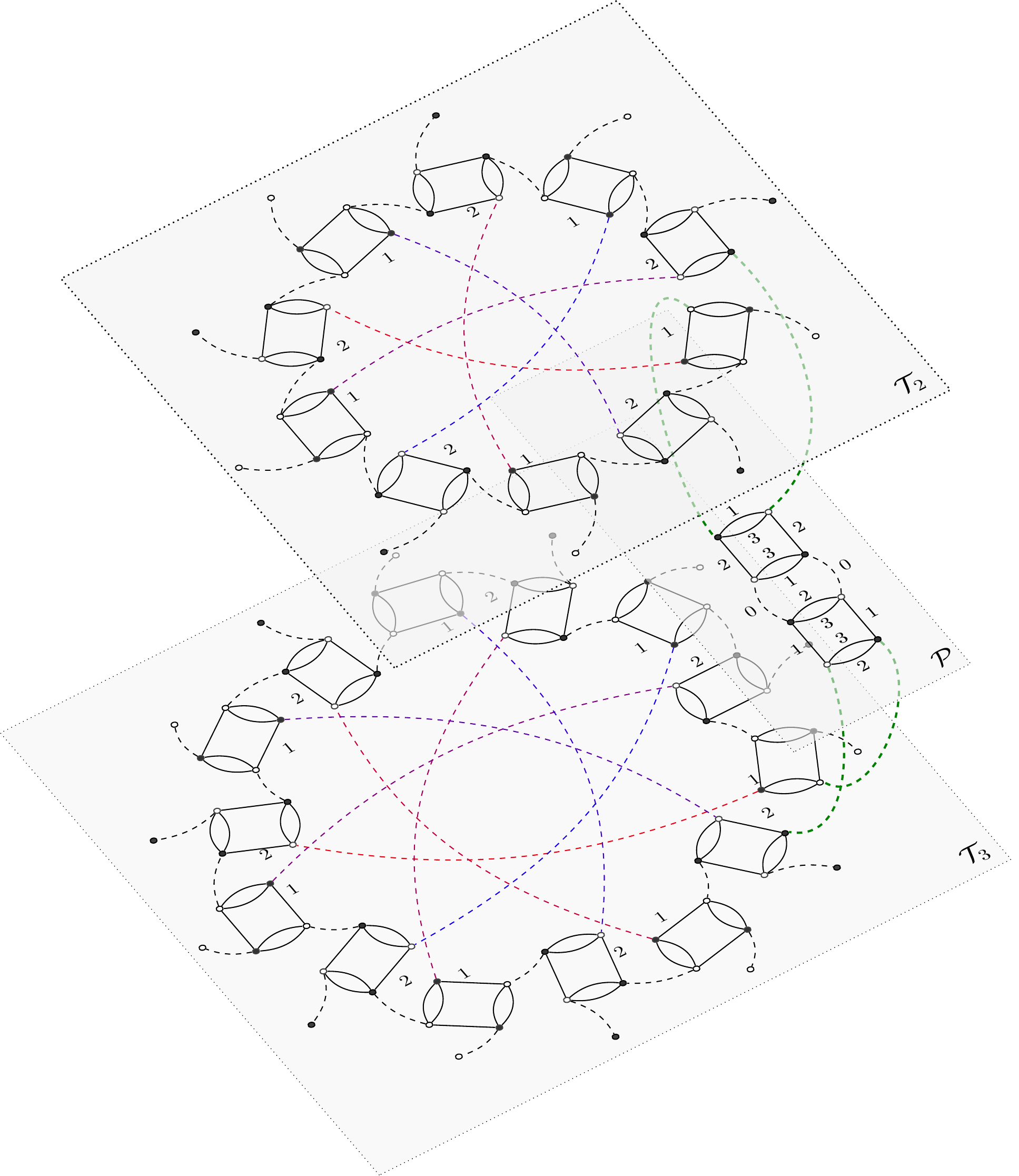}
\caption{The graph $\mathcal L_{2,3}$\label{fig:t1}  }
\end{subfigure}

\bigskip

\begin{subfigure}{ \textwidth}\centering
\includegraphics[width=7.5cm]{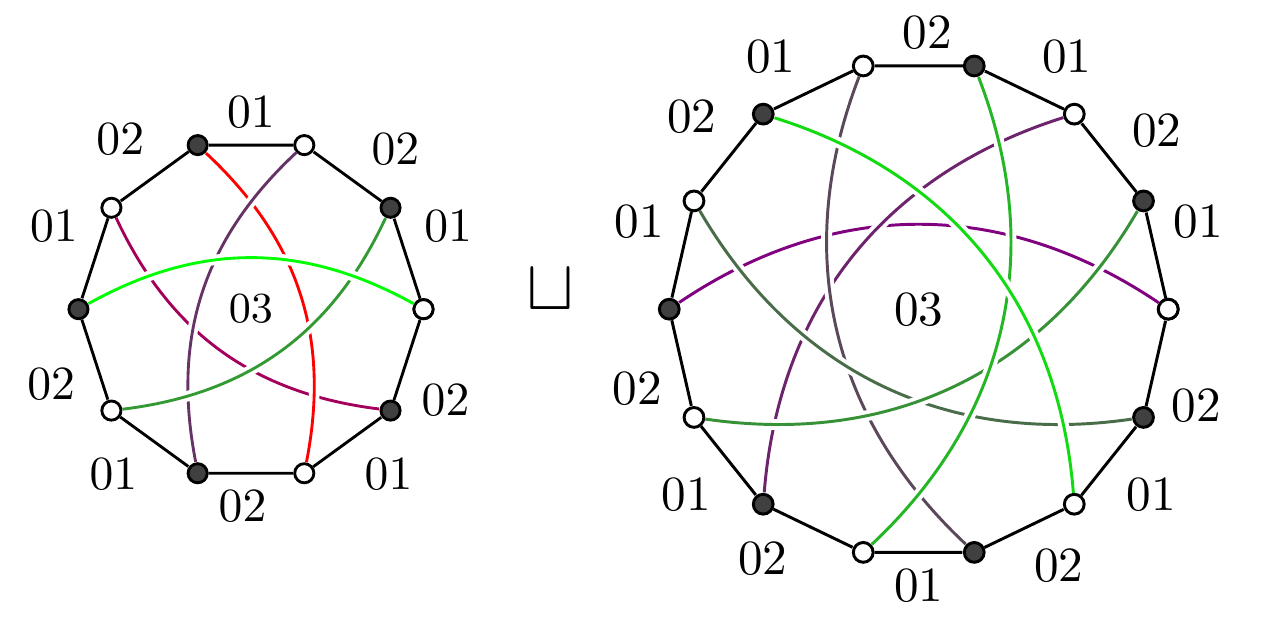}
\caption{The graph $\partial \mathcal L_{2,3} = \mathcal C_2\sqcup \mathcal C_3$
\label{fig:boundary_t1} }
\end{subfigure}

\caption{\label{fig:ontheproof} 
Example of a bordism $\Sigma^2\to \Sigma^3$ triangulated by a Feynman graph  $\mathcal L_{2,3}$ of 
the $\phi_3^4$-theory, as given in the proof of 
Theorem \ref{thm:3cobord} (coloring of the lines in the on line version only intended as visual guide).} 
\end{figure}

%% 
%%  = = = = = = = = = = = = = = = = = = = = = = = = = = = = = = = = = =  
%% 
%%  = = = = = = = = = = = = = = = = = = = = = = = = = = = = = = = = = =  
%% 
%%  = = = = = = = = = = = = = = = = = = = = = = = = = = = = = = = = = =  

\section{Conclusions}

We defined the connected sum of $3$-colored graphs that 
is a well defined operation on the set of Feynman diagrams 
of any tensor model. It differs from the existent connected 
sum in the crystallization theory by a $1$-dipole move.
There is no tensor model $V$ such that the latter operation 
restricts to a well defined binary operation on
$\fey_2(V)$, whence the need of the connected sum  
we introduced. It is used 
to prove the surjectivity of the map $\xi$ 
in the following commuting diagram:
\begin{equation} \label{condition}
\hspace{-1.5cm}
\raisebox{-.4\height}{\includegraphics[height=1.95cm]{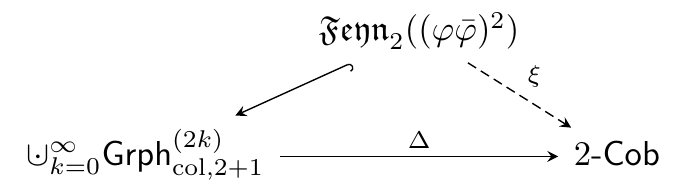}}
\end{equation}
(see Thm. \ref{thm:gen_bordisms}). Tangentially, this might provide some link between the rank-$2$
tensor models and Atiyah's Topological Quantum Field Theories \cite{Atiyah},
where one studies functors from $2$-$\Cob$ to the category of 
Hilbert spaces. A particular case of \eqref{condition} is 
$\xi|$ in the following commutative diagram:
\begin{equation}   
\raisebox{-.4\height}{\includegraphics[height=1.85cm]{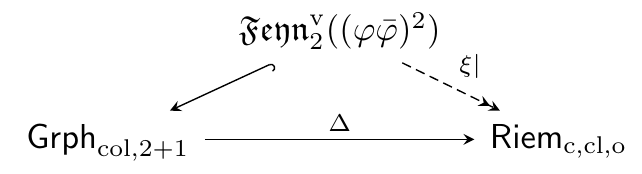}}
\end{equation}

Trivially, CTM-graphs that represent a (sub)category of boundaryless
manifolds (here $\Riem$) can be exhibited as null-bordant, by just
coning each graph. The non trivial part is showing that, in this case,
any surface in $\Riem$ is null-bordant via a suitable graph in
$\feyn$, which we constructed.  Moreover, any two surfaces in $\Riemd$
(even disconnected) are also cobordant in the sense of the
$\phi^4_3$-theory:
\begin{equation}   
\raisebox{-.4\height}{\includegraphics[height=1.85cm]{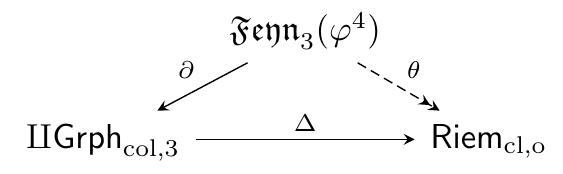}}
\end{equation}
 An immediate
consequence concerns the non-perturbative \cite{fullward} treatment of
the Ward Identity for rank-$3$ tensor models \cite{us,DineWard}.  In
order to undertake that problem, one needs the expansion of the free
energy $\log Z[J,\bar J] $ in boundary graphs, analogous to in the
matrix model case \cite[Sec. 2.3]{GW12}.  The fact that the boundary sector $\partial \feyn$, 
as proven here, generates all of $\Riemd$ facilitates our
Ansatz regarding this expansion.  \par The divergence degree of
graphs can be expressed in terms of certain degree 
$\tilde\omega$ for open graphs, 
that extends Gur\u au's degree $\omega$. The 
degree for open graphs is given in terms of the so called pinched jackets
of $\G$.  For us, it is important that the genus and Gur\u{a}u's
degree are the same for $3$-colored graphs. One has then for
an open Feynman graph $\G$,
$ \tilde\omega(\G) \geq 3\cdot \omega (\partial \G) = 3
\sum_{\R\subset \partial\G} g(\R),$
summing\footnote{The inequality has been proven in \cite[Lemma
  4]{3Dbeta} (see also \cite[Cor. B4]{wgauge}); for the equality, see
  here Example \ref{thm:exampleribbon}.} over the the connected
components of the boundary graph $\partial \G$. The present work helps
to compute the boundary graph $\partial \G$, and hence to have a lower-bound for the
degree $\tilde\omega(\G)$.  \par In \cite{fullward} it will be proven that
the correlation functions $G_\B$ of the $\phi^4_3$-model are indexed
by boundary graphs $\B\in\partial \feyn$.  It will be useful to expand
these functions in Gur\u{a}u's degree, as done in the matrix
theory-formulation of the ($\Omega=1$)-Grosse-Wulkenhaar
$\phi^4$-model \cite{GW12} in terms of the genus.  In there, using such
expansion, combined with the a full Ward identity and the
Schwinger-Dyson equations yielded a closed equations for
correlation functions and that techniques will be extended to 
the present setting.  \par
As another immediate application, the natural continuation of this
work is to relax some of the symmetry and to pose Question
\ref{eqn:primed} in the framework of multi-orientable
\cite{reviewTanasa} or $\mathrm O(N)$-tensor models \cite{On}. \par

A second, quite different application is the addition of bosonic
fields. In dimension two, for instance, using Theorem \ref{thm:gen_bordisms}
and constructs before it, one has control of the gluings' topology, even of those 
made of a large number of interaction vertices. This can both ease
computations and might be reused to define gauge theories on (computable) random spaces. 
An approach is, first, to adapt the gauge theory on
usual graphs \`{a} la Baez \cite{baez} to our colored
graphs. Secondly, one would add gauge fields \`a la Marcolli-van
Suijlekom \cite{MvS} using representation of graphs (here tensor-model
Feynman graphs used as random-`manifold base') in the category of
finite dimensional spectral triples with vanishing Dirac operator. In
that respect, the connection between noncommutative geometry and
matrix models would be based on recent results by Barrett and Glaser
\cite{barrett}, which treat the \textit{quantum} Connes-Chamseddine
spectral action \cite{SAP,CCM} as a certain matrix model.

%% 
%%  = = = = = = = = = = = = = = = = = = = = = = = = = = = = = = = = = =  
%% 
%%  = = = = = = = = = = = = = = = = = = = = = = = = = = = = = = = = = =  
%% 
%%  = = = = = = = = = = = = = = = = = = = = = = = = = = = = = = = = = =  
 
\SkipTocEntry\section*{Acknowledgement}
The author wishes to thank:
\begin{itemize}\setlength\itemsep{-.51pt}
 \item[\scriptsize$\bullet$] The \textit{Deutscher Akademischer
   Austauschdienst} (DAAD) mainly, but also 
   the \textit{Sonderforschungsbereich 878} ``Groups, Geometry \& Actions''
(SFB 878), for financial support.
 \item[\scriptsize$\bullet$] Paola Cristofori, Joseph Ben Geloun, Raimar Wulkenhaar for useful comments
 and Adrian Tanas\u{a} for pointing out the multi-orientable tensor models.
 \item[\scriptsize$\bullet$] The Erwin Schr\"odinger International
Institute for Mathematical Physics, Vienna, for 
hospitality during 
the ESI-Program {``The interrelation between mathematical physics, 
number theory and non-commutative geometry''}.
\end{itemize}

\appendix

%% 
%%  = = = = = = = = = = = = = = = = = = = = = = = = = = = = = = = = = =  
%% 
%%  = = = = = = = = = = = = = = = = = = = = = = = = = = = = = = = = = =  
%% 
%%  = = = = = = = = = = = = = = = = = = = = = = = = = = = = = = = = = =  
 
 \small
\section{Computing homology of colored graphs}\label{appA}

\begin{example} \label{homtorus} 
To compute its bubble-homology, as proposed in Ex. \ref{thm:exampleribbon} one 
chooses an (ordered) basis for each dimension according to
following labels:
\[
\begin{minipage}{.43\linewidth}
  \centering
  $ 
\R_1=\raisebox{-0.48\height}{\includegraphics[width=4.0cm]
{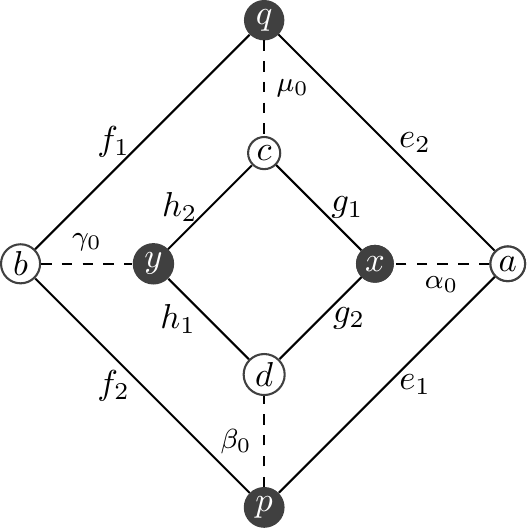} } $
\end{minipage}% 
\begin{minipage}{.55\linewidth}
  \centering
  $\begin{array}{cl}
C_0(\R_1) =&\!\!\langle a,b,c,d,p,q,x,y\rangle_\Z \,, \\
 C_1(\R_1) =&\!\!\langle e_1,e_2,f_1,f_2,g_1,g_2,h_1,h_2,
\alpha_0,\beta_0,\gamma_0,\mu_0\rangle_\Z  \,, \\
 C_2(\R_1) = & \!\!\langle \B^{01},\B^{02}, 
 \B^{12}_{\mtr{outside}},\B^{12}_{\mtr{inside}}\rangle_\Z \,.
  \end{array}$ 
\end{minipage}
\]
In the chain complex
$
 0\to \Z^{4}\simeq C_{2}(\R_1)   \stackrel{\partial_{2}}{\ToLong}  \Z^{12} 
 \simeq C_{1}(\R_1) \stackrel{\partial_{1}}{\ToLong}    C_0(\R_1)\simeq \Z^{8} {\to}\, 0
$
the non-trivial boundary operators are, in the chosen bases, given by:
\[
\dpa_1=
\left(
%\begin{array}{cccccccccccc}
\begin{smallmatrix}
-1 & -1 & 0 & 0 & 0 & 0 & 0 & 0 & -1 & 0 & 0 & 0 \\
 0 & 0 & -1 & -1 & 0 & 0 & 0 & 0 & 0 & 0 & -1 & 0 \\
 0 & 0 & 0 & 0 & -1 & 0 & 0 & -1 & 0 & 0 & 0 & -1 \\
 0 & 0 & 0 & 0 & 0 & -1 & -1 & 0 & 0 & -1 & 0 & 0 \\
 1 & 0 & 0 & 1 & 0 & 0 & 0 & 0 & 0 & 1 & 0 & 0 \\
 0 & 1 & 1 & 0 & 0 & 0 & 0 & 0 & 0 & 0 & 0 & 1 \\
 0 & 0 & 0 & 0 & 1 & 1 & 0 & 0 & 1 & 0 & 0 & 0 \\
 0 & 0 & 0 & 0 & 0 & 0 & 1 & 1 & 0 & 0 & 1 & 0 \\
\end{smallmatrix}
%\end{array}
\right)
\sim
%\begin{array}{cccccccccccc}
 \left(
 \begin{smallmatrix}
 1 & 0 & 0 & 1 & 0 & 0 & 0 & 0 & 0 & 1 & 0 & 0 \\
 0 & 1 & 0 & -1 & 0 & 0 & 0 & 0 & 0 & 0 & -1 & 1 \\
 0 & 0 & 1 & 1 & 0 & 0 & 0 & 0 & 0 & 0 & 1 & 0 \\
 0 & 0 & 0 & 0 & 1 & 0 & 0 & 1 & 0 & 0 & 0 & 1 \\
 0 & 0 & 0 & 0 & 0 & 1 & 0 & -1 & 0 & 1 & -1 & 0 \\
 0 & 0 & 0 & 0 & 0 & 0 & 1 & 1 & 0 & 0 & 1 & 0 \\
 0 & 0 & 0 & 0 & 0 & 0 & 0 & 0 & 1 & -1 & 1 & -1 \\
 0 & 0 & 0 & 0 & 0 & 0 & 0 & 0 & 0 & 0 & 0 & 0 \\
 \end{smallmatrix}
\right)
\]
%\end{array}
and 
\[
\dpa_2=\left(
%\begin{array}{cccc}
\begin{smallmatrix}
1 & 0 & -1 & 0 \\
 0 & 1 & 1 & 0 \\
 1 & 0 & -1 & 0 \\
 0 & 1 & 1 & 0 \\
 1 & 0 & 0 & -1 \\
 0 & 1 & 0 & 1 \\
 1 & 0 & 0 & -1 \\
 0 & 1 & 0 & 1 \\
 -1 & -1 & 0 & 0 \\
 -1 & -1 & 0 & 0 \\
 -1 & -1 & 0 & 0 \\
 -1 & -1 & 0 & 0 \\
\end{smallmatrix}
 %\end{array}
\right)  \sim
%\]
%The column reduction of $\dpa_2$ is
%\[
\left(
%\begin{array}{cccc}
\begin{smallmatrix}
 1 & 0 & 0 & 0 \\
 0 & 1 & 0 & 0 \\
 1 & 0 & 0 & 0 \\
 0 & 1 & 0 & 0 \\
 0 & 0 & 1 & 0 \\
 1 & 1 & -1 & 0 \\
 0 & 0 & 1 & 0 \\
 1 & 1 & -1 & 0 \\
 -1 & -1 & 0 & 0 \\
 -1 & -1 & 0 & 0 \\
 -1 & -1 & 0 & 0 \\
 -1 & -1 & 0 & 0 \\
%\end{array}
\end{smallmatrix}
\right) \, , \]
where the tilde means a change of basis, which 
in each case leads to row or column reduction. 
Since $\dpa_3=0$, $H_2(\R_1)=\ker\,\dpa_2=\Z$. On the other hand
the column reduction of $\dpa_1$ is
\[
\dpa_1M=
\left(
\begin{smallmatrix}
%\begin{array}{cccccccccccc}
 1 & 0 & 0 & 0 & 0 & 0 & 0 & 0 & 0 & 0 & 0 & 0 \\
 0 & 1 & 0 & 0 & 0 & 0 & 0 & 0 & 0 & 0 & 0 & 0 \\
 0 & 0 & 1 & 0 & 0 & 0 & 0 & 0 & 0 & 0 & 0 & 0 \\
 0 & 0 & 0 & 1 & 0 & 0 & 0 & 0 & 0 & 0 & 0 & 0 \\
 0 & 0 & 0 & 0 & 1 & 0 & 0 & 0 & 0 & 0 & 0 & 0 \\
 0 & 0 & 0 & 0 & 0 & 1 & 0 & 0 & 0 & 0 & 0 & 0 \\
 0 & 0 & 0 & 0 & 0 & 0 & 1 & 0 & 0 & 0 & 0 & 0 \\
 -1 & -1 & -1 & -1 & -1 & -1 & -1 & 0 & 0 & 0 & 0 & 0 \\
%\end{array}
\end{smallmatrix}
\right)
\]
where 
\[M=\left(
\begin{smallmatrix}
%\begin{array}{cccccccccccc}
 0 & 0 & 0 & 0 & 1 & 0 & 0 & 0 & -1 & -1 & 0 & 0 \\
 0 & 1 & 0 & 0 & 0 & 1 & 0 & 0 & 1 & 0 & 1 & -1 \\
 0 & -1 & 0 & 0 & 0 & 0 & 0 & 0 & -1 & 0 & -1 & 0 \\
 0 & 0 & 0 & 0 & 0 & 0 & 0 & 0 & 1 & 0 & 0 & 0 \\
 0 & 0 & -1 & 0 & 0 & 0 & 0 & -1 & 0 & 0 & 0 & -1 \\
 1 & 1 & 1 & 0 & 1 & 1 & 1 & 1 & 0 & -1 & 1 & 0 \\
 -1 & -1 & -1 & -1 & -1 & -1 & -1 & -1 & 0 & 0 & -1 & 0 \\
 0 & 0 & 0 & 0 & 0 & 0 & 0 & 1 & 0 & 0 & 0 & 0 \\
 -1 & -1 & 0 & 0 & -1 & -1 & 0 & 0 & 0 & 1 & -1 & 1 \\
 0 & 0 & 0 & 0 & 0 & 0 & 0 & 0 & 0 & 1 & 0 & 0 \\
 0 & 0 & 0 & 0 & 0 & 0 & 0 & 0 & 0 & 0 & 1 & 0 \\
 0 & 0 & 0 & 0 & 0 & 0 & 0 & 0 & 0 & 0 & 0 & 1 \\
%\end{array}
\end{smallmatrix}
\right)\mathrm{.\,Then \,}
%\]
%Then \[
M\inv\dpa_2=\left(
\begin{smallmatrix}
%\begin{array}{cccc}
 0 & 0 & 0 & 0 \\
 0 & 0 & 0 & 0 \\
 0 & 0 & 0 & 0 \\
 0 & 0 & 0 & 0 \\
 0 & 0 & 0 & 0 \\
 0 & 0 & 0 & 0 \\
 0 & 0 & 0 & 0 \\
 0 & 1 & 0 & 1 \\
 0 & 1 & 1 & 0 \\
 -1 & -1 & 0 & 0 \\
 -1 & -1 & 0 & 0 \\
 -1 & -1 & 0 & 0 \\
 \end{smallmatrix}
%\end{array}
\right)
%\stackrel{\mathrm{row.red.}}{\OldMapsto}%,  BM\inv\dpa_2=
\]
The last zero-columns of $\dpa_1$ correspond to the generators of $\ker \,\dpa_1=\Z^5$.
From those generators, which in 
the matrix $M\inv \dpa_2$ correspond to the five last rows,
three of them ---the non-zero rows corresponding to
the row reduction of the (non-zero lower part of $M\inv \partial_2$)
\[\left(
\begin{smallmatrix}
%\begin{array}{cccc}
 1 & 0 & 0 & -1 \\
 0 & 1 & 0 & 1 \\
 0 & 0 & 1 & -1 \\
 0 & 0 & 0 & 0 \\
 0 & 0 & 0 & 0 \\
 \end{smallmatrix}
%\end{array}
\right)\]
---lie also in the image of $\dpa_2$. It follows $H_1(\R_1)=\Z^{5}/\Z^{3}=\Z^{2}.$

\end{example}

\begin{example} \label{homonecklace}
In order to compute the homology of the the complex, 
one labels the graph
\[
\begin{minipage}{.43\linewidth}
  \centering
  $ 
\G=\raisebox{-0.48\height}{
\includegraphics[width=2.7cm]{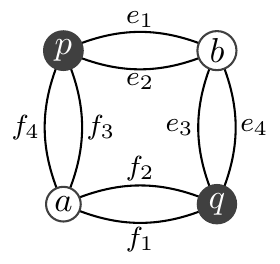}} $
\end{minipage}% 
\begin{minipage}{.55\linewidth}
  \centering
  $\begin{array}{cl}
C_0(\G) =&\!\!\langle a,p,b,q\rangle_\Z \\
\vspace{0.2cm}
 C_1(\G) =&\!\!\langle e_1,e_2,e_3,e_4,f_1,f_2,f_3,f_4\rangle_\Z  \\
 \vspace{0.1cm}
 C_2(\G) = & \!\!\langle  \B^{12}_{e},\B^{12}_{f}, 
  \B^{23},\B^{24}, \B^{13},\B^{14} ,\B^{34}_{e},\B^{34}_{f}\rangle_\Z \\
  \vspace{0.1cm}
 C_3(\G) = &\!\!\langle \B^{\hat 1} , \B^{\hat 2}, \B^{\hat 3} , \B^{\hat 4} \rangle_\Z
  \end{array}$
\end{minipage}
\]
where $\B^{\hat c}$ means omission of the color $c$. The differentials
of the chain complex 
\[
0 \to  C_{3}(\G) \simeq \Z^{4} \stackrel{\partial_{3}}{\ToLong} 
C_{2}(\G)\simeq \Z^{8} \stackrel{\partial_{2}}{\ToLong} 
 C_{1}(\G)\simeq \Z^{8} \stackrel{\partial_{1}}{\ToLong} 
 C_0(\G)\simeq \Z^{4} \to  0
\]
are explicitly:
\begin{equation*}
\dpa_1=
\left(
\begin{smallmatrix}
 0 & 0 & -1 & -1 & -1 & -1 & 0 & 0 \\
 1 & 1 & 1 & 1 & 0 & 0 & 0 & 0 \\
 -1 & -1 & 0 & 0 & 0 & 0 & -1 & -1 \\
 0 & 0 & 0 & 0 & 1 & 1 & 1 & 1 \\
\end{smallmatrix}
\right),\quad 
\dpa_2 =
\left(
 \begin{smallmatrix}
 -1 & 0 & 0 & 0 & -1 & -1 & 0 & 0 \\
 1 & 0 & -1 & -1 & 0 & 0 & 0 & 0 \\
 0 & 0 & 1 & 0 & 1 & 0 & -1 & 0 \\
 0 & 0 & 0 & 1 & 0 & 1 & 1 & 0 \\
 0 & -1 & 0 & 0 & -1 & -1 & 0 & 0 \\
 0 & 1 & -1 & -1 & 0 & 0 & 0 & 0 \\
 0 & 0 & 1 & 0 & 1 & 0 & 0 & -1 \\
 0 & 0 & 0 & 1 & 0 & 1 & 0 & 1 \\
 \end{smallmatrix}
\right),\quad 
\dpa_3 =
\left(
 \begin{smallmatrix}
 0 & 0 & 1 & 1 \\
 0 & 0 & 1 & 1 \\
 1 & 0 & 0 & 1 \\
 -1 & 0 & 1 & 0 \\
 0 & 1 & 0 & -1 \\
 0 & -1 & -1 & 0 \\
 1 & 1 & 0 & 0 \\
 1 & 1 & 0 & 0 \\
 \end{smallmatrix}
\right)
\end{equation*}
To compute $H_2$, the reduced versions are
\[
\dpa_2'=\left(
 \begin{smallmatrix}
 1 & 0 & 0 & 0 & 0 & 0 & 0 & 0 \\
 0 & 1 & 0 & 0 & 0 & 0 & 0 & 0 \\
 0 & 0 & 1 & 0 & 0 & 0 & 0 & 0 \\
 -1 & -1 & -1 & 0 & 0 & 0 & 0 & 0 \\
 0 & 0 & 0 & 1 & 0 & 0 & 0 & 0 \\
 1 & 1 & 0 & -1 & 0 & 0 & 0 & 0 \\
 0 & 0 & 0 & 0 & 1 & 0 & 0 & 0 \\
 -1 & -1 & 0 & 0 & -1 & 0 & 0 & 0 \\
 \end{smallmatrix}
\right)
\qquad
\dpa_3'=\left(
 \begin{smallmatrix}
 0 & 0 & 0 & 0 \\
 0 & 0 & 0 & 0 \\
 0 & 0 & 0 & 0 \\
 0 & 0 & 0 & 0 \\
 0 & 0 & 0 & 0 \\
 0 & -1 & -1 & 0 \\
 0 & 1 & 0 & -1 \\
 1 & 1 & 0 & 0 \\
 \end{smallmatrix}
\right) \sim   \left(
 \begin{smallmatrix}
  0 & 0 & 0 & 0 \\
 0 & 0 & 0 & 0 \\
 0 & 0 & 0 & 0 \\
 0 & 0 & 0 & 0 \\
 0 & 0 & 0 & 0 \\
 1 & 0 & 0 & 1 \\
 0 & 1 & 0 & -1 \\
 0 & 0 & 1 & 1 \\
 \end{smallmatrix}
\right)
\]
The non-zero part of $\dpa_3'$ has rank three,
whence $H_2(\G)=\Z^3/\Z^3=0$. Similarly one
finds $H_1(\G)=0$.
\end{example}

%% 
%%  = = = = = = = = = = = = = = = = = = = = = = = = = = = = = = = = = =  
%% 
%%  = = = = = = = = = = = = = = = = = = = = = = = = = = = = = = = = = =  
%% 
%%  = = = = = = = = = = = = = = = = = = = = = = = = = = = = = = = = = =  
 
\section{The cell complex of a ribbon graph} \label{app_ribbons}

Departing from the abstract definition, we construct here
the cell-decomposition of the minimum-genus-surface where a ribbon graph can be drawn on
without self-intersections. 
To begin with, we remark that vertices do not have naturally an
orientation, but only a cyclic order---so far, these are only
abstract combinatorial objects. However, when one tries to represent
graphs by drawings, thus evoking the orientation of the plane
(counterclockwise), graphs might be given a neater representation if
we invert the cyclic order on some vertices and, of course, keep track
of this action with help of a sign, $\epsilon.$ Write $\epsilon_v=+1$
if we preserve the order of a vertex $v$ as the levorotation, and
$\epsilon_v=-1$ if the cyclic order of $v$ is written as a
dextrorotary vertex. Also, for any ribbon graph $\Rb$, 
since each edge $e$ is determined by two half-edges
$h,h'\in \Rb\hp{1/2}$ that are joined (i.e. $j(h)=h'$), $e$ can be
rewritten as $e=[h,h']=[h',h]$. If $e=[(v,\al),(w,\beta)]$, think of
$e$ as being attached to the vertex $v$ at the $\al$-th place, and to
the vertex $w$ in the $\beta$-th place. \par 

The `fat graph' representation of 
$\Rb$ is in a natural way a cell-complex, $X(\Rb)$. 
The skeleta $X\hp n$ are constructed as
follows: \\ 
\begin{enumerate}
 \item[\textit{$0$-cells:}]  Let $n_v$ be the valence of the vertex $v\in \Rb\hp 0$, 
 and $\epsilon_v\in \{+1,-1\}$ its orientation. 
 We associate $v$ the following cyclic ordered set (see 
 Fig. \ref{fig:ceroCW}) of
 $0$-cells:
\begin{align}
 \qquad \qquad v\to \begin{cases}
      (P(v)_1^+,P(v)_1^-,P(v)_2^+,P(v)_2^-,\ldots,P(v)_{n_v}^+,P(v)_{n_v}^-  )
      & \mbox{if}  \,\, \epsilon_v=+1\,, \\
(P(v)_{1}^+,P(v)_1^-,P(v)_{n_v}^{+},P(v)_{n_v}^{-},\ldots,P(v)_{2}^+,P(v)_{2}^-)
& \mbox{if} \,\,  \epsilon_v=-1\,.
      \end{cases} \label{estructuraciclica}
\end{align}
The $0$-skeleton $X\hp0$ is then 
 the union of all such points $P(v)^\epsilon_\alpha$,
 with $v$ running all over $\Rb \hp0$, $\epsilon=\pm$ 
 and $\alpha=1,\ldots,n_v$.
 One has thus, 
 in total, $2\sum_{v \in \Rb\hp 0} n_v$ $0$-cells. 
 
\begin{figure}%[H]
\begin{subfigure}{0.45\textwidth}
\centering
\includegraphics[width=4.2cm]{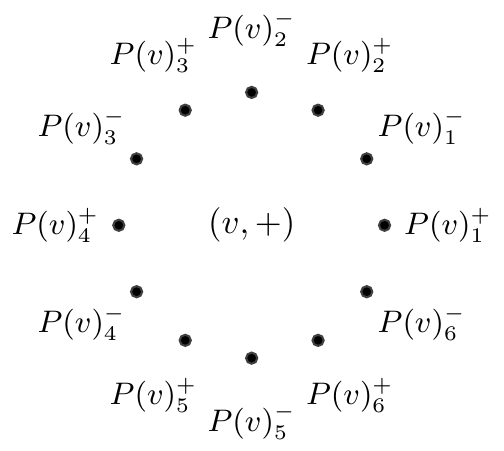}
 \caption{ \small  For a positively orientable vertex $(v,+)\in\Rb\hp0$,
 with $n_v=6$.} \label{fig:ceroCWA}
\end{subfigure}
\hspace*{\fill} 
\begin{subfigure}{0.45\textwidth}
\centering
\includegraphics[width=4.2cm]{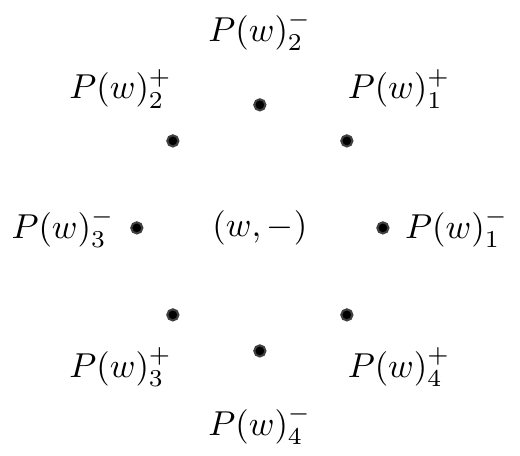}
 \caption{  \small  For a, say, valence-$4$ vertex, the
 associated $0$-cells.} \label{fig:ceroCWB}
\end{subfigure}
\caption{      \small
On the construction of the $0$-skeleton, $X\hp0$ of $\Rb$. 
}   \label{fig:ceroCW}
\end{figure}

\item[\textit{$1$-cells:}]

 In order to construct $X\hp 1$, we proceed in two 
 steps:
 \begin{itemize}
  \item[i)] First, add a $1$-cell $\overline{pq}$ to
 for each consecutive pair of points $p$ and $q$,
 with the order given by the cycles \eqref{estructuraciclica}. That is, 
 add for each vertex $v$ the following cells:
 \begin{align} \label{uno_circulo}
\qquad \overline{P(v)_1^+ P(v)_1^-},&\,\overline{P(v)_1^-P(v)_2^+}, 
\ldots, \overline{P(v)_{n_v}^+P(v)_{n_v\vphantom 1}^{-\vphantom +}}, 
\overline {P(v)_{n_v}^{-}  P(v)_1^+} &&(\epsilon_v=+1) \!\!\! \\
  \qquad  \overline{P(v)_{1}^+P(v)_1^-},&\,\overline{P(v)_1^-P(v)_{n_v}^{+}},\ldots, 
   \overline{P(v)_{2}^+  P(v)_{2}^-},\, \overline{ P(v)_{2}^-P(v)_{1}^+} && (\epsilon_v=-1)
   \!\!\!
 \end{align}
 This results in the 
 space $\sqcup_{v\in \Rb\hp0} (\mathbb{S}^1_v)^\pm$,
 where the circles is given 
 the orientation $\pm$ of the vertices.  
 \item[ii)] 
 The second step is attaching the ribbons:
 for each edge $e\in \Rb\hp1$ with $e=[(x,\al),(y,\beta)]$ 
 (viz. connecting the vertex $x$ at the $\al$-th place,
 with $y$ at the $\beta$-th place),
 attach $1$-cells from
 $P(x)_\alpha^+$ to $P(y)_\beta^-$
 and from $P(y)_\beta^+$ to  $P(x)_\alpha^-$ 
(see Fig. \hyperref[fig:listonesB]{\ref*{fig:listones}\textsc{ (b)}} and Fig. \ref{fig:cw1}). 
 Notice that the same edges are attached if
 we instead take the pair $((y,\beta),(x,\al))$
 as representative.
 Since for each vertex $v$ we attached double lines, 
 the whole number of attached $1$-cells 
 is $2\sum_{v \in \Rb\hp 0} n_v+ 2 |\Rb \hp 1| $.
 \end{itemize}

 \begin{figure}  %[H]
 \centering
 \includegraphics[height=3.1cm]{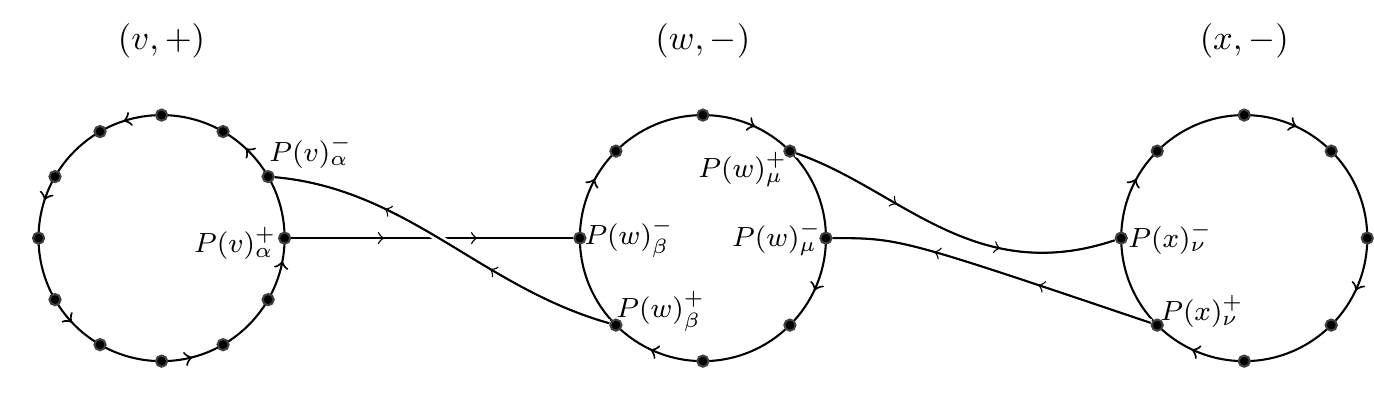}
 \caption{ \small Shows the most general case on how  
  to adjoint $1$-cells. First, to the cyclic 
  structure in eq. \eqref{estructuraciclica} (in the figure, 
  the circles). Here 
  the $1$-cells attached to an edge $e$ that 
  connects with $e=[(v,\alpha),(w,\beta)]$;
  the opposite orientation is responsible for the apparent crossing.
   On the right, the two $1$-cells associated to $f \in \Rb \hp 1$,
 if $ f=[(w,\mu),(x,\nu)]$.\label{fig:cw1}
  }
  \end{figure}
\item[\textit{$2$-cells.}] The last skeleton $X\hp2$ is 
 obtained in two steps:
 \begin{itemize}
\item[a)]filling the ribbon double lines: that is, if 
 $e\in \Rb\hp 1$ and $e=[(v,\alpha),(w,\beta)]$,
 $2$-disk attachment at the loop formed by 
 $\overline{P(v)_{\alpha}^+P(v)_\alpha^-}$,  $ \overline{P(w)_{\beta}^+P(w)_\beta^-}$  
 and the two ribbon segments constructed for $e$ in step ii) above.
 \item[b)]
 The second step is filling for all $v\in \Rb\hp 0$ the
 vertex-circles $\mathbb S^1_v$ which
 one gets by \eqref{uno_circulo}. In total, we added $|\Rb\hp 1|+|\Rb\hp 0|$ 
 $2$-cells.
  \end{itemize}
\end{enumerate}
  This exhibits the cell-structure $X\hp 2$ of a ribbon graph. But
  actually is more natural not to stop at $X\hp 2$ and to adjoint more
  $2$-cells to some loops left, namely the boundary components.  A
  \textit{boundary component} of the graph $\Rb$ here is a loop of the
  graph formed by the boundary of the ribbons' long segments and arcs
  determined by the orientation, as pictured in Figure
  \hyperref[fig:loops_ribbon]{\ref*{fig:bc} \textsc{(a)}} (see also
  Fig. \hyperref[fig:listonesB]{\ref*{fig:listones} \textsc{(b)}}).
  \\
\begin{figure}  \centering
\begin{subfigure}{0.40\textwidth}
\includegraphics[height=2.4cm]{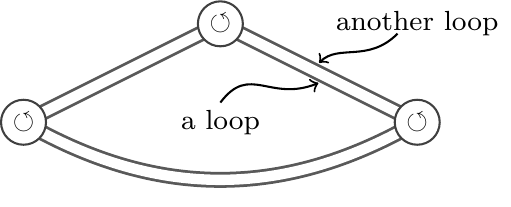}
\caption{  \small    Example of 
boundary components of a ribbon graph. 
Here $\mathrm{bc}(\Rb)=2$.\label{fig:loops_ribbon}
} 
\end{subfigure}
\hspace*{\fill} 
\begin{subfigure}{0.450\textwidth}
\includegraphics[height=4.2cm]{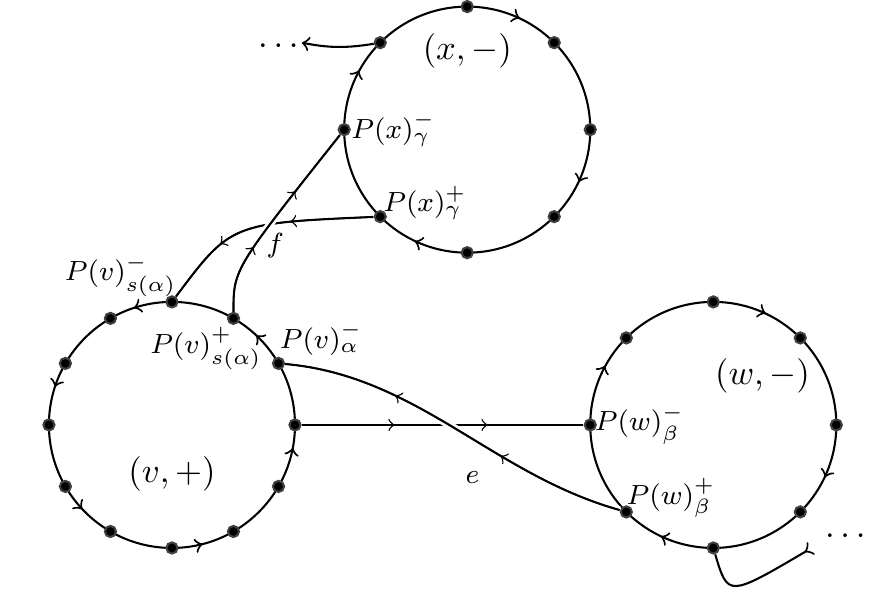}
\caption{   \small    Illustration of a boundary component.
\label{fig:external_loop} }
\end{subfigure}
\caption{        \small
 On the definition of boundary components.
\label{fig:bc}
}  
\end{figure}
Formally, these boundary components are described as follows: take an
arbitrary edge $e=[(v,\alpha),(w,\beta)]$ and let $s(\alpha)$ be the
next place according to the cyclic ordering given to $v$
(i.e. $s(\alpha)\equiv\alpha \pm 1$ (mod $n_v$) if $\epsilon_v=\pm$).
Thus, consider the path that begins with the segments
$\overline{P(w)_\beta^+ P(v)_\alpha^-},
\overline{P(v)_\alpha^-P(v)_{s(\alpha)}^+ } $.
We can juxtapose another segment, since there is a unique
$f\in\Rb\hp 1$ and a unique $x\in \Rb\hp 0$ with
$ f =[(v,s(\alpha)),(x,\gamma)]$.  The process finishes after a finite
number of steps by coming back to $P(w)_\beta^+$, at the latest, when
we run out of vertices. The loop $\ell_1$ obtained by concatenation of
these paths (see Fig. \hyperref[fig:loops_ribbon]{\ref*{fig:bc}
  \textsc{(b)}})
\[
\ell_1= \overline{P(w)_\beta^+ P(v)_\alpha^-}
\overline{ P(v)_{s(\alpha)}^+ } 
\overline{P(x)_{\gamma}^- \ldots P(w)_\beta^+} 
\]
is a boundary component. It might be that the not all vertices lie on
$\ell_1$, so pick one such a vertex and repeat the process to get
$\ell_2$. The final number of non-intersecting loops ---that is, after
each $0$-cell $p\in X\hp 0$ lies precisely in one of the
$\ell_1,\ell_2,\ldots, \ell_{\mtr{bc}(\Rb)}$ --- defines
$\mtr{bc}(\Rb)$, the \textit{number of boundary components}.  \\

Thus each boundary component
$\ell$ is, by construction, 
homeomorphic to $\mathbb S^1_\ell$ by certain
map, say $\phi_\ell$. Then we attach to the 
ribbon picture $X\hp 2$ a $2$-cell $\mathring D^2_\ell$
by such a map
 $\phi_\ell: \partial \mathring D^2_\ell \to  
\mathbb S^1_\ell $, for
each $\ell=1,\ldots, \mtr{bc}(\Rb)$. The cell-structure
of $\Sigma(\Rb)$ is that of $\Rb$ but with 
$\mathrm{bc}(\Rb)$ more $2$-cells. Thus
\begin{align*}
|0\mbox{-cells of } \Sigma (\Rb)| & = 2\sum_{v \in \Rb\hp 0} n_v,  \\
|1\mbox{-cells of } \Sigma (\Rb)| & = 2\sum_{v \in \Rb\hp 0} n_v+ 2 |\Rb \hp 1|,\\
|2\mbox{-cells of } \Sigma (\Rb)| & = |\Rb\hp 0| + |\Rb\hp 1|+ \mtr{bc}(\Rb).
\end{align*}
Since we have a finite cell-complex, we conclude from this:
\begin{align*}
\chi(\Sigma(\Rb)) = \sum_j (-1)^j| \mbox{$j$-cells of }\Sigma (\Rb)|  = |\Rb\hp 0| -| \Rb\hp 1| + \mtr{bc} (\Rb).
\end{align*}
Since $\Rb$ has been assigned a cell-structure as well,
`$\chi (\Rb)$' is now misleading. It will \textit{not}  
denote $\chi(X(\Rb))$ but $\chi(\Sigma(\R))$, as in 
stated in Definition \ref{genus_ribbon}.
\begin{rem}
Notice that, by construction, the inclusion
of the cell-complex
into a closed, compact surface $\Sigma(\Rb)$ is an embedding 
$\!\!\raisebox{-.35\height}{\includegraphics[height=.6cm]{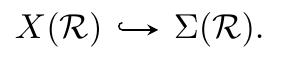}}$
\end{rem}
The next result is nothing unexpected. It shows that the two
definitions of Euler characteristic harmonically coexist:
\begin{prop} The Euler characteristic of a
  $3$-colored graph is the same, either if we compute it by its
  bubble homology or, by appealing its ribbon graph structure, via its
  geometric realization.\label{thm:eulers}
\end{prop}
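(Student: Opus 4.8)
The plan is to show that both quantities equal the naive combinatorial alternating sum $|\G\hp 0|-|\G\hp 1|+|\G\hp 2|$. On the homological side this is immediate: the bubble chain complex of a $3$-colored graph $\G$ is the finite complex $0\to C_2(\G)\to C_1(\G)\to C_0(\G)\to 0$ of finitely generated free abelian groups, with $C_q(\G)=\mathrm{span}_\Z(\G\hp q)$ of rank $|\G\hp q|$ for $q=0,1,2$; hence the Euler--Poincar\'e principle gives $\chi(\G)=\sum_q(-1)^q\mathrm{rnk}\,H_q(\G)=\sum_q(-1)^q\mathrm{rnk}\,C_q(\G)=|\G\hp 0|-|\G\hp 1|+|\G\hp 2|$. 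On the ribbon side, equip $\G$ with the ribbon structure of Lemma \ref{thm:col_rib}, with half-edge set $\G\hp 0\times\Z_3$ and cyclic orders $(123)$ at white, $(321)$ at black vertices; then $|\Rb\hp 0|=|\G\hp 0|$ and $|\Rb\hp 1|=|\G\hp 1|=\tfrac32|\G\hp 0|$, and by Definition \ref{genus_ribbon} together with the cell count of Appendix \ref{app_ribbons} one has $\chi(\Sigma(\G))=|\G\hp 0|-|\G\hp 1|+\mathrm{bc}(\G)$. So the proposition reduces to the single identity $\mathrm{bc}(\G)=|\G\hp 2|$.

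To prove that identity I would unwind the description of boundary components from Appendix \ref{app_ribbons}. When the trace of a boundary loop reaches a vertex $v$ along its color-$c$ edge, it leaves $v$ along the next half-edge in the cyclic order at $v$, i.e.\ along the color-$(c+1)$ edge if $v$ is white and along the color-$(c-1)$ edge if $v$ is black (indices mod $3$). Because $\G$ is bipartite the white/black labels alternate along the loop, so these two operations alternate as ``$+1$'' and ``$-1$'': one leaves a white vertex along color $c+1$, crosses that edge to a black vertex, leaves it along color $c$, crosses back along color $c$ to a white vertex, and so on. Thus the colors of the edges visited along a boundary loop stay inside a fixed consecutive pair $\{c,c+1\}$, so the loop sweeps out an alternating bicolored closed walk in $\G$. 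By regularity and bipartiteness the subgraph of $\G$ in two fixed colors is a disjoint union of cycles, and each boundary loop runs through exactly one of these cycles once (using one side of each ribbon of that cycle); conversely each bicolored cycle is recovered from a unique boundary loop, uniqueness being forced by the property --- recorded in Appendix \ref{app_ribbons} --- that every $0$-cell of $X\hp 0$ lies on exactly one of the loops $\ell_i$. Since the $2$-bubbles of $\G$ are by definition the connected subgraphs of $\G$ in two fixed colors, which by regularity and bipartiteness are exactly these alternating bicolored cycles, this yields a bijection between boundary components and $2$-bubbles, i.e.\ $\mathrm{bc}(\G)=|\G\hp 2|$.

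Combining the two computations gives $\chi(\G)=|\G\hp 0|-|\G\hp 1|+|\G\hp 2|=\chi(\Sigma(\G))$. The main obstacle is the combinatorial identity of the second paragraph: one must carefully match the somewhat involved book-keeping of boundary components in Appendix \ref{app_ribbons} to the rotation system of Lemma \ref{thm:col_rib}, track the interaction with bipartiteness that forces the two-color alternation, and rule out a single boundary loop sweeping a bicolored cycle more than once --- the last point being exactly what the ``each $0$-cell lies on one loop'' condition prevents. (A quick consistency check: a valence-$3$ vertex carries $6$ zero-cells and lies on the three bicolored cycles through it, each of whose boundary loops uses two of those zero-cells, accounting for all six.)
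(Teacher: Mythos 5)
Your proof is correct, and the decisive combinatorial fact is the same one the paper relies on---namely that the boundary components of the ribbon structure from Lemma \ref{thm:col_rib} are exactly the bicolored cycles, i.e.\ the $2$-bubbles---but you package it quite differently. The paper collapses $\Sigma(\G)$ onto a coarser cell structure $Y(\G)$ whose cells are the vertices, the edges and the boundary components, and then identifies the \emph{entire} cellular chain complex of $Y(\G)$, boundary operators included, with the bubble chain complex $(C_\star(\G),\partial_\star)$; the equality of Euler characteristics is then a corollary of the equality of homologies. You instead invoke the Euler--Poincar\'e principle on the homological side and the explicit cell count of Appendix \ref{app_ribbons} on the geometric side, reducing the whole proposition to the single identity $\mathrm{bc}(\G)=|\G\hp 2|$, which you verify by tracing the boundary walk through the rotation system ($c\mapsto c+1$ at white vertices, $c\mapsto c-1$ at black ones) and using $2$-regularity of the bicolored subgraphs. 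Your route is more elementary---no deformation retraction, no matching of boundary operators---but it establishes only the numerical statement, whereas the paper's argument yields the stronger conclusion that the bubble homology of $\G$ computes $H_\star(\Sigma(\G))$ itself. The two points that genuinely need care, the color alternation forced by bipartiteness and the exclusion of a single loop wrapping a bicolored cycle more than once (which the ``every $0$-cell lies on exactly one $\ell_i$'' clause of the appendix settles), are both handled correctly in your write-up.
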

\begin{proof}
Let $\G$ be a $3$-colored graph. Because of Lemma \ref{thm:col_rib},
we can consider the cell complex $X(\G)$ embedded in the ribbon graph
realization $\Sigma(\G)$.  We transform $\Sigma(\G)$ into another
cell-complex $Y(\G)$ by a deformation retraction.
\begin{figure} \centering
\begin{subfigure}{0.49\textwidth}
\hspace{1cm}
\includegraphics[height=5.8cm]{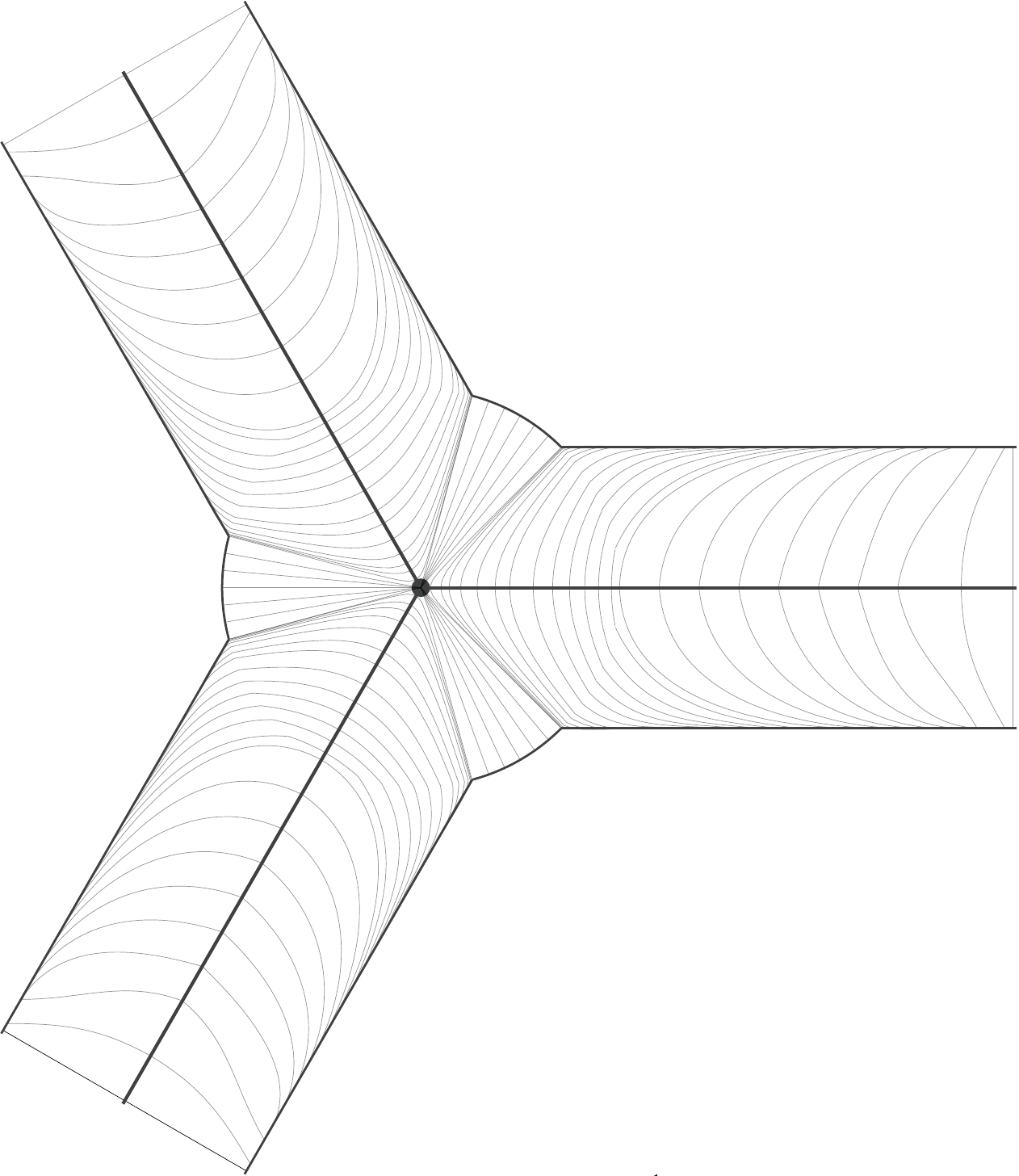}
\caption{\footnotesize The deformation retraction 
$\Sigma(\G)\to Y(\G)$ in local coordinates on a 
neighborhood of a vertex. The disk is 
continuously collapsed to a point-like vertex, 
and the ribbons sent to the ($1$-dimensional) half-edges
as shown.\label{fig:defo}
} 
\end{subfigure}
\hspace*{\fill} 
\begin{subfigure}{0.39\textwidth}
\hspace{1.5cm}
\includegraphics[width=3.2719932cm]{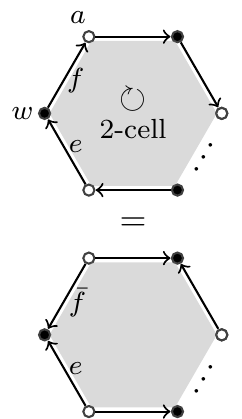}
\vspace{.5cm}
\caption{\footnotesize On the two different orientations of edges.  The
  first, determined by the cell-complex construction of ribbon graphs
  (above) and the lower one by the cell-attachment by bubble homology,
  see eq. \eqref{Rand_Kettenkomplex} with $p=2$.
} \label{fig:inversion}
\end{subfigure}
\caption{  \small On the proof of Proposition \ref{thm:eulers}.
\label{fig:inversion_edges}
}  \label{fig:para_prueba}
\end{figure}
We retract the ribbons' disks to thin edges and 
the vertices' disks to actual point-like vertices; 
see Fig. \hyperref[fig:defo]{\ref*{fig:inversion_edges} \textsc{(a)}}.
We end up with 
\begin{align*}
|0\mbox{-cells}| & = |\G\hp 0|, &&  
|1\mbox{-cells}|=|\G\hp 1|, &&  
|2\mbox{-cells}|  = \mbox{bc}(\G).
\end{align*}
We claim that the complex associated to $Y(\G)$ is the same
cell-complex obtained by bubble homology.  \par
For cells of dimensions $p=0,1$, the statement is trivially verified,
since a graph is, naturally, a cell-complex.  For dimension $2$, we
observe that each boundary component we attached $2$-cells to, was
formed by arcs on the disks, and segments on the next edge (determined
by the cyclic ordering at the vertex).  After the deformation
retraction, the arcs no longer exist.  Therefore the boundary
component is now composed by the $1$-cells determined by only the
edges as follows. Pick a boundary component and an arbitrary edge $e$
lying on it. Let $c$ be the color of $e$.  Then pick the black vertex
(name it $w$) $e$ is attached to.  The next edge $f$ has then color
$c-1$ (mod $3$), since $w$ has orientation $(321)$. By the same token,
$f$ is attached to a white vertex, say $a$, that determines the next
edge lying on the boundary component; this has color $c$ again, for
$a$ has been assigned the orientation $(123)$.  This yields then a
sequence of edges that forms connected path of edges of alternating
colors $\{c, c-1\}$. Thus, each one of the attached $2$-cells to the
boundary components of $X(\G)$ corresponds, after the deformation
retraction, to a bicolored connected path. These are, by definition,
$2$-bubbles. Therefore $ C_2(Y(\G))=C_2(\G).$\par

On the equivalence of the boundary operators:  The boundary
operator $\partial_2':C_2(Y(\G))\to C_1(Y(\G))$ is the sum of all
edges that lie on a boundary component, with `compatible orientation'
(that is, if the edges $e$ and $f$ meet at the vertex $w$, then $f$
and $e$ point in opposite directions, as seen from $w$). Since any
generator of $C_2(Y(\G))$ is of the form $\B^{(ij)}$, with $i<j$,
\begin{align*}
\partial'_2 (\B^{(ij)}) = \sum_{\mtr{all\, edges}, e_c\subset\B^{(ij)}} e_c  &= 
\sum_{e\in \G\hp1_i} e+\sum_{f\in \G\hp1_j}  f \\
&=\sum_{e\in \G\hp1_i} e-\sum_{f\in \G\hp1_j} \overline {\!f} =\partial_2(\B^{(ij)}).
\end{align*}
where $\overline {\!f}$ is the edge with the opposite orientation 
(cf. Fig. \hyperref[fig:inversion]{\ref*{fig:inversion_edges} \textsc{(b)}}).
\end{proof}

%% 
%%  = = = = = = = = = = = = = = = = =  
%% 
%%  = = = = = = = = = = = = = = = = =
%% 
%%  = = = = = = = = = = = = = = = = = 
%% 

\bibliography{surgery_CTM} 
\bibliographystyle{plain}

%% 
%%  = = = = = = = = = = = = = = = = =  
%% 
%%  = = = = = = = = = = = = = = = = =
%% 
%%  = = = = = = = = = = = = = = = = = 
%% 

\end{document}